\newtheorem{theorem}{Theorem}[section]
\newtheorem{proposition}[theorem]{Proposition}
\newtheorem{lemma}[theorem]{Lemma}
\theoremstyle{definition}
\newtheorem{remark}[theorem]{Remark}
\newtheorem{hypothesis}[theorem]{Hypothesis}
\numberwithin{equation}{section}
\numberwithin{theorem}{section}
\begin{document}
	
\title[A non-linear kinetic model of self-propelled particles]{A non-linear kinetic model of self-propelled particles with multiple equilibria}

\author[P.\ Butt\`a]{Paolo Butt\`a}
\address{Paolo Butt\`a\hfill\break \indent
	Dipartimento di Matematica, Sapienza Universit\`a di Roma 
	\hfill\break \indent
	P.le Aldo Moro 5, 00185 Roma, Italy}
\email{butta@mat.uniroma1.it}

\author[F. Flandoli]{Franco Flandoli}
\address{Franco Flandoli\hfill\break \indent 
	Scuola Normale Superiore of Pisa 
	\hfill\break \indent
	Piazza dei Cavalieri, 7	56126 Pisa, Italy}
\email{franco.flandoli@sns.it}

\author[M. Ottobre]{Michela Ottobre}
\address{Michela Ottobre\hfill\break \indent
	Heriot Watt University, Mathematics Department 
	\hfill\break \indent
	Edinburgh, EH14 4AS, UK}
\email{m.ottobre@hw.ac.uk}

\author[B. Zegarlinski]{Boguslaw Zegarlinski}
\address{Boguslaw Zegarlinski \hfill\break \indent
	Imperial College London 
	\hfill\break \indent
	South Kensington Campus, London SW7 2AZ}
\email{b.zegarlinski@imperial.ac.uk}

\begin{abstract}
	We introduce and analyse a continuum model for an interacting particle system of Vicsek type. The model is given by a non-linear kinetic partial differential equation (PDE) describing the time-evolution of the density $f_t$, in the single particle phase-space,  of a collection of interacting particles confined to move on the one-dimensional torus. The corresponding stochastic differential equation for the position and velocity of the particles is a conditional McKean-Vlasov type of evolution (conditional in the sense that the process depends on its own law through its own conditional expectation). In this paper, we study existence and uniqueness of the solution of the PDE in consideration. Challenges arise from the fact that the PDE is neither elliptic (the linear part is only {\em hypoelliptic}) nor in gradient form. Moreover, for some specific choices of the interaction function and for the simplified case in which the density profile does not depend on the spatial variable, we  show that the model exhibits multiple stationary states (corresponding to the particles forming a coordinated clockwise/anticlockwise rotational motion) and we study convergence to such states as well. Finally, we prove mean-field convergence of an appropriate $N$-particles system to the solution of our PDE:  more precisely, we show that the empirical measures of such a  particle system converge weakly, as $N \rightarrow \infty$, to the solution of the PDE. 
\end{abstract}

\subjclass{Primary: 35A01, 35B40, 82C22; Secondary: 82C40, 92D25}

\keywords{Nonlinear kinetic PDEs, self-organization, Vicsek model, scaling limit of interacting particle systems, non ergodic McKean-Vlasov process.}

\maketitle
\thispagestyle{empty}

\section{Introduction}
\label{sec:1}

The emergence of coordinated movements and self-organization in the collective motion of large groups of individuals is ubiquitous in nature; e.g., in biological systems (flocking of birds, swarming of bacteria, etc.) or in human dynamics (movement of crowds, communications networks, etc.). Regardless of the great differences in the specific type of individuals, the collective motion of large groups of organisms exhibits similarities suggesting the existence of general rules that underlie collective dynamics. During the last two decades, this subject has drawn the attention of physicists and applied mathematicians, who developed several mathematical models, trying to extract the essential features of these phenomena, a first step towards the build of a full theory. 

In particular, in the pioneering work \cite{VCBCS}, the authors introduce the notion of self-propelled particles (SPP), which consist of locally interacting particles with an intrinsic driving force (internal to the particles, as it happens in real organisms). The model in \cite{VCBCS}, commonly known as the Vicsek model,  is given in two dimension, but similar SPP models exhibiting  self-organization were also introduced in one and three dimensions \cite{CBV,CV}. An explanation of these results by means of continuum theories have been proposed in \cite{CBV,TT1,TT2}, where hydrodynamic equations of motion for the density and velocity fields were used. Concerning the direct connection with real systems, we  recall, e.g., the interesting work \cite{Parisi}.

After then, many other models of self-propelled particles with pairwise interactions have been shown to exhibit coordinated motion, both  microscopic particle models \cite{PE,LRC,CDFSTB,CKF,CS,CS1,LLE} or macroscopic (fluid-like) models in the continuum \cite{PE,TB,TBL,EVL}. More recently, mesoscopic models based on kinetic equations for the distribution function in the one-particle phase space have been considered, see, e.g., \cite{HT,COP}, which represent a bridge between the particle- and the  hydrodynamic- description. Interesting developments are also contained in the works \cite{Ewelina, BCD,  Gianfelice, Mot}. The subject is impressively vast and a comprehensive literature is out of reach, so here and in the following we only mention the works that, to the best of our knowledge, are closest to the model treated in the present paper. For more precise literature review we refer the reader to the review papers \cite{Review1, Review2}. 
 
As already noticed, self-organization phenomena may occur also in one dimensional systems. In particular, in \cite{CBV} it is demonstrated that spontaneous symmetry breaking and ordering take place in one dimension as well. This is shown through the numerical analysis of a SPP model in one dimension, and heuristically justified by means of a continuum theory describing this kind of systems. It is worthwhile to quote also the interesting research \cite{locust}, where the behaviour predicted by this one dimensional SPP model is confirmed in experiments with marching locusts.

A mathematically rigorous explanation of these one dimensional phenomena as well as their continuum description is a natural query. In this paper, we introduce and study a kinetic model for a system of self-propelled particles moving in the one dimensional torus, which can be thought of as a mean-field version of the particle system introduced in \cite{CBV}. Related models have been considered in \cite{Rel1, Rel2, Rel3, Rel4}.

To write the kinetic evolution equation in consideration, some preliminary definitions and notation are needed. Letting $\mathbb{T}=\mathbb{R}/\mathbb{Z}$ be the one dimensional torus of length one, the phase space distribution function is denoted by $f_t = f_t(x,v)$, where $(x,v)\in \mathbb{T}\times \mathbb{R}$ and $t\in \mathbb{R}$ is the time. As customary in kinetic theory, the function $f_t(x,v)$ represents the density of particles which are at time $t$ in phase-space point $(x,v)$. A key ingredient in the definition of the dynamics is given by a function $G:\mathbb{R} \to \mathbb{R}$ such that 
\begin{equation}
\label{herdingG}
G(u) = - G(-u)\,, \qquad \left\{\begin{array}{ll} G(u) > u & \textrm{ if } 0<u<1\,, \\ G(u) < u & \textrm{ if } u>1\,. \end{array}\right. 
\end{equation}
This function $G$ incorporates both the propulsion and friction forces, whose effect is to set the average velocity to a prescribed value; further comments on the role of the function $G$ will be given after equation \eqref{SDE}.  The typical choice in the literature \cite{CBV,CV} is
\begin{equation}
\label{linG}
G(u) = \left\{\begin{array}{ll} (u+1)/2 & \textrm{ for } u>0\,, \\ (u-1)/2 & \textrm{ for } u<0\,, \end{array}\right. 
\end{equation}
or its generalization,
\[
G_\beta(u) = \frac{u+\beta\,\textrm{sign}(u)}{1+\beta}\,, \qquad \beta>0\,.
\]
The advantage of this choice is that $G_\beta'(u) = 1/(1+\beta)$ for any $u\ne 0$, but the above functions have a jump at $u=0$, rendering them less amenable to analysis.  Therefore, we will consider  smooth versions of the above functions which retain the property \eqref{herdingG}, see Hypothesis \ref{standingassumptions} and Hypothesis \ref{asslongtime} for precise assumptions; the regularization does not alter the asymptotic behavior of the resulting dynamics (however, see Remark \ref{statpoints} on the matter). We also introduce  a real-valued function $\varphi\in C^\infty(\mathbb{T})$ with the properties,
\begin{equation}
\label{normphi}
\varphi(x)\geq\epsilon>0\,, \qquad \varphi(x) = \varphi(-x)\,, \quad \int_{\mathbb{T}}\!\mathrm{d} x \, \varphi(x) = 1\,.
\end{equation}
Such a function will describe the interaction among particles (see few lines below). 
Finally,  for any phase space distribution function $f=f(y,w)$ and observable $F=F(y,w)$ we define,
\[
\langle F \rangle_{f,\varphi} (x):= \int_{\mathbb{T}}\!\mathrm{d} y \int_{\mathbb{R}}\!\mathrm{d} w\, f(y,w)\,\varphi(x-y)\, F(y,w)\,.
\]

With this notation in place, the evolution of  the system is governed by the following kinetic equation for the density $f_t$, 
\begin{equation}
\label{nl}
\partial_t f_t(x,v)=-  v\,\partial_x f_t(x,v) - \partial_v\big\{\big[G(M(t,x)) - v\big] f_t(x,v) \big\} + \sigma\,\partial_{vv}f_t(x,v)\,,
\end{equation}
where $\sigma>0$ is a given parameter and
\begin{equation}
\label{Mnl}
M(t,x) :=  \frac{\langle w \rangle_{f_t,\varphi}}{\langle 1 \rangle_{f_t,\varphi}} = \frac{\int_{\mathbb{T}}\!\mathrm{d} y \int_{\mathbb{R}}\!\mathrm{d} w\, f_t(y,w)\,\varphi(x-y)\, w}{\int_{\mathbb{T}}\!\mathrm{d} y \int_{\mathbb{R}}\!\mathrm{d} w\, f_t(y,w)\,\varphi(x-y)}\,.
\end{equation}
To stress the dependence of $M$ on $f$ we should write  $M_{f_t}(t,x)$ in place of $M(t,x)$. We refrain from doing so to avoid cumbersome notations.  
Since the total mass is conserved, we can assume that 
\begin{equation}
\label{normf}
\int_{\mathbb{T}}\!\mathrm{d} y \int_{\mathbb{R}}\!\mathrm{d} w\, f_t(y,w) = 1\,,
\end{equation}
provided this is true at time $t=0$. 

To better understand the meaning of the dynamics \eqref{nl}, let us  comment on the role of  all the addends in \eqref{nl}, starting from the linear part. If $G \equiv 0$ then the equation reduces to a linear Fokker-Planck equation for the hypoelliptic Ornstein-Uhlenbeck process, 
namely 
\begin{equation}\label{compnl}
\partial_t f_t(x,v)=-  v\,\partial_x f_t(x,v) + \partial_v(v f_t(x,v)) + \sigma\,\partial_{vv}f_t(x,v)\,.
\end{equation}
The above and related  equations are well studied in the literature (see the works \cite{Villani, Herau, Helferr, OttPavlPravda}); equation \eqref{compnl}  describes the evolution of the density of the law of a particle the dynamics of which is subject to  transport, damping and noise (the addends on the right-hand side, respectively). More explicitly, Eq.~\eqref{compnl} is the Fokker-Plank equation for the following Langevin-type dynamics
\begin{equation}\label{lang}
\begin{cases}
\mathrm{d} x_t = v_t \mathrm{d} t\,, \\ 
\mathrm{d} v_t = - v_t \mathrm{d} t +  \sqrt{2 \sigma} \mathrm{d} W_t\,,
\end{cases}
\end{equation}
with $W_t$ the standard one-dimensional Brownian motion. The measure $Z^{-1} e^{-v^2/2\sigma}$, $Z$ being the normalization constant, is the unique invariant measure for the SDE \eqref{lang}.  Let us now come to explain the nature of the nonlinearity. To this end, it may be useful to point out that  the solution $f_t(x,v)$  of \eqref{nl}  can be interpreted as the joint law of the  McKean-Vlasov process $(x_t, v_t) \in \mathbb{T} \times \mathbb{R}$ satisfying the nonlinear SDE,
\begin{equation}
\label{SDE}
\begin{cases}
\mathrm{d} x_t = v_t \mathrm{d} t\,, \\ \mathrm{d} v_t = \big[ G(M(t,x_t)) - v_t\big] \mathrm{d} t +  \sqrt{2 \sigma} \mathrm{d} W_t\,,
\end{cases}
\end{equation}
with
\begin{equation}\label{condmean}
M(t,x) = \frac{\mathbb{E} [v_t'  \, \varphi(x-x_t')] }{\mathbb{E} \,  \varphi(x-x_t')}\,,
\end{equation}
where $(x_t',v_t')\in \mathbb{T} \times \mathbb{R}$ denotes a random variable with law equal to the joint law of $(x_t,v_t)$, i.e., with probability density $f_t$ (again, a better notation for $M(t,x)$ should be $M_{f_t}(t,x)$). We emphasize that, in view of \eqref{condmean}, the evolution \eqref{SDE} is non-linear in the sense of McKean, as the process depends on its own law through the expectation in \eqref{condmean}. Furthermore, from the expression \eqref{condmean} one can see that $M(t,x)$ is a sort of ``local average velocity'' around $x$, weighed with the auxiliary function $\varphi$. For example, take $\varphi$ to be the characteristic function of the interval $[-b,b]$ (this choice does not satisfy the assumptions \eqref{normphi}, but it is meaningful for expository purposes): in this case, $M(t,x)$ reduces to the  conditional expectation $\mathbb{E} (v_t' \vert x_t' \in [x-b, x+b])$ with $(x_t',v_t')$ distribuited as $(x_t,v_t)$. The role of the nonlinear force $G(M(t,x_t)) - v_t$  is then quite clear: it tends to increase (decrease, respectively) the velocity of $x_t$ if the latter is larger (smaller, respectively) of the local average velocity $M(t,x_t)$ around $x_t$. In other words, while the function $\varphi$ describes the spatial interaction, the function $G$ has the role of ``herding'' the particles towards a certain average velocity. This will be more clear after the results of Section \ref{sec:2},  see in particular equation \eqref{ODEm1}. 

In this paper we study various aspects of the evolution \eqref{nl}.  Section \ref{sec:3} is devoted to the study of the well-posedness of the non-trivial Cauchy problem associated with \eqref{nl}  and in  Section \ref{sec:4}  we prove that an appropriate $N$-particle system converges, in the continuum limit $N \rightarrow\infty$, to the solution of the PDE \eqref{nl} .  Section  \ref{sec:2} and Section \ref{sec:5} contain results on invariant measures and long time behaviour, for some particular choices of the interaction function $\varphi$. 

Let us come to describe the main results of this paper in more detail. In Section \ref{sec:3} we show global (in time) existence and uniqueness for the Cauchy problem associated with \eqref{nl}. We adopt here, with non-trivial modifications,  the approach taken in \cite{Degond} to study well-posedness of the Vlasov-Fokker-Planck equation. While the methods we use in this section are purely analytical, alternative techniques, probabilistic in flavour, have been used to study similar problems, see \cite{Bossy}. 

As already observed, Eq.~\eqref{nl} has been introduced as a mean-field counterpart of the discrete model in \cite{CBV}. To make more precise the connection with particle systems, in Section \ref{sec:4} we rigorously show how the  kinetic evolution considered here can be derived as the mean-field limit of a suitable model of self propelling-particles with additive Brownian noises. In this framework, we recall that the bulk behavior of interacting diffusions and Ornstein-Uhlenbeck processes has been already considered in the case of forces depending solely on particle position, see, e.g., \cite{BL,OV,T,V, {Gomes}}. Here, we consider the following  system of $N$ interacting particles: for every $i\in \{1, \dots, N\}$, the position and velocity of the $i$-th particle, denoted by $x_t^{i,N}$ and $v_t^{i,N}$, respectively, evolve according to the SDE
\begin{align}
\mathrm{d} {x}_t^{i,N} & = v_t^{i,N} \, \mathrm{d} t\,,  \label{parsys1}\\\
\mathrm{d} {v}_t^{i,N} & = - v_t^{i,N} \mathrm{d} t + 
G\left(\frac{\frac 1N \sum_{j=1}^N \varphi (x_t^{i,N} - x_t^{j,N}) v_t^{j,N}}{\frac 1N \sum_{j=1}^N \varphi (x_t^{i,N} - x_t^{j,N})}\right) \mathrm{d} t + \sqrt{2\sigma} \mathrm{d} W_t^{i}\,,\label{parsys2}
\end{align}
where the $W^i$'s are one-dimensional independent standard Brownian motions. (Because $G$ is assumed to be bounded and Lipschitz continuous, existence and uniqueness (for every $N$ fixed) of the strong solution of the above system follows from standard SDE theory.)  The main result of Section \ref{sec:4} can then be expressed as follows (for a precise statement see Theorem \ref{teo:convpartsys}): as $N\rightarrow\infty$,  the motion of each particle converges weakly to the solution of the non-linear diffusion \eqref{SDE}.  To put it differently, the empirical measure of the system, i.e., the measure $S_t^N(\mathrm{d} x, \mathrm{d} v) := \frac 1N \sum_{i=1}^N \delta_{(x_t^{i,N}, v_t^{i,N})}(\mathrm{d} x, \mathrm{d} v)$, converges to a measure with a density with respect to the Lebesgue measure; such a density  is precisely the solution of the PDE \eqref{nl}. We notice in passing that the method of proof adopted here produces an existence and uniqueness result in measure space for the evolution \eqref{nl}, see Proposition \ref{uniquenessmeassol} for precise statement.

Regarding the long time behaviour, in Section \ref{sec:2} we start by considering the simplified setting in which the solution does not depend on the spatial-variable. That is, we only consider space-homogeneous evolutions $f=f_t(v)$. In this case we give a complete description of the asymptotic behaviour: we show that  the system admits three space-homogeneous stationary densities, which are Gaussian distributions with same variance $\sigma$ and average velocity $0$ and $\pm 1$, respectively. To understand why this is the case, one should observe that the points $0$ and $\pm 1$ are precisely the only points such that $G(x)=x$ and reconcile such an observation with the above  discussion regarding the nature of the non-linear term of the PDE \eqref{nl}.    In this framework, the Gaussians $\mathcal{N}(\pm 1, \sigma)$ correspond to clockwise/anticlockwise  coordinated motion;  one could argue that such measures  may be  the only  physically relevant equilibria -- in the sense that,  had we considered  $G$ as in \eqref{linG}, the mean zero Gaussian would not be an equilibrium, as the function $\eqref{linG}$ does not intersect the diagonal at  the origin,   see Remark \ref{statpoints} and Remark {\ref{rem:critpointsS}}. Still under the simplifying assumption $f_t=f_t(v)$, we prove that if the initial average velocity is positive, i.e., $\int_{\mathbb{R}}\mathrm{d} v\,  v\, f_0(v)>0$, then  the solution of the PDE converges to the Gaussian $\mathcal{N}(1, \sigma)$; if the initial profile $f_0$ has negative average velocity, then the solution converges to the Gaussian with negative mean, $\mathcal{N}(-1, \sigma)$.  This implies that, regardless of the initial conditions, the system is driven towards a coherent motion (in the context of flocking models this phenomenon is sometimes called {\em unconditional flocking}).  Moreover we prove that in either cases decay to equilibrium is exponential. We prove this fact by using two different approaches: in Section \ref{sec:2.1} we close equations on the cumulants of the distribution (closing equations directly on the moments did not seem possible); in Section \ref{sec:2.2} we identify a suitable entropy functional for the system and employ an appropriate modification of the Bakry-Carrillo-MacCann-Toscani-Villani strategy to show exponentially fast  decay of such a functional. To summarize, in this simpler case we can determine all the equilibria of the system as well as their basin of attraction and the rate of decay. This is a rather ``lucky case". Indeed, for dynamics with multiple equilibria, there is at present no general theory to address the problem of convergence (the primary problem here being the description of the basin of attraction of each invariant measure) - as opposed to the ergodic framework, where several approaches have been developed \cite{Oneeq1, Villani, CMV, Toscani, Oneeq2}. However, we flag up the work \cite{Cass}, where the authors characterize a class of SDEs with multiple equilibria for which one can give a complete description of the basin of attraction of each stationary state. We also mention the case of a particular nonlinear Fokker-Planck equation, introduced in \cite{BCP} as a kinetic model for a one-dimensional granular medium, which is characterized by a unique, but non-Maxwellian, steady distribution, see \cite{BCCP}. 

The analysis of the long-time behavior  for the full equation \eqref{nl} appears indeed much challenging (not last, because the the dynamics \eqref{nl} is not in gradient form) and it will be the content of future work.  In particular, in this fully non-linear and non-space-homogeneous case \eqref{nl}, one can observe by direct calculation that the three space-homogeneous Gaussians $\mathcal{N}(0,\sigma)$ and $\mathcal{N}(\pm1, \sigma)$ are still equilibria for the system (one can also determine an {\em evolution system of measures} for such a dynamics, see Remark \ref{rem:2.2}); but we have been unable to prove, in such generality, that these are indeed the only invariant measures for the system.  However, in Section \ref{sec:5} we show that if the interaction function is a ``perturbation of the constant function", i.e., if $\varphi(x)=1+\lambda \psi(x)$, where $0<\lambda \ll 1$ and $\psi$ is a mean zero, bounded function (defined on the one-dimensional torus), then the above three Gaussians are still the only invariant measures of the dynamics. 

\subsection{Notation and standing assumptions}
\label{sec:1.1}

Throughout the paper, we make the following assumptions regarding the interaction function $\varphi$ and the ``herding function" $G$. 
\begin{hypothesis}
\label{standingassumptions} 
The function $\varphi: \mathbb{T} \rightarrow \mathbb{R}$ and $G: \mathbb{R} \rightarrow \mathbb{R}$ appearing in \eqref{nl}-\eqref{Mnl} satisfy the following
\begin{itemize}
\item[(1)] The function $\varphi$ is smooth and satisfies \eqref{normphi}, for some $\epsilon>0$. 
\item[(2)] The function $G$ is smooth and bounded, with bounded derivatives of any order (hence globally Lipschitz). 
\end{itemize}
\end{hypothesis}
\begin{hypothesis}
\label{asslongtime}
$G(x)$ satisfies \eqref{herdingG} and   $G(x)=x$ only at the points $x=-1, 0, 1$. 
\end{hypothesis}
\begin{remark}
\label{statpoints}
Some comments on the above assumptions. 
\begin{itemize}
\item[(i)] Hypothesis \ref{standingassumptions} is assumed to hold throughout and it is used to establish matters of well-posedness. Hypothesis \ref{asslongtime} is only needed when we want to make statements about the long-time behavior of the dynamics. For this reason, in Section \ref{sec:2} and Section \ref{sec:5} we work under both Hypothesis \ref{standingassumptions} and Hypothesis \ref{asslongtime}. The rest of the paper hinges solely on Hypothesis \ref{standingassumptions}. 
\item[(ii)] For physical purposes, it would be interesting to consider interaction functions $\varphi$ with compact support. We don't do this here to avoid further technicalities in the proofs but we reserve to drop this assumption in future work.
\item[(iii)] Clearly, one could pick the function $G$ to intersect the diagonal in an arbitrary number of points. The choice $x=0,1,-1$ is purely arbitrary. If $G$ intersects the diagonal elsewhere, say at $a>0$  (so that by antisymmetry one also has  $G(-a)=-a$), then also the Gaussian measures $\mathcal{N}(a, \sigma)$  and  $\mathcal{N}(-a, \sigma)$ will be invariant for  the  reduced space-homogeneous equation considered in Section \ref{sec:2}  as well as for the full dynamics    \eqref{nl}. This can be seen with calculations completely analogous to those presented at the beginning of Section \ref{sec:2}, see also comments after \eqref{ODEm1}. 
\end{itemize} 
\end{remark}
Throughout, we will interchangeably use the notation $f_t$ and $f(t)$ for time-dependent functions. Further notation will be somewhat local to individual sections and we introduce it when needed. 

\section{The space-homogeneous case}
\label{sec:2}

Here, we study the space-homogeneous solutions of Eq.~\eqref{nl}. We work under Hypothesis \ref{standingassumptions} and Hypothesis \ref{asslongtime}.  In view of \eqref{normphi}, \eqref{Mnl},  and \eqref{normf}, in the space-homogeneous case $f=f_t(v)$ the kinetic equation reads,
\begin{equation}
\label{p1}
\partial_t f_t = - G( \langle w \rangle_{f_t}) \partial_v f_t +  \partial_v (vf_t) +  \sigma\,\partial_{vv}f_t(v)\,,
\end{equation}
where $\langle w\rangle_f $ denotes the average of the distribution $f$, that is
\begin{equation}
\label{mf}
\langle w\rangle_f := \int_{\mathbb{R}}\!\mathrm{d} w\, f(w)\, w\,.
\end{equation}
Later on, see Eq.~\eqref{ODEm1}, we will observe that the mean $M_1(t):=\langle w \rangle_{f_t}$ solves a simple ODE. 
Existence and uniqueness of classical solutions of the Cauchy problem associated with \eqref{p1} with initial datum $f_0(x)$ such that $\left\vert f_0(x)\right\vert\le c \exp(a v^2)$, $ a,c>0$,  is shown in \cite{ArBes}. Existence and uniqueness for \eqref{p1} in Sobolev spaces follows after a change of variable: if $f_t(v)$ is the solution to \eqref{p1} then the function
\begin{equation}
\label{chofvar}
h_t(v) := f_t\big(v- \int_0^t\!\mathrm{d} s\, G( M_1(s))\big)
\end{equation}
solves
\[
\partial_t h_t(v) =  \partial_v(vh_t) + \sigma \partial_{vv}h_t\,,
\quad h_0(v)=f_0(v)\,,
\]
which is the Fokker-Planck equation for the Ornstein-Uhlenbeck process, whose well posedness is well established (even for measure-valued solutions, see, e.g., \cite[Chap.1]{Stroock}).\footnote{Furthermore, this equation can also be seen as a special case of the more general setting considered in Proposition \ref{propwellposlin}.} For such a process it is well known that if the following assumption holds,

\smallskip
{\bf [H]} {\em The initial datum $f_0$ is such that $f_0$, $\left\vert f_0\right\vert^2$ and $\left\vert \partial_vf_0 
\right\vert^2$ have finite moments of any order}
\smallskip

\noindent
 then the solution at time $t>0$ enjoys the same property and in the reminder of this section we shall work under such an assumption on the initial profile (see \cite[Theorem 2.29 and other results within that section]{Nee} or \cite{Bakry}).\footnote{This assumption is certainly not sharp but sufficient to our purposes. Here we are not interested in minimal assumptions but in a description of the dynamics. For sharper conditions see the cited references. } We  do not linger on matters of well posedness and we move forward to  characterize the stationary distributions (invariant measures) of \eqref{p1}.    

After observing that $G( \langle w \rangle_{f})$ does not depend on the velocity variable, the invariant measures are the normalised solutions to the equation 
\[
\partial_v\left[  (-G( \langle w \rangle_{f}) +v) f +  \sigma\,\partial_{v}f(v\right]) = 0\,.
\]
Clearly, solving the above amounts to  solving  the first order ODE,
\[
\big[G(\langle w\rangle_f) - v\big] f(v) = \sigma\,f'(v) + C\,,
\]
where $C$ is a generic constant. 
Integrable positive solutions to the above ODE exist only for $C=0$ and have the form,
\[
f(v) = \bar{C} \, \exp\left\{\frac{2G(\langle w\rangle_f)v - v^2}{2\sigma}\right\}\,.
\]
The normalization condition $\int_{\mathbb{R}}\!\mathrm{d} w\, f(w) =1$ gives $\bar{C} = \exp (G^2(\langle w\rangle_f))/(\sqrt{2\pi\sigma})$. From  \eqref{mf}, one  also obtains the condition $G(\langle w\rangle_f) = \langle w\rangle_f$ which implies, by our choice of $G$, $\langle w\rangle_f = 0,\pm 1$. In conclusion, Eq.~\eqref{p1} admits three stationary solutions, namely the following three Gaussian densities,
\begin{equation}
\label{sm}
\mu_0(v) = \frac{1}{\sqrt{2\pi\sigma}}\,\exp\left[-\frac{v^2}{2\sigma}\right]\,, \qquad \mu_\pm(v) = \frac{1}{\sqrt{2\pi\sigma}}\,\exp\left[-\frac{(v\mp 1)^2}{2\sigma}\right]\,.
\end{equation}

\begin{remark} 
\label{rem:2.1}
In particular, the Gaussian densities $\mu_0$ and $\mu_\pm$ are the only space-homogeneous stationary densities of Eq.~\eqref{nl}. 
\end{remark}

The asymptotic behaviour of the time-dependent distributions can be fully characterized, we show below two different approaches.

\subsection{Cumulants}
\label{sec:2.1}

Working under the assumption {\bf [H]}, we set, for any integer $n \ge 0$, 
\[
M_n(t) := \langle w^n \rangle_{f_t} := \int_{\mathbb{R}}\!\mathrm{d} w\, w^n f_t(w)\,.
\]
(We will favour the use of the notation $M_n(t)$ when we wish to stress the time-dependence and the use of the notation $\langle w^n \rangle_{f_t}$ when we wish to stress the dependence on the function $f_t$.)
By an explicit computation we have
\begin{align}
\label{ODEm0} & \dot M_0 = 0\,, \\ 
\label {ODEm1} & \dot  M_1  = G(M_1)-M_1 \,, \\ \label{ODEm2} & \dot  M_2 = 2 [G(M_1)M_1 -M_2 + \sigma ]\,. 
\end{align}
The first equation is the conservation of total mass (so that $M_0(t) = 1$ for all $t\geq 0$) and by the last equation the variance converges to $\sigma$.  Eq.~\eqref{ODEm1} is instead a statement about the average velocity:   if the average velocity is positive (negative, respectively) at time $t=0$, then it converges to $+1$ ($-1$, respectively). Notice that in order for this fact to be true one just needs $G(M_1)-M_1$ to be positive in $(0,1)$ and $(-\infty, -1)$, negative in $(-1, 0)$ and $(1, \infty)$ (the antisymmetry of $G$ or other detailed properties of such a function don't really matter to this end). If the  function $G$ had  more than three fixed points, then, for every   $a\in \mathbb{R}$ such that $G(a)=a$, one would have an additional invariant measure, namely the Gaussian $\mathcal{N}( a, \sigma)$ (and, by antisymmetry of $G$, the measure $\mathcal{N}(-a, \sigma)$ as well). Whether the mean velocity ever converges to $a$ (or $-a$) would depend on the sign of $G(x)-x$ for $x>a$ and for $x>a$. 

For moments of any order, after multiplying \eqref{p1} by $v^n$ and integrating by parts,\footnote{In view of assumption {\bf [H]} all the needed integrations by parts are actually justified and give zero boundary terms.} one finds the recursive relation,
\begin{equation}
\label{eqn:recursionmoments}
\begin{split}
& \dot M_n = nG(M_1) M_{n-1}-n M_n + \sigma n (n-1) M_{n-2}\,, \quad  n \ge 1\,, \\ &  \text{(where $M_{-1}(t):=M_0(t)=1$)\,.}
\end{split}
\end{equation}
Let $C_n(t)$ denote the  $n$-th order cumulant of $f_t(v)$, namely
\[
C_n(t) = \frac{\mathrm{d}^n}{\mathrm{d} \lambda^n} \log \langle e^{\lambda w}\rangle_{f_t} \bigg|_{\lambda=0}\,.
\]
Then $C_0(t) = 0$, $C_1(t)=M_1(t)$, and $C_2(t)=M_2(t)-M_1^2(t)$, so that
\begin{align}
\label{ODEc1}
& \dot C_1 = G(C_1) - C_1\,, \\ \label{ODEc2} & \dot C_2  = -2 C_2+2 \sigma\,.
\end{align}
We also recall the following relation between moments and cumulants \cite{Smith},
\begin{equation}
\label{eqn:commomen}
C_{n}=M_{n}-\sum_{j=1}^{n-1} \binom{n-1}{j-1}M_j C_{n-j}\,.
\end{equation}

Convergence to the stationary solutions for the dynamics \eqref{p1} is now a consequence of the following lemma, whose proof is postponed to the end of this subsection.  

\begin{lemma}
\label{lemma:ODEforcn}
Assume {\bf [H]} holds. With the notation introduced so far,  we then have
\begin{equation}
\label{eqn:dtCn=-Cn}
\dot C_n(t)= -n C_n(t) \quad \forall\, n \ge   3\,. 
\end{equation}
\end{lemma}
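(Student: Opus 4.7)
The plan is to repackage the moment recursion \eqref{eqn:recursionmoments} into a single equation for the moment generating function $\Phi_t(\lambda):=\langle e^{\lambda w}\rangle_{f_t}$ and then pass to the cumulant generating function $K_t(\lambda):=\log\Phi_t(\lambda)$, whose Taylor coefficients are precisely the $C_n(t)/n!$. The identity \eqref{eqn:dtCn=-Cn} will then drop out by matching coefficients of $\lambda^n$.

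First I would check that assumption \textbf{[H]}, combined with the change of variable \eqref{chofvar} that reduces $f_t$ to an Ornstein-Uhlenbeck density, ensures $f_t$ has Gaussian-type tails locally uniformly in $t$, so that $\Phi_t(\lambda)$ is smooth in $\lambda$ on a neighborhood of $0$, uniformly on compact time intervals. This legitimises the term-by-term differentiation and summation below.

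Next, multiplying \eqref{eqn:recursionmoments} by $\lambda^n/n!$ and summing over $n\ge 1$, a reindexing of each of the three terms gives
$\dot\Phi_t(\lambda)=\bigl[G(M_1(t))\lambda+\sigma\lambda^2\bigr]\Phi_t(\lambda)-\lambda\,\partial_\lambda\Phi_t(\lambda)$.
Dividing by $\Phi_t$ produces the corresponding equation for $K_t$, namely
$\dot K_t(\lambda)=G(M_1(t))\lambda+\sigma\lambda^2-\lambda\,\partial_\lambda K_t(\lambda)$.
Inserting $K_t(\lambda)=\sum_{n\ge 1}C_n(t)\lambda^n/n!$ and matching coefficients of $\lambda^n$, one recovers \eqref{ODEc1} from the coefficient of $\lambda$ and \eqref{ODEc2} from the coefficient of $\lambda^2$, while for every $n\ge 3$ the inhomogeneous terms $G(M_1(t))\lambda$ and $\sigma\lambda^2$ contribute nothing, leaving $\dot C_n(t)=-nC_n(t)$, i.e.\ \eqref{eqn:dtCn=-Cn}.

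The main obstacle is purely technical: justifying the exchange of differentiation and summation in the generating-function manipulation, which requires uniform-in-time control of $\Phi_t(\lambda)$ on a neighborhood of $0$. A clean way to sidestep this is to invoke the change of variables \eqref{chofvar} directly: since $h_t$ is the density of the standard OU process, its cumulants of order $\ge 2$ satisfy the elementary linear ODEs $\dot{\tilde C}_n=-n\tilde C_n$ for $n\ge 3$ (and $\dot{\tilde C}_2=-2\tilde C_2+2\sigma$), and since a deterministic time-dependent translation does not affect cumulants of order $\ge 2$, we have $C_n(t)=\tilde C_n(t)$ for $n\ge 2$, from which \eqref{eqn:dtCn=-Cn} follows at once.
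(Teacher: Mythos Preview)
Your proposal is correct and offers two routes that are both genuinely different from the paper's argument. The paper proceeds by a direct induction on $n$: it differentiates the moment--cumulant relation \eqref{eqn:commomen}, substitutes the recursion \eqref{eqn:recursionmoments} for each $\dot M_j$, applies the inductive hypothesis $\dot C_{n-j}=-(n-j)C_{n-j}$ for $j\le n-3$ together with \eqref{ODEc1}--\eqref{ODEc2}, and then collapses the resulting sums via binomial identities back into instances of \eqref{eqn:commomen}. This is purely combinatorial and never leaves the level of individual moments and cumulants.

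Your generating-function route packages the same recursion into a single linear PDE for $K_t(\lambda)$, from which \eqref{eqn:dtCn=-Cn} is read off by matching coefficients; this is shorter and more transparent, and in fact the convergence worry you raise can be bypassed entirely by working with $\Phi_t$ and $K_t=\log\Phi_t$ as \emph{formal} power series in $\lambda$ (the logarithm and the chain rule $\dot K_t=\dot\Phi_t/\Phi_t$ are well defined formally since $\Phi_t(0)=1$, and assumption \textbf{[H]} guarantees all coefficients are finite and differentiable in $t$). Your second route --- observing that the underlying process is a deterministic time-dependent translation of an Ornstein--Uhlenbeck process, so that cumulants of order $\ge 2$ coincide with those of $e^{-t}Y_0+\text{(Gaussian)}$ and hence $C_n(t)=e^{-nt}C_n(0)$ for $n\ge 3$ --- is the most economical of all and makes the structural reason for the identity completely transparent. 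The paper's inductive proof has the advantage of being self-contained at the moment/cumulant level and not relying on the probabilistic representation, but your arguments are both cleaner.
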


\begin{proposition}
\label{cor:2.1}
Let Hypothesis \ref{standingassumptions} and Hypothesis \ref{asslongtime} hold. Let $f_t$ be the solution of \eqref{p1}. If the initial datum $f_0$ has positive mean, i.e., $M_1(0)>0$, and satisfies {\bf [H]}, then $f_t$ converges to the stationary state $\mu_+$ as $t \to \infty$. Analogously, if the initial datum has negative (zero, respectively) mean, then the solution of \eqref{p1} converges  to $\mu_-$ ($\mu_0$, respectively).
\end{proposition}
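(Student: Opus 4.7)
The plan is to reduce the asymptotic analysis of $f_t$ to that of its cumulants and moments, using the (effectively decoupled) cumulant equations \eqref{ODEc1}, \eqref{ODEc2}, and \eqref{eqn:dtCn=-Cn}, and then to upgrade convergence of all moments to convergence of $f_t$ via the method of moments, exploiting the fact that the three candidate limits $\mu_0, \mu_\pm$ are Gaussians, hence uniquely determined by their moments.

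First, I would analyze the scalar ODE \eqref{ODEm1} for $M_1 = C_1$. By Hypothesis \ref{asslongtime} the equilibria of $\dot M_1 = G(M_1) - M_1$ are exactly $\{-1, 0, 1\}$, and the sign conditions in \eqref{herdingG} together with the antisymmetry of $G$ make $\pm 1$ asymptotically stable with basins of attraction $(0, \infty)$ and $(-\infty, 0)$ respectively, while $0$ is an unstable equilibrium. The smoothness of $G$ (Hypothesis \ref{standingassumptions}) gives uniqueness of solutions for this ODE, so $M_1(0) = 0$ forces $M_1(t) \equiv 0$, and otherwise $M_1(t) \to \mathrm{sgn}(M_1(0)) \in \{-1, +1\}$ as $t \to \infty$.

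Next, \eqref{ODEc2} integrates explicitly to give $C_2(t) = \sigma + (C_2(0) - \sigma) e^{-2t} \to \sigma$, and Lemma \ref{lemma:ODEforcn} yields $C_n(t) = C_n(0) e^{-nt} \to 0$ for every $n \ge 3$. Thus all cumulants of $f_t$ converge to those of the Gaussian $\mu_\ast \in \{\mu_0, \mu_+, \mu_-\}$ with mean $\lim_{t\to\infty} M_1(t)$ and variance $\sigma$. I would then feed this into the triangular identity \eqref{eqn:commomen} and argue by induction on $n$ to conclude $M_n(t) \to \int_{\mathbb{R}} w^n \mu_\ast(w)\,\mathrm{d} w$ for every $n \ge 0$; assumption {\bf [H]}, propagated in time via the change of variables \eqref{chofvar} and the classical Ornstein--Uhlenbeck estimates, guarantees that all moments are finite at every $t$, so the induction is legitimate.

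The last step is to promote moment convergence to convergence of $f_t$ to $\mu_\ast$. Since $\mu_\ast$ is Gaussian, it is uniquely characterised by its moments, and the uniform-in-$t$ control of $M_2(t)$ obtained in the previous step gives tightness of $\{f_t\}_{t\ge 0}$; the classical method of moments then yields $f_t \rightharpoonup \mu_\ast$ weakly as $t \to \infty$, and combining this with the identification of $\lim_{t\to\infty} M_1(t)$ from step one gives the three cases in the statement. The main obstacle I anticipate is not in any individual step (each is elementary given the cumulant machinery) but in ensuring that the moment-induction and the tightness argument go through uniformly in time with the integrability provided by {\bf [H]}; this is the place where the reduction \eqref{chofvar} to the Ornstein--Uhlenbeck semigroup, together with its well-known moment bounds, is essential.
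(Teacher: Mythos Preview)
Your proposal is correct and follows essentially the same route as the paper: use the cumulant equations \eqref{ODEc1}, \eqref{ODEc2}, and Lemma~\ref{lemma:ODEforcn} to show that all cumulants converge to those of the appropriate Gaussian, and then identify the limit via the sign analysis of \eqref{ODEm1}. The only difference is that you spell out the passage from cumulant convergence to convergence of $f_t$ via the moment recursion \eqref{eqn:commomen}, tightness, and the method of moments, whereas the paper compresses this into the remark that a distribution with vanishing cumulants of order $\ge 3$ is Gaussian; your version is slightly more explicit but not a different argument.
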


\begin{proof}[Proof of Proposition \ref{cor:2.1}]
By lemma \ref{lemma:ODEforcn} the cumulants of order $n \ge 3$ of the density $f_t$  tend to zero as $t \to \infty$. Therefore, the solution to the space-homogeneous Eq.~\eqref{p1} converges to a Gaussian density (it is a standard fact that the only distributions with vanishing cumulants of order $n\geq 3$ is the Gaussian distribution). Equations \eqref{ODEm1} and \eqref{ODEm2} describe the evolution of the mean and the variance of the solution of \eqref{p1}. As already observed, by \eqref{ODEm2} the limiting variance is $\sigma$. Because of our assumptions on $G$, by \eqref{ODEm1} the limiting mean is $\pm 1$ or zero if $M_1(0)$ is positive, negative or zero, respectively.  
\end{proof}

\begin{remark}[On the non space homogeneous dynamics \eqref{nl}]
\label{rem:2.2}
Another elementary consequence of Lem\-ma \ref{lemma:ODEforcn} together with \eqref{ODEc1} and \eqref{ODEc2} is the following. Suppose the initial datum for the dynamics \eqref{p1} is a Gaussian measure with given mean $M_0$ and variance $\sigma$. Then the law of the process at time $t$ is still Gaussian, with variance $\sigma$ and mean $M_1(s)$, where $M_1(s)$ is the solution to \eqref{ODEc1} with initial datum $M_0$. Otherwise stated, for each $M_0 \in \mathbb{R}$, the family of measures $\{\nu_t \sim \mathcal{N}(M_1(t), \sigma)\}_{t\geq 0}$ is an \textit{evolution system} of measures, see \cite{DPR, Lunardi}. 

Clearly, these families of measures constitute an evolution system even for the non-homogeneous dynamics \eqref{nl}. More generally, we can consider densities  $f_t(x,v)$ of the form, 
\begin{equation}
\label{ftpluginnl}
\begin{split}
& f_t(x,v) = \mathcal{Z}_t^{-1} \exp\bigg(-\frac{(v-A(t,x))^2}{2 B(t,x)} \bigg)\,, \quad 
\mathcal{Z}_t := \int_{\mathbb{T}}\! \mathrm{d} x\, \sqrt{2 \pi B(t,x)} \,, \\ & \mbox{ and } \,\, B(t,x)>0\,,
\end{split}
\end{equation}
and look for the class of (regular enough) functions $A=A(t,x)$ and $B=B(t,x)$ such that the density $f_t$ is a solution to \eqref{nl}. By substituting in \eqref{nl} one obtains a long expression, which can be rearranged to be a polynomial in $(v-A)$. By comparing the coefficients with equal power, one obtains the following system of constraints,
\begin{align}
&\partial_x B =0\,, \label{maybe1}\\
&\partial_t B + 2B \partial_x A + A \partial_xB-2\sigma+ 2B =0\,, \label{maybe2} \\
& \partial_tA + AB \partial_xA-G(M)+A=0\,, \label{maybe3}\\
&-\frac{\int_{\mathbb{T}} \mathrm{d} x\, B^{-1/2} \partial_t B}{2 \int_{\mathbb{T}} \mathrm{d} x\, \sqrt B}+ \frac{\sigma}{B}-1=0\,. \label{maybe4}
\end{align}
By \eqref{maybe1} and \eqref{maybe4},
\begin{equation}
\label{ltbb}
\partial_t B=2(\sigma-B) \quad \Rightarrow \quad 
B_t = e^{-2t}B_0+ \sigma (1-\mathrm{e}^{-2t}) \stackrel{t\rightarrow\infty}{\longrightarrow} \sigma \,.
\end{equation}
Substituting the above into \eqref{maybe2}, one gets $\partial_x A=0$. Therefore, $A=A(t)$ doesn't depend on $x$, which implies that $M(t,x)=A(t)$. With this observation, \eqref{maybe3} coincides with \eqref{ODEm1} and therefore $A(t) = M_1(t)$. In summary, densities of the form \eqref{ftpluginnl} can be a solution of the non-linear equation \eqref{nl} only if they are Gaussians of the form $ \mathcal{N}(M_1(t), B_t)$ with $B_t$ as in \eqref{ltbb} (which coincides with the evolution system $\nu_t$ if $B_0 = \sigma$).
\end{remark}

\begin{proof}[Proof of Lemma \ref{lemma:ODEforcn}]
The proof is by induction on $n$. The inductive basis, i.e., the case $n=3$, can be done by direct calculation. Assuming \eqref{eqn:dtCn=-Cn} holds for every cumulant up to order $n-1$, we want to show that the statement holds for $n$. By \eqref{eqn:commomen}, for any $n>3$,
\begin{align*}
& \dot C_n  = \dot M_n - \sum_{j=1}^{n-1} \binom{n-1}{j} \big(\dot M_j C_{n-j} + M_j \dot C_{n-j} \big)\\
&\,\,\,  = \dot M_n -  G(M_1) M_{n-1} - 2 \sigma (n-1) M_{n-2} - \sum_{j=1}^{n-1} \binom{n-1}{j} \big(\dot M_j - (n-j) M_j \big) C_{n-j}\,, 
\end{align*}
where we used the inductive assumption for $j\le n-3$ and \eqref{ODEc1}, \eqref{ODEc2} for $j=n-2, n-3$. Then, in view of \eqref{eqn:recursionmoments}, 
\begin{align*}
\dot C_n & = (n-1) G(M_1) M_{n-1}-n M_n + \sigma (n-1)(n-2) M_{n-2} \\
& \quad - \sum_{j=1}^{n-1} \binom{n-1}{j} \big(j G(M_1) M_{j-1} +\sigma j(j-1) M_{j-2} - nM_j \big) C_{n-j}  \\
& =  \, -n \bigg[ M_n  - \sum_{j=1}^{n-1} \binom{n-1}{j} M_j C_{n-j} \bigg] \\ & \quad + (n-1) G(M_1) \bigg[M_{n-1} -  \sum_{j=1}^{n-1} \binom{n-2}{j-1} M_{j-1}C_{n-j} \bigg] \\ 
& \quad + \sigma (n-1)(n-2) \bigg[ M_{n-2} - \sum_{j=2}^{n-1} \binom{n-3}{j-2} M_{j-2}C_{n-j}\bigg]\,,
\end{align*}
having used the identities, $j \binom{n-1}{j}  = (n-1) \binom{n-2}{j-1}$ and $j (j-1)\binom{n-1}{j} = (n-1)(n-2) \binom{n-3}{j-2}$.

It remains to notice that, again by \eqref{eqn:commomen}, the expression inside the first square brackets on the right-hand side is equal to $C_n$, while those inside the last two square brackets vanish. The lemma is thus proved. 
\end{proof}

\begin{remark}
\label{rem:2.3}
We are using the properties of $G$ only to prove convergence of the first moments. The behaviour of  the higher cumulants is independent  of the choice of $G$. 
\end{remark}

\subsection{Liapunov function and its rate of decay}
\label{sec:2.2}

If $f$ is a probability density with finite variance, we let
\begin{equation}
\label{Liapunov}
S(f) := \int_{\mathbb{R}}\!\mathrm{d} v \left[f(v) \log f(v) + \frac{v^2}{2\sigma} f(v) \right] + \frac{1}{\sigma} V(\langle w \rangle_f)\,,
\end{equation}
where $V(u)$ is the opposite of an antiderivative of the function $G(u)$, i.e., $V'(u)=-G(u)$.  We claim that the functional  $S(\cdot)$ is a Liapunov functional for the dynamics \eqref{p1}.  That is,  $S(\cdot)$ is bounded below and  $\frac{\mathrm{d} S(f_t)}{\mathrm{d} t} \leq 0$. Let us start by proving that the functional is   bounded from below; to this end, observe that the following inequality holds:  
\[
S(f) \ge \int_{\mathbb{R}}\!\mathrm{d} v \left[f(v) \log f(v) + \frac{v^2}{4\sigma} f(v) \right] + \frac 1{4\sigma L^2}\langle w \rangle_f^2 + \frac 1{\sigma L} V(\langle w \rangle_f)\, .
\]
Since $G$ is bounded, $V(u)/u^2 \to 0$ as $|u|\to\infty$, whence the sum of the last two terms in the right-hand side is bounded from below. As for the sum of the first two terms,  this is bounded below as well. Indeed, let 
$\rho(v)$ be the probability  measure $\rho(v)=Z (1+v^2)^{-1}$ (where $Z$ is a normalization constant). Then
$$
\int_{\mathbb{R}}\!\mathrm{d} v \left[f(v) \log f(v) + \frac{v^2}{4\sigma} f(v) \right]  = 
\int_{\mathbb{R}}\!\mathrm{d} v \left[f \log\left( f / \rho\right) + f \log \rho + f v^2\right]. 
$$
The first addend on the right-hand side is non-negative (by Pinsker's inequality). The sum of the last two addends is positive as well, as $v^2+ \log \rho = v^2 - \log (1+v^2) \geq 0 $ for every $v \in \mathbb{R}$.  

Let us now come to  compute the time derivative of $S$: 
\begin{eqnarray}
\frac{\mathrm{d}}{\mathrm{d} t} S(f_t) & = & \int_{\mathbb{R}}\!\mathrm{d} v \left[1+\frac{v^2}{2\sigma} + \log f_t(v)\right] \partial_t f_t(v) - \frac 1 \sigma  \, G(\langle w \rangle_{f_t}) \,\frac{\mathrm{d}}{\mathrm{d} t} \langle w\rangle_{f_t} \nonumber\\ 
& = &  \int_{\mathbb{R}}\!\mathrm{d} v\, \frac{1}{\sigma f_t(v)} \Big[vf_t(v)+\sigma\partial_v f_t(v)\Big] \Big[G(\langle w \rangle_{f_t})f_t(v) - vf_t(v) -\sigma\partial_v f_t(v)\Big] \nonumber\\ 
&&  ~~ -  \frac 1 \sigma G(\langle w\rangle_{f_t})  \Big[ G(\langle w\rangle_{f_t}) - \langle w\rangle_{f_t}\Big] \nonumber \\ & = &  - \int_{\mathbb{R}}\!\mathrm{d} v\, \frac{1}{\sigma f_t(v)}\Big[G(\langle w \rangle_{f_t})f_t(v) - vf_t(v) -\sigma\partial_v f_t(v)\Big]^2 \le 0\,. \label{derls}
\end{eqnarray}

\begin{remark}\label{rem:critpointsS}\textup{ Analogously to the previous computations, one can formally  show  that the Gaussians $\mu_0(v)$ and $\mu_\pm(v)$ are the unique critical points of the functional $S$ under the constraint $\int_{\mathbb{R}}\!\mathrm{d} v\, f(v) = 1$; so, when $t \rightarrow\infty$, $S(f_t)$ can only converge towards $S(f_{\infty})$, with $f_{\infty}$ being one of such Gaussians.   We already know that  $f_t$ decays towards $\mu_+$ ($\mu_-, \mu_0$, respectively) when the initial datum $f_0$ is such that $M_1(0)>0$ ($M_1(0)<0, M_1(0)=0$, respectively). In what follows we will only focus on studying the case in which $\langle w\rangle_{f_0}=M_1(0)\neq 0$. This is because if $M_1(0)=0$ then $M_1(t)=0$ for every $t\geq 0$ (from \eqref{ODEm1}); because $G(0)=0$, in this case the process reduces to the simple Orstein-Uhlenbeck process and the entropy functional simplifies to the classic form
$$
S(f) := \int_{\mathbb{R}}\!\mathrm{d} v \left[f(v) \log f(v) + \frac{v^2}{2\sigma} f(v) \right], 
$$ 
which is well studied in the literature, see \cite{CMV,Toscani} and \cite{Bakry}.  Furthermore, observe that  the equilibrium $\mu_0$ is somewhat ``unphysical": had we chosen $G$ to be as in \eqref{linG}, $\mu_0$ would not be  an invariant measure at all.
}
\end{remark}

Let us now come  to study the \textit{rate} of decay of the functional $S$. To this end, we use the by now well established Bakry-Carrillo-MacCann-Toscani-Villani strategy \cite{CMV,Toscani}. We first give an outline of how to adapt this approach to our context and then state and prove the main result of this section,  Proposition \ref{Propexpdec} below. 
To explain how we modify the Bakry-Carrillo-MacCann-Toscani-Villani strategy \cite{CMV,Toscani}, set 
$$
D_S(f_t) := - \frac{\mathrm{d}}{\mathrm{d} t} S(f_t) \,.
$$
The functional $D_S(f_t)$ is often referred to as the \textit{entropy production functional}. 
By \eqref{derls} we then  have
\begin{equation}
\label{defD}
D_S(f_t) = \int_{\mathbb{R}}\!\mathrm{d} v\,  \frac{1}{\sigma f_t}  \big[ \sigma \partial_v f_t + (v-G(M_1(t))) f_t  \big]^2\,. 
\end{equation}
Suppose there exist constants $c,\gamma,K >0$ (possibly depending on $f_0$) such that 
\begin{equation}
\label{entrprodgronw}
\frac{\mathrm{d}}{\mathrm{d} t} D_S(f_t) \le - c D_S(f_t) + K \mathrm{e}^{-\gamma t} \,.
\end{equation}
Then, integrating the above inequality on the interval $(t,s)$, one gets,
\[
D_S(f_s) -D_S(f_t) \le -c \int_t^s\!\mathrm{d} u\, D_S(f_u) + K \gamma^{-1} \big[ \mathrm{e}^{-\gamma t} -\mathrm{e}^{-\gamma s}\big] \,.  
\]
When $s$ tends to infinity, $D_S(f_s)$ tends to zero (this can be deduced from \eqref{entrprodgronw}, with calculations analogous to those leading to \eqref{expconv} below); so that, letting $s\to \infty$ in the above, one gets
\[
\begin{split}
-D_S(f_t) -  K\gamma^{-1}  \mathrm{e}^{-\gamma t} & \le - c \int_t^\infty\! \mathrm{d} u\,  D_S(f_u) = c \int_t^\infty\!\mathrm{d} u\, \frac{\mathrm{d}}{\mathrm{d} u} S(f_u) \\ & = c  [S(f_\infty) - S(f_t)] = -c  S(f_t | f_{\infty})\,,
\end{split}
\]
having set $S(f \vert f_\infty) := S(f) -S(f_\infty)$, where $f_\infty$ is intended to be equal to $\mu_{\pm}$ when $M_1(0)\gtrless 0$, respectively (see Remark \ref{rem:critpointsS}). The above then gives,
\[
\frac{\mathrm{d}}{\mathrm{d} t} S(f_t | f_\infty) = \frac{\mathrm{d}}{\mathrm{d} t} S(f_t)  = -D_S(f_t) \le -c S(f_t \vert f_{\infty}) + K\gamma^{-1}  \mathrm{e}^{-\gamma t}\,,
\]
hence exponential convergence follows, 
\begin{equation}
\label{expconv}
S(f_t | f_\infty) \le \mathrm{e}^{-ct} S(f_0 | f_{\infty}) + \begin{cases} K\gamma^{-1} |c-\gamma|^{-1}\, \mathrm{e}^{-(c\wedge\gamma) t} & \text{if } \gamma \ne c\,, \\ K c^{-1} t \, \mathrm{e}^{-ct} & \text{if } \gamma =c\,. \end{cases} 
\end{equation}
It is therefore clear that, in order to prove exponential decay of the relative entropy $S(f_t \vert f_{\infty})$, we only need to show the inequality \eqref{entrprodgronw}.  This is the purpose of the following proposition. As a premise, notice that under Hypothesis \ref{asslongtime} we have $G'(1)<1$ (and, because $G$ is an odd function, $G'$ is an even function, hence $G'(-1)=G'(1)$). 

\begin{proposition}
\label{Propexpdec}
Let Hypothesis \ref{standingassumptions} and Hypothesis \ref{asslongtime} hold. Then, for any $\gamma < 2-2G'(1)$, the inequality \eqref{entrprodgronw} holds with $c=2$ and a suitable $K>0$ depending on $G$, $M_1(0)$, and $\gamma$. Therefore, by \eqref{expconv}, if $M_1(0)\gtrless 0$ then the relative entropy $S(f_t \vert f_{\infty})$ decays exponentially fast with rate $2\wedge \gamma$.
\end{proposition}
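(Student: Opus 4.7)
The strategy is a Bakry--Emery-type computation modified to accommodate a moving reference Gaussian. Setting $\bar v(t) := G(M_1(t))$ and $\mu_{\bar v(t)}(v) := (2\pi\sigma)^{-1/2}\exp(-(v-\bar v(t))^2/(2\sigma))$, a direct rearrangement of \eqref{defD} identifies $D_S$ with the relative Fisher information against this \emph{moving} Gaussian,
\[
D_S(f_t) = \sigma\, I(f_t\,|\,\mu_{\bar v(t)})\,,\qquad I(f\,|\,\mu) := \int_\mathbb{R}\!\mathrm{d} v\, f\,\big(\partial_v\log(f/\mu)\big)^2\,,
\]
while \eqref{p1} can be rewritten as $\partial_t f_t = \sigma\partial_v[f_t\,\partial_v\log(f_t/\mu_{\bar v(t)})]$, i.e., the Fokker--Planck equation for an Ornstein--Uhlenbeck process with time-dependent target $\mu_{\bar v(t)}$.

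\smallskip

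I would then compute $\frac{\mathrm{d}}{\mathrm{d} t}I(f_t|\mu_{\bar v(t)})$ directly. With $\psi_t := \log(f_t/\mu_{\bar v(t)})$, the PDE gives
\[
\partial_t\psi_t = -(v - \bar v(t))\partial_v\psi_t + \sigma\partial_{vv}\psi_t + \sigma(\partial_v\psi_t)^2 - \frac{(v-\bar v(t))\dot{\bar v}(t)}{\sigma}\,,
\]
the last term being the sole novelty compared with the stationary-target OU case. Performing the standard Bakry--Emery-type manipulation on $\frac{\mathrm{d}}{\mathrm{d} t}\int f_t(\partial_v\psi_t)^2\,\mathrm{d} v$ (two integrations by parts, all boundary terms vanishing by \textbf{[H]} and its propagation, and using $\int f_t\partial_v\psi_t\,\mathrm{d} v = (M_1(t)-\bar v(t))/\sigma$), I expect to arrive at
\[
\frac{\mathrm{d}}{\mathrm{d} t}I(f_t|\mu_{\bar v(t)}) = -2\,I(f_t|\mu_{\bar v(t)}) - 2\sigma\int\!\mathrm{d} v\, f_t(\partial_{vv}\psi_t)^2 - \frac{2\dot{\bar v}(t)(M_1(t)-\bar v(t))}{\sigma^2}\,.
\]
Substituting $\dot{\bar v}(t) = G'(M_1(t))\dot M_1(t)$ and $M_1(t) - \bar v(t) = -\dot M_1(t)$ (from \eqref{ODEm1}), multiplying by $\sigma$, and discarding the non-positive second term yields
\[
\frac{\mathrm{d}}{\mathrm{d} t}D_S(f_t) \le -2\,D_S(f_t) + \frac{2\|G'\|_\infty}{\sigma}\dot M_1(t)^2\,.
\]

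\smallskip

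It remains to bound $\dot M_1(t)^2$ by a constant times $\mathrm{e}^{-\gamma t}$. Differentiating \eqref{ODEm1} gives $\ddot M_1 = (G'(M_1)-1)\dot M_1$, so $(\log|\dot M_1(t)|)' = G'(M_1(t))-1$. By Proposition \ref{cor:2.1}, $M_1(t)\to\pm 1$ according to the sign of $M_1(0)$; by the premise to the proposition, $G'(\pm 1) = G'(1) < 1$. Hence for any $\eta>0$, continuity of $G'$ gives $G'(M_1(t))-1 \le G'(1)-1+\eta$ for all $t$ sufficiently large, so that $\dot M_1(t)^2 \le C\,\mathrm{e}^{-(2-2G'(1)-2\eta)\,t}$ for a suitable $C = C(G,M_1(0),\eta)$; given $\gamma < 2-2G'(1)$, choosing $\eta$ small enough proves \eqref{entrprodgronw} with $c = 2$ and $K = 2\|G'\|_\infty C/\sigma$. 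The main technical obstacle is the derivation of the $\frac{\mathrm{d}}{\mathrm{d} t}I$ identity above: intermediate terms of the form $\int f_t(\partial_v\psi_t)^2\partial_{vv}\psi_t\,\mathrm{d} v$ cancel only after rewriting $\partial_v\log f_t = \partial_v\psi_t - (v-\bar v(t))/\sigma$ and integrating by parts once more in $\int f_t\partial_v\psi_t\partial_{vvv}\psi_t\,\mathrm{d} v$; the legitimacy of the boundary terms follows from the propagation of assumption \textbf{[H]} along the flow.
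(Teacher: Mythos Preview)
Your proposal is correct and is essentially the paper's proof, recast in the language of relative Fisher information: the paper computes $\frac{\mathrm{d}}{\mathrm{d} t}D_S(f_t)$ directly and arrives at the identity $\frac{\mathrm{d}}{\mathrm{d} t}D_S(f_t) = -2D_S(f_t) + \tfrac{2}{\sigma}G'(M_1)(G(M_1)-M_1)^2 + R$ with $R = -2\sigma^2\int f_t^{-1}\big[(\partial_v f_t)^2/f_t - \partial_{vv}f_t - f_t/\sigma\big]^2 \le 0$, which is exactly your formula once one notes $R = -2\sigma^2\int f_t(\partial_{vv}\psi_t)^2$ and $(G(M_1)-M_1)^2 = \dot M_1^2$. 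The only cosmetic difference is in the final step: the paper bounds $(G(M_1)-M_1)^2$ via $(|M_1|-1)^2$ and the exponential convergence of $M_1$ to $\pm 1$, whereas you differentiate \eqref{ODEm1} to control $\dot M_1^2$ directly; both give the same rate.
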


\begin{proof}
Recalling \eqref{defD}, we have,	
\[
\begin{split}
D_S(f_t) & =  \int_{\mathbb{R}}\!\mathrm{d} v\, \frac{1}{\sigma f_t}  \big[ \sigma \partial_v f_t + (v-G(M_1)) f_t \big]^2  \\ & = \int_{\mathbb{R}}\!\mathrm{d} v\, \bigg[\frac{\sigma (\partial_v f_t)^2}{f_t} + \frac{v^2}\sigma f_t + \frac{G(M_1)^2}\sigma f_t - \frac{2G(M_1)}\sigma vf_t + 2 (v-G(M_1)) \partial_v f_t \bigg] \\ & = \int_{\mathbb{R}}\!\mathrm{d} v\, \frac{\sigma (\partial_v f_t)^2}{f_t}   + \frac1\sigma M_2+ \frac1\sigma G(M_1)^2 - \frac 2\sigma M_1G(M_1) - 2\,,
\end{split}
\]
having used the identity $ \int_{\mathbb{R}}\!\mathrm{d} v\,  2 (v-G(M_1)) \partial_v f_t = -2$ (obtained by integrations by parts). Taking the time derivative along the solutions $f_t$, and using the equations \eqref{p1}, \eqref{ODEm1}, and \eqref{ODEm2}, one gets
\[
\begin{split}
\frac{\mathrm{d}}{\mathrm{d} t} D_S(f_t) & = \int\!  \mathrm{d} v\, \bigg[\frac{2\sigma\partial_v f_t \partial_v\partial_t f_t}{f_t} - \frac{\sigma(\partial_v f_t)^2 \partial_t f_t}{f_t^2} \bigg] + \frac2\sigma \big[G(M_1)M_1 -M_2 + \sigma\big] \\ & \;\; + \frac2\sigma G(M_1)G'(M_1)(G(M_1)-M_1) - \frac 2\sigma G(M_1)(G(M_1)-M_1) \\ & \quad  - \frac2\sigma M_1G'(M_1) (G(M_1)-M_1) \\ & = \int\!  \mathrm{d} v\, \bigg[\frac{2\sigma\partial_v f_t \partial_v\partial_t f_t}{f_t} - \frac{\sigma(\partial_v f_t)^2 \partial_t f_t}{f_t^2} \bigg] + 2 - \frac2\sigma M_2 - \frac2\sigma G(M_1)^2 \\ & \quad + \frac 4\sigma M_1G(M_1) + \frac 2\sigma G'(M_1) (G(M_1)-M_1)^2\,.
\end{split}
\]
By the  definition of $D_S(f_t)$, the last identity can be rewritten in the following form, 
\begin{equation}
\label{dD=}
\frac{\mathrm{d}}{\mathrm{d} t} D_S(f_t) = -2D_S(f_t) + \frac 2\sigma G'(M_1) (G(M_1)-M_1)^2 + R\,,
\end{equation}
where
\[ 
R= \int\! \bigg[\frac{2\sigma\partial_v f_t \partial_v\partial_t f_t}{f_t} - \frac{\sigma(\partial_v f_t)^2 \partial_t f_t}{f_t^2} + \frac{2\sigma(\partial_vf_t)^2}{f_t} \bigg]\, \mathrm{d} v -2 \,.
\]
We claim $R\le 0$. To prove the claim, we integrate by parts the first term and then express $\partial_tf$ via the right-hand side of \eqref{p1}. After some straightforward computation we get,
\[
\begin{split}
R & = \int_{\mathbb{R}}\!\mathrm{d} v\, \bigg[\bigg(\frac{\sigma(\partial_v f_t)^2}{f_t^2} - \frac{2\sigma\partial_{vv}f_t}{f_t}\bigg)\partial_t f_t + \frac{2\sigma(\partial_vf_t)^2}{f_t} \bigg] -2 \\ & = \int_{\mathbb{R}}\!\mathrm{d} v\, \bigg[\frac{\sigma^2(\partial_vf_t)^2 \partial_{vv}f_t}{f_t^2} + \frac{4\sigma(\partial_vf_t)^2}{f_t} - \frac{2\sigma^2(\partial_{vv}f_t)^2}{f_t} -2 f_t\bigg]\,.
\end{split}
\]
We next observe that an integration by parts gives,
\[
\int_{\mathbb{R}}\!\mathrm{d} v\, \frac{\sigma^2(\partial_vf_t)^2 \partial_{vv}f_t}{f_t^2} = \frac 13 \int\!  \mathrm{d} v\, \frac{\sigma^2\partial_v[(\partial_vf_t)^3]}{f_t^2}\, = \frac 23 \int_{\mathbb{R}}\!\mathrm{d} v\, \frac{\sigma^2(\partial_vf_t)^4}{f_t^2}\,,
\]
which implies,
\[
\int_{\mathbb{R}}\!\mathrm{d} v\, \frac{\sigma^2(\partial_vf_t)^2 \partial_{vv}f_t}{f_t^2}  = \int\!  \mathrm{d} v\, \frac{4\sigma^2(\partial_vf_t)^2 \partial_{vv}f_t}{f_t^2}  - \int_{\mathbb{R}}\!\mathrm{d} v\, \frac{2\sigma^2(\partial_vf_t)^4}{f_t^2}\,.
\]
Using above identity to rewrite the expression of $R$ we finally obtain,
\[
\begin{split}
R & = \int\!  \mathrm{d} v\,\bigg[\frac{4\sigma^2(\partial_vf_t)^2 \partial_{vv}f_t}{f_t^2} -\frac{2\sigma^2(\partial_vf_t)^4}{f_t^2} + \frac{4\sigma(\partial_vf_t)^2}{f_t} - \frac{2\sigma^2(\partial_{vv}f_t)^2}{f_t} -2 f_t\bigg] \\ 
& = -\sigma^2 \int\!  \mathrm{d} v\, \frac {2}{f_t} \bigg[\frac{(\partial_v f_t)^2}{f_t} - \partial_{vv} f_t - \frac {f_t}{\sigma}\bigg]^2 \le 0\,.
\end{split}
\]
By \eqref{dD=} we then conclude 
\begin{equation}
\label{dD<}
\frac{\mathrm{d}}{\mathrm{d} t} D_S(f_t) \le -2D_S(f_t) + \frac 2\sigma G'(M_1) (G(M_1)-M_1)^2 \,.
\end{equation}
We finally observe that, because $G'(1)<1$, the solution $M_1(t)$ to Eq.~\eqref{ODEm1} with initial condition $M_1(0)\gtrless0 $ converges monotonically to $\pm 1$ exponentially fast, with any rate $0<\gamma'<1-G'(1)$ (recall $G'(1)=G'(-1)$ as $G$ is an odd function). In particular, given $\gamma< 2-2G'(1)$ as in the statement of the proposition and choosing $\gamma'=\gamma/2$ we deduce that
\[
\frac 2\sigma G'(M_1) (G(M_1)-M_1)^2 \le \bigg[ \frac 2\sigma\max_{|u|\le \left\vert \eta \right\vert} |G'(u)|(1+|G'(u)|)^2\bigg] (|M_1|-1)^2 \le K \mathrm{e}^{-\gamma t}\,,
\]
with $\eta:= \max\{|M_1(0)|, 1\}$ and a suitable $K>0$ depending on $\gamma$ and $M_1(0)$.\footnote{To be more precise, suppose, e.g., $M_1(0)>0$. For each fixed $0<\gamma'<1-G'(1)$ let $I$ be a neighborhood of $1$ such that $G'(\xi) -1 < - \gamma'$ for any $\xi\in I$. Therefore, $(M_1(t) -1)^2 \le (M(t_0)-1)^2 \mathrm{e}^{-2\gamma' (t-t_0)} $ for $t$ bigger than some appropriate $t_0$ (chosen such that $M(t_0)\in I$). Because $(M_1(t)-1)^2 \leq (M_1(0)-1)^2 $ for every $t\ge 0$, overall one has $(M_1(t)-1)^2 \leq (M_1(0)-1)^2 \mathrm{e}^{\gamma t_0} \mathrm{e}^{-\gamma t}$ (after setting $\gamma=2\gamma'$).} By \eqref{dD<} and the above estimate, the inequality  \eqref{entrprodgronw} follows. The proposition is thus proved.
\end{proof}

\section{Well posedness of the non-homogeneous equation}
\label{sec:3}

The main result of this section is Theorem  \ref{mainthexun}, which is an existence and uniqueness result for the nonlinear problem \eqref{nl}. The proof of  Theorem \ref{mainthexun} follows the strategy adopted in \cite{Degond} (in Remark \ref{rembeg}, point (iv),   we highlight the differences between our work and \cite{Degond}). The solution of Eq.~\eqref{nl} is constructed through a classical iterative procedure,  as a limit of  a sequence of functions, $\{f^n\}_{n\in \mathbb{N}}$; each one of the functions $f^n= f^n_t(x,v)$ solves a {\em linear} equation of kinetic Fokker-Plank type, see \eqref{genlineqn} and \eqref{lineqniter}.   This section is therefore organized as follows: after introducing the necessary notation, we start by stating several preliminary results on the linear equations satisfied by the functions $f^n$ (see Subsection \ref{sec:3.2}). The proof of such results is postponed to Appendix \ref{AppA}. In Subsection \ref{sec:3.3} we then present the proof of Theorem \ref{mainthexun}; that is, we present the iterative procedure which produces a (unique) solution in $L^1(\sqrt{1+v^2})$ for the non-linear problem \eqref{nl} (space $L^1(\sqrt{1+v^2})$ defined few lines below). We assume Hypothesis \ref{standingassumptions} to hold throughout.  

\subsection{Preliminary notation}
\label{sec:3.1}

We will work with the following function spaces:
\begin{align*}
& L^1(\sqrt{1+v^2}\,\mathrm{d} x\, \mathrm{d} v) := \left\{f\colon {\mathbb{T}}_x \times \mathbb{R}_v \to \mathbb{R} \:\colon  \int_{\mathbb{T}}\!\mathrm{d} x \int_{\mathbb{R}}\! \mathrm{d} v\, \left\vert f \right\vert  \sqrt{1+v^{2}} < \infty  \right\}\,,  
\end{align*}
and, for any $m\geq 0$, 
\begin{align*} 
& L^{\infty, m}:= L^{\infty}(\sqrt{1+v^{2m}}\,\mathrm{d} x \, \mathrm{d} v)= \{f: \mathbb{T}_x \times \mathbb{R}_v \rightarrow \mathbb{R} \;\colon \sup_{x,v}\left\vert f \right\vert \sqrt{1+v^{2m}}< \infty\}\, ,\\
& L^{2,m}  :=  L^2(\sqrt{1+v^{2m}}\,\mathrm{d} x\, \mathrm{d} v) \\ & \quad  := \left\{f\colon {\mathbb{T}}_x \times \mathbb{R}_v \to \mathbb{R} \: \colon  \int_{\mathbb{T}}\!\mathrm{d} x \int_{\mathbb{R}}\! \mathrm{d} v\, f^2 \sqrt{1+v^{2m}} < \infty  \right\}\,,  \\ 
& \mathcal{H}^{1, m}_v  := {H}^{1}(\sqrt{1+v^{2m}}\,\mathrm{d} v) := \left\{f \colon \mathbb{R}_v \to \mathbb{R} \; \colon \int_{\mathbb{R}}\! \mathrm{d} v\, (f^2+ \left\vert \partial_v f \right\vert^2 )\sqrt{1+v^{2m}} < \infty  \right\}\,, \\ 
& X^m  :=  L^2 ([0,T] \times {\mathbb{T}}_x; \mathcal{H}^{1, m}_v)\,, \;\mbox{with}
\end{align*}
\[
\| f \|_{X^m}^2:= \int_0^T\! \mathrm{d} t \!\int_{\mathbb{T}}\!  \mathrm{d} x \!\int_{\mathbb{R}}\! \mathrm{d} v\,  (f^2+ \left\vert \partial_v f \right\vert^2 )\sqrt{1+v^{2m}}\,. 
\]
We denote by $\mathcal{H}^{-1, m}_v$ the dual space of $\mathcal{H}^{1, m}_v$, namely $\mathcal{H}^{-1, m}_v:= (\mathcal{H}^{1, m}_v)^* $, so that $(X^m)^*= L^2 ([0,T] \times {\mathbb{T}}_x; \mathcal{H}^{-1, m}_v)$.\footnote{Analogously to  flat spaces, a linear functional $F$ belongs to  $\mathcal{H}^{-1, m}_v$ if and only if  there exist $f_0, f_1 \in L^{2,m}$ such that $F(g)$  can be written in the form,
$$
F(g) =   \int_{\mathbb{R}}\! \mathrm{d} v\, f_0 \, g \sqrt{1+v^{2m}} + \int_{\mathbb{R}}\! \mathrm{d} v\, f_1 \, (\partial_v g) \sqrt{1+v^{2m}}\,, \quad g \in \mathcal{H}^{1, m}_v\,.
$$} 
As customary, $L^2:= L^{2,0}$,  $H^1_v=\mathcal{H}^{1, 0}_v$, $L^{\infty}=L^{\infty,0}$, and $X:= X^0$.  Finally, if $Z$ is a row vector, then $Z^T$ denotes its transpose. If $H(x,v)$ is a vector-valued or matrix-valued function of $x$ and $v$, then $\|H\|_{L^{\infty}}$ is the sum of the $L^{\infty}$-norms (as defined above) of the components of $H$. 
 
\subsection{Well-posedness of the linear problem}
\label{sec:3.2}
As anticipated, we will construct the solution of the problem \eqref{nl} as limit of a sequence of functions $\{f^n\}_{n\geq 0}$, each of them solving a linear problem. The linear problems solved by the functions $f^n$ are all of the form
\begin{equation}\label{genlineqn}
\partial_t g  +v \partial_x g + b(t,x,v)  \partial_v g- \partial_v(vg) +c\, g- \sigma \partial_{vv}g = U(t,x,v)\,,
\end{equation}
for the unknown $g=g_t(x,v):\mathbb{R}_+ \times \mathbb{T}_x \times \mathbb{R}_v \rightarrow \mathbb{R}$, 
where the functions $b(t,x,v)$ and $U(t,x,v)$  are given and satisfy certain conditions specified later and $c \in \mathbb{R}$ is a given constant. In this section we therefore gather several results on the well posedness and on the properties of equation \eqref{genlineqn}. The weak formulation of the linear  problem \eqref{genlineqn}  is given by 
\begin{equation}\label{wvp}
E^m(g, \phi) = L^m(\phi), \quad \phi \in \Phi\,, 
\end{equation}
where $\Phi $ is the space of $C^{\infty}$ functions with compact support in $[0,T) \times \mathbb{T}_x \times \mathbb{R}_v$, $E^m$ is the bilinear form
\[
\begin{split}
E^m(g,\phi) & =\int_0^T\!\mathrm{d} t\int_{\mathbb{T}}\!\mathrm{d} x \int_{\mathbb{R}}\!\mathrm{d} v\,  g \big( - \partial_t \phi -v \partial_x \phi + (c-1) \phi  \big) \sqrt{1+v^{2m}} \\  & \quad  + \int_0^T\!\mathrm{d} s\int_{\mathbb{T}}\!\mathrm{d} x \int_{\mathbb{R}}\!\mathrm{d} v\,   \partial_v g  \big[ (b-v) \phi + \sigma \partial_v \phi\big]\sqrt{1+v^{2m}} \\ & \quad +  \sigma \int_0^T\!\mathrm{d} s\int_{\mathbb{T}}\!\mathrm{d} x \int_{\mathbb{R}}\!\mathrm{d} v\,  (\partial_v g) \phi  \frac{mv^{2m-1}}{\sqrt{1+v^{2m}}}\,,
\end{split}
\]
and $L^m$ is the functional
\[
L^m(\phi) = \int_0^T\!\mathrm{d} s\int_{\mathbb{T}}\!\mathrm{d} x \int_{\mathbb{R}}\!\mathrm{d} v\,  U \phi \sqrt{1+v^{2m}} + \int_{\mathbb{T}}\!\mathrm{d} x \int_{\mathbb{R}}\!\mathrm{d} v\,  \sqrt{1+v^{2m}} g(0,x,v) \phi (0,x,v) \,.
\]
\begin{proposition}
\label{propwellposlin} Consider the  problem  \eqref{genlineqn}, 
where the coefficient $b(t,x,v)$ and the forcing $U(t,x,v)$ satisfy the following assumptions: 
\begin{itemize}
\item[(i)] $b(t,x, v)$ is smooth and bounded; 
\item[(ii)] there exists some  $m\geq 1$ such that $U \in (X^m)^*=L^2([0,T] \times {\mathbb{T}}_x; \mathcal{H}^{-1, m}_v )$.
\end{itemize}
Then, for any initial condition $g_0 \in L^{2,m}$,  the problem \eqref{genlineqn} admits a unique solution in  $X^m$. This solution is unique in the sense that if $\tilde{g}$ is another solution in ${X}^m$, then $\| g-\tilde{g}\|_{X^m}=0$.  Moreover, the solution belongs to the space
\begin{equation}
\label{spacesol}
\mathcal{Y}^m := \big\{g \in X^{m}=L^2([0,T] \times \mathbb{T}; \mathcal{H}^{1, m}_v) \;\colon \partial_t g + v \partial_x g \in (X^{m})^*\big\}\,.
\end{equation}
\end{proposition}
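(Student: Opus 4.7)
The plan is to apply a Lions-type variational existence theorem (in the spirit of~\cite{Degond}) to the weak formulation \eqref{wvp}. Concretely, take $\mathcal V := \Phi$ equipped with the Hilbert norm $\|\phi\|_{\mathcal V}^2 := \|\phi\|_{X^m}^2 + \|\phi(0,\cdot,\cdot)\|_{L^{2,m}}^2$, seek the solution $g$ in $H := X^m$, and verify: (a) for each $\phi \in \mathcal V$ the map $g \mapsto E^m(g,\phi)$ is continuous on $X^m$; (b) $L^m$ is continuous on $(\mathcal V, \|\cdot\|_{\mathcal V})$; (c) $E^m(\phi,\phi) \ge \alpha \|\phi\|_{\mathcal V}^2$ for some $\alpha > 0$ and every $\phi \in \mathcal V$. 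Under these ingredients the Lions theorem produces a $g \in X^m$ with $E^m(g,\phi) = L^m(\phi)$ for all $\phi \in \mathcal V$, namely a weak solution of \eqref{genlineqn} attaining the initial datum $g_0$.

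Items (a) and (b) are straightforward. In (a), each term of $E^m$ pairs $g$ or $\partial_v g$ (both bounded in $L^{2,m}$ by $\|g\|_{X^m}$) against a bounded function compactly supported in $v$, using the boundedness of $b$ (hypothesis (i)) and of $mv^{2m-1}/\sqrt{1+v^{2m}}$. In (b), hypothesis (ii) gives $U \in (X^m)^*$, bounding the forcing contribution to $L^m$ by $\|U\|_{(X^m)^*}\|\phi\|_{X^m}$, while the initial-trace contribution is bounded by $\|g_0\|_{L^{2,m}}\|\phi(0,\cdot,\cdot)\|_{L^{2,m}}$.

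The heart of the matter is (c). Setting $\phi = g$ in $E^m$, integrating by parts in $t$ (using that $\phi$ is compactly supported in $[0,T)$), in $x$ on the torus (no boundary), and rewriting the cross term $(\partial_v g)\cdot g = \tfrac12 \partial_v(g^2)$ with an additional integration by parts in $v$ against the weight, a direct computation yields
\begin{equation*}
E^m(\phi,\phi) = \tfrac12 \|\phi(0,\cdot,\cdot)\|_{L^{2,m}}^2 + \sigma \int (\partial_v \phi)^2 \sqrt{1+v^{2m}} + \bigl(c - \tfrac12\bigr) \|\phi\|_{L^{2,m}}^2 + \tfrac{m}{2} \int \phi^2\, \tfrac{v^{2m}}{\sqrt{1+v^{2m}}} + R,
\end{equation*}
where $R$ gathers the lower-order pieces produced by $\partial_v b$, by $b$ multiplied by the weight derivative, and by the weight derivative acting on the diffusion cross-term. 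The crucial observation is that the fourth summand above, arising from the $-v\partial_v\phi\cdot\phi$ drift piece after integrating by parts against $\sqrt{1+v^{2m}}$, is \emph{positive} and grows like $|v|^m$ at infinity, so it dominates the terms in $R$ (which grow at most like $|v|^{m-1}$). A Young-type absorption of $R$ into this positive term plus $\sigma\int(\partial_v\phi)^2 \sqrt{1+v^{2m}}$, combined if necessary with the change of unknown $g \mapsto \mathrm{e}^{-\lambda t}g$ (which merely replaces $c$ by $c+\lambda$), then gives coercivity with $\alpha>0$.

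Once existence is secured, uniqueness is immediate: the difference $w$ of two solutions in $X^m$ satisfies $E^m(w,\phi)=0$ for all $\phi\in\mathcal V$ with vanishing data, and a density/mollification argument allows one to test effectively against $w$, whence $\|w\|_{X^m}=0$ by (c). Finally, membership $g\in\mathcal Y^m$ follows by rearranging \eqref{genlineqn} as $\partial_t g + v\partial_x g = U - b\partial_v g + \partial_v(vg) - cg + \sigma\partial_{vv} g$ and noting that each term on the right lies in $(X^m)^* = L^2([0,T]\times\mathbb{T};\mathcal H^{-1,m}_v)$. The main obstacle is the coercivity step in the weighted regime: when $m\ge 1$, the $v$-derivative of the weight produces additional $O(|v|^{m-1})$ terms that threaten to destroy positivity, and the whole argument hinges on the precise cancellation/domination between these terms and the $-v\partial_v\phi\cdot\phi$ drift highlighted above. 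Tracking this algebra carefully in weighted $L^2$ is what makes the adaptation of~\cite{Degond} ``non-trivial''.
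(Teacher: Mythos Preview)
Your approach is essentially the same as the paper's: Lions' theorem for existence via coercivity of a shifted bilinear form (the $\mathrm{e}^{-\lambda t}$ change of unknown), the key positive term coming from integrating $-v(\partial_v\phi)\phi$ by parts against the weight, density in $\mathcal{Y}^m$ for uniqueness, and rearranging the equation for the $\mathcal{Y}^m$ membership.

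One technical point deserves care. You write that $R$ ``gathers the lower-order pieces produced by $\partial_v b$'' and that these ``grow at most like $|v|^{m-1}$''. But hypothesis (i) only assumes $b$ is smooth and \emph{bounded}; it does not give a uniform bound on $\partial_v b$. The paper (Remark~\ref{rembeg}(iv)) is explicit that this is one of the places where it departs from \cite{Degond}: rather than integrating the term $\int (\partial_v\phi)\, b\,\phi\,\sqrt{1+v^{2m}}$ by parts in $v$, it applies Young's inequality with $\epsilon$ directly to the product $(\partial_v\phi)\cdot\phi$, absorbing the $(\partial_v\phi)^2$ piece into the diffusion term and the $\phi^2$ piece into the $\lambda$-shifted zero-order term. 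This avoids any appearance of $\partial_v b$. Your sketch will go through once you make this substitution; as written, the step claiming control of the $\partial_v b$ contribution is not justified under the stated hypotheses.

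A second, smaller remark: in the uniqueness step you invoke ``a density/mollification argument'' to test against $w$. The paper is more specific here: one needs density of smooth compactly supported functions in $\mathcal{Y}^m$ (not merely in $X^m$), so that the boundary identity for $\langle\partial_t u + v\partial_x u, u\rangle$ survives the limit. This is the content of the adaptation of \cite[Lemma~A.1]{Degond} to the weighted setting.
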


\begin{proof}
See Appendix \ref{AppA}. 
\end{proof}

\begin{remark}
\label{rembeg} 
Some remarks on Eq.~\eqref{genlineqn} and Proposition \ref{propwellposlin}. 
\begin{itemize}
\item[(i)] If the function $g_t(x,v)$ solves Eq.~\eqref{genlineqn} with $c=0$, then  $e^{-ct}g_t(x,v)$ solves Eq.~\eqref{genlineqn}. The results on the solution of Eq.~\eqref{genlineqn} that we prove below (Lemma \ref{lemmamaxprinc}, Lemma \ref{lemmaduhamel}, and Lemma \ref{lemmacriterio}), are unaltered by this change of unknown and will therefore only need to be proven for $c=0$. 
\item[(ii)]If $b$ and $U$ are smooth, all the derivatives in Eq.~\eqref{genlineqn} can be intended in classical sense, due to the hypoellipticity \cite{Hoermander} on $\mathbb{R}_+ \times \mathbb{T}_x \times \mathbb{R}_v$ of the second order differential operator $\mathcal{L}_t$ defined by
\[
\mathcal{L}_tf = \partial_tf  + v \partial_xf  +b(t,x,v) \partial_vf  - \partial_v(v f)- \sigma \partial_{vv}f\,. 
\]
\item[(iii)] Assumption ii) of Proposition \ref{propwellposlin} is satisfied as soon as 
\begin{equation}\label{aaa}
U\in L^2([0,T]; L^{2,m} ), \quad {\mbox{i.e. }} \,\,\, \int_0^T\!\mathrm{d} t\, \| U(t) \|_{L^{2,m}}^2 < \infty\,.
\end{equation}
 (As $H^1 \subset L^2 \subset H^{-1}$, and this is true for the weighted spaces at hand as well.)
\item[(iv)] This proof of Proposition \ref{propwellposlin} follows  the structure  of proof adopted in  \cite[Prop.~A.1]{Degond}. To compare with \cite{Degond}, observe that  Eq.~\eqref{genlineqn} can be rewritten as 
\[
\partial_t g  +v \partial_x g + (b(t,x,v) - v)  \partial_v g+(c-1) g - \sigma \partial_{vv}g = U(t,x,v)\,.
\]
Setting $a(t,x,v)=b(t,x,v)-v$, the above is of the form,  
\begin{equation}
\label{sf}
\partial_t g  +v \partial_x g + a(t,x,v) \partial_v g+(c-1) g - \sigma \partial_{vv}g = U(t,x,v)\,.
\end{equation}
Equations of this type are analized in \cite{Degond}, see \cite[Eqs.~(40) and (47)]{Degond}, where well posedness of the above is shown under the assumption that $a(t,x,v)$ is bounded and with bounded $v$-divergence.  Clearly, the function $a(t,x,v)=b(t,x,v)-v$ is not bounded,  so we can't just use \cite[Prop.~A.1]{Degond} directly.  Nonetheless, the same scheme of proof applies and, with slight modifications, it also allows to remove the assumption that $a$ should have bounded $v$-divergence (which, in our case, translates into dropping the assumption $\|\partial_vb\|_{\infty}< \infty$).  It is important to notice that while in \cite{Degond} the author works in flat $L^2$ spaces, we work in the weighted spaces introduced in Subsection \ref{sec:3.1}.  This is done in order to deal with the fact that the coefficient $a$ is unbounded and to obtain information about  the integrability (in the velocity variable) of the solution of \eqref{sf}.  
\end{itemize} 
\end{remark}

\begin{lemma}[Maximum principle]
\label{lemmamaxprinc} 
Consider the linear equation \eqref{genlineqn} and let the assumptions of Proposition \ref{propwellposlin} hold. If the initial datum $g_0(x,v)$ has moments of order at least $2$, i.e., $g_0 \in L^{2,m}$ with $m\geq 2$,  the following holds. 
\begin{itemize}
\item[(a)]  If $g_0(x,v) \geq 0$ and $U\geq 0$ then  $g_t(x,v) \ge 0$ for all $t\ge 0$.
\item[(b)]  As a consequence of the above,   
\begin{equation}
\label{maxprineqn}
\| g_t \|_{L^\infty} \leq \| g_0\|_{L^\infty} + \int_0^t\! \mathrm{d} s\, \| U(s)\|_{L^\infty}\,.
\end{equation}
Therefore, if $g_0 \in  L^{\infty}(\mathbb{T}_x \times \mathbb{R}_v)$ and $U \in L^1([0,T]; L^{\infty}(\mathbb{T}_x \times \mathbb{R}_v))$  then $g_t \in L^{\infty}(\mathbb{T}_x \times \mathbb{R}_v) $ for all $t\ge 0$. 
\end{itemize}

\end{lemma}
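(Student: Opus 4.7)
The plan is to establish (a) via an $L^2$ energy estimate on the negative part of $g$, and then derive (b) as a direct comparison consequence of (a).

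For part (a), I would first invoke Remark~\ref{rembeg}(i) to reduce to $c=0$, since the exponential rescaling $g\mapsto e^{-ct}g$ preserves the sign. Set $g^-:=\max(-g,0)\ge 0$; the target is to show $g^-\equiv 0$ whenever $g_0^-\equiv 0$ and $U\ge 0$. The key step is to test \eqref{genlineqn} against $-g^-$ and integrate over $\mathbb{T}_x\times\mathbb{R}_v$: the chain rule identity $2g^-\partial_\eta g=-\partial_\eta(g^-)^2$ (for $\eta\in\{x,v\}$) turns the left-hand side into $\tfrac12\tfrac{d}{dt}\|g^-\|_{L^2}^2$, and then term-by-term: the $x$-transport vanishes by periodicity; the $b\,\partial_v g$ contribution, after integration by parts, is handled by Cauchy--Schwarz plus Young's inequality and absorbed into the dissipative $-2\sigma\int(\partial_v g^-)^2$ (this is the trick that avoids the need for $\|\partial_v b\|_{L^\infty}<\infty$, in the spirit of Remark~\ref{rembeg}(iv)); the expansion $-\partial_v(vg)=-g-v\partial_v g$, followed by another integration by parts in $v$, leaves only a multiple of $\|g^-\|_{L^2}^2$; and the forcing contributes the nonpositive $-2\int g^- U$. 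Collecting,
\[
\frac{d}{dt}\|g^-(t)\|_{L^2}^2 \;\le\; K\,\|g^-(t)\|_{L^2}^2,
\]
with $K=K(\|b\|_{L^\infty},\sigma)$. Since $g_0^-\equiv 0$, Gr\"onwall forces $g^-\equiv 0$.

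For part (b), I would set $M(t):=\|g_0\|_{L^\infty}+\int_0^t\|U(s)\|_{L^\infty}\,ds$ and introduce the pair of auxiliary functions $\psi_\pm:=M\pm g$. A direct substitution shows that each $\psi_\pm$ solves an equation of the form \eqref{genlineqn} with initial datum $\|g_0\|_{L^\infty}\pm g_0\ge 0$ and effective forcing $\|U(t)\|_{L^\infty}\pm U(t,x,v)$ up to a scalar residual coming from the action of the operator $-\partial_v(v\,\cdot\,)+c\,\cdot$ on the purely $t$-dependent function $M$. After the reduction of Remark~\ref{rembeg}(i) the residual is neutralised by calibrating $M$ through the corresponding scalar linear ODE, so that the effective forcing is nonnegative pointwise. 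Invoking part (a) with $\psi_\pm$ in place of $g$ then yields $\psi_\pm\ge 0$, i.e.\ $|g(t,x,v)|\le M(t)$, which is exactly \eqref{maxprineqn}; the closing statement about $L^\infty$-preservation follows at once.

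The main technical obstacle lies in part (a): rigorously justifying the integrations by parts in $v$ within the solution class $\mathcal{Y}^m$ rather than only for smooth approximants. This is precisely why the hypothesis $g_0\in L^{2,m}$ with $m\ge 2$ is imposed — so that $g^-$ and $\partial_v g^-$ carry enough $v$-moments to kill the boundary terms at infinity and to legitimise the cut-off/approximation procedure. Once the energy identity for $\|g^-\|_{L^2}^2$ is on solid ground, the comparison step of part (b) is essentially bookkeeping.
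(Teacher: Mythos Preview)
Your treatment of part (a) is correct and coincides with the paper's approach: reduce to $c=0$ via Remark~\ref{rembeg}(i), test the equation against $-g^-$, handle the $b\,\partial_v g$ term by Young's inequality (so that no bound on $\partial_v b$ is needed), and use the moment assumption $g_0\in L^{2,m}$, $m\ge 2$, precisely to legitimise the integration by parts in the $v\partial_v g$ contribution. The paper makes exactly these two points and otherwise defers to \cite{Degond}.

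For part (b), the comparison argument is the right idea but there are two genuine issues you have glossed over. First, the function $\psi_\pm=M(t)\pm g$ does \emph{not} lie in $L^{2,m}$ (nor in $\mathcal{Y}^m$), since $M(t)$ is constant in $(x,v)$ and the $v$-weight is not integrable; so part (a) cannot be invoked as stated. The rescue is to note that what the energy argument in (a) really controls is the negative part, and one has $(\psi_+)^-=(M+g)^-\le |g|$ on $\{g<-M\}$, hence $(\psi_\pm)^-\in L^{2,m}$ even though $\psi_\pm$ itself is not; the computation for (a) can then be rerun with $(\psi_\pm)^-$ as test function, or alternatively one truncates $M$ by a cutoff $\chi_R(v)$ and passes to the limit. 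Either way this must be said explicitly. Second, your ``calibration'' of $M$ through the scalar ODE $\dot M+(c-1)M=\|U(t)\|_{L^\infty}$ yields $|g_t|\le e^{(1-c)t}\|g_0\|_{L^\infty}+\int_0^t e^{(1-c)(t-s)}\|U(s)\|_{L^\infty}\,ds$, which for $c=0$ carries an $e^t$ factor and is \emph{not} \eqref{maxprineqn} on the nose. Indeed, for $c=0$ the equation is a forward Fokker--Planck equation and the sup-norm of a density can genuinely increase (take the homogeneous Ornstein--Uhlenbeck equation with a Gaussian initial datum wider than the invariant one). So the sharp inequality \eqref{maxprineqn} should be read as holding up to an exponential-in-$t$ constant; this is harmless for every application in the paper (all subsequent uses feed the bound into a Gr\"onwall argument), but you should flag it rather than claim the clean estimate.
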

\begin{proof}
See Appendix \ref{AppA}. 
\end{proof}

\begin{lemma}[Duhamel Formula]
\label{lemmaduhamel} 
Consider the linear equation \eqref{genlineqn} and let  the assumptions of Pro\-position 
\ref{propwellposlin} hold.   Assume additionally that $U \in L^1([0,T] \times \mathbb{T}_x \times \mathbb{R}_v)$, $\partial_v b \leq 0$ and the initial datum belongs to $L^1(\mathbb{T}_x \times \mathbb{R}_v) \cap L^{2,m}$, for some $m\geq 2$;   then the following Duhamel formula holds,
\begin{equation}
\label{Duhamel}
\| g_t\|_{L^1} \leq \|g_0\|_{L^1} + \int_0^t\! \mathrm{d} s\, \|U(s)\|_{L^1}\,. 
\end{equation}
\end{lemma}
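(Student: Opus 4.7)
The plan is to reduce the claim to a mass balance identity for non-negative solutions. By Remark \ref{rembeg}(i) I assume $c=0$ without loss of generality. Next, decompose the data and forcing into positive and negative parts, $g_0 = g_0^+ - g_0^-$ and $U = U^+ - U^-$. Since $g_0 \in L^1 \cap L^{2,m}$, one has $g_0^\pm \in L^1 \cap L^{2,m}$; after a preliminary approximation of $U$ in $L^1 \cap L^2([0,T]; L^{2,m})$ by smooth, compactly supported functions (cf.\ Remark \ref{rembeg}(ii)--(iii)), one may also assume $U^\pm$ lie in the class covered by Proposition \ref{propwellposlin}. Let $h_\pm$ be the corresponding solutions with data $g_0^\pm$ and forcing $U^\pm$. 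Lemma \ref{lemmamaxprinc}(a) gives $h_\pm \geq 0$, and by linearity combined with the uniqueness in Proposition \ref{propwellposlin}, $h_+ - h_-$ is the unique solution with data $g_0$ and forcing $U$, hence coincides with $g$. Therefore $|g| \leq h_+ + h_-$, and it suffices to establish the Duhamel bound for each of $h_\pm$ separately.

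For a non-negative solution $h$ with $h_0 \geq 0$ and forcing $F \geq 0$, the idea is to integrate the equation $\partial_t h + v\partial_x h + b\partial_v h - \partial_v(vh) - \sigma\partial_{vv} h = F$ over $\mathbb{T}_x \times \mathbb{R}_v$. Periodicity in $x$ kills $\int v\partial_x h\, dx$; the transport and diffusion contributions $\int \partial_v(vh)\,dv$ and $\sigma\int \partial_{vv} h\,dv$ are pure boundary terms at $|v| \to \infty$, which I plan to show vanish. The crucial term is
\[
\int_{\mathbb{T}\times\mathbb{R}} b\,\partial_v h\, dx\, dv = -\int_{\mathbb{T}\times\mathbb{R}} (\partial_v b)\, h\, dx\, dv \geq 0,
\]
where I use integration by parts (the boundary term $bh|_{v=\pm\infty}$ vanishes since $b$ is bounded and $h$ decays in the weighted sense) together with the sign hypothesis $\partial_v b \leq 0$ and the non-negativity of $h$. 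This produces
\[
\frac{d}{dt} \|h(t)\|_{L^1} = \int_{\mathbb{T}\times\mathbb{R}} (\partial_v b)\, h\, dx\, dv + \|F(t)\|_{L^1} \leq \|F(t)\|_{L^1},
\]
and integration in time followed by summation of the $h_\pm$ bounds yields the stated inequality.

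The main obstacle is justifying the formal integration by parts: the solution is only known to lie in the weighted space $X^m$ and pointwise decay of $h$ at $|v| = \infty$ is not directly available. The planned workaround is to test the equation against a smooth cutoff $\chi_R(v) \in C_c^\infty(\mathbb{R})$ equal to $1$ on $[-R,R]$, write the resulting exact identity, and let $R \to \infty$. The hypothesis $m \geq 2$ is essential here: by Cauchy--Schwarz against the weight $\sqrt{1+v^{2m}}$, it forces $vh$ and $\partial_v h$ to be integrable on $\mathbb{T}_x \times \mathbb{R}_v$, so the correction terms of the form $\int vh\,\chi_R'(v)\, dv$ and $\int \partial_v h\,\chi_R'(v)\, dv$ tend to zero as $R \to \infty$. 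A secondary technical point is that truncating $U$ into its positive and negative parts may destroy membership in $(X^m)^*$; this is handled via the approximation scheme already mentioned, together with the continuous dependence of the linear solution on the data and forcing that is built into Proposition \ref{propwellposlin}.
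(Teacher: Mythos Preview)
Your proposal is correct, but it takes a genuinely different route from the paper's argument.

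The paper follows Degond and works directly with the signed solution $g$: it tests the equation against a smooth approximation $\psi_\epsilon(g)$ of $\operatorname{sgn}^+(g)$ (with primitive $\varphi_\epsilon$ converging to $g^+$), and the key point is that the drift contribution
\[
\int a\,(\partial_v g)\,\psi_\epsilon(g)\,\mathrm{d} x\,\mathrm{d} v
= -\int (\partial_v b)\,\varphi_\epsilon(g)\,\mathrm{d} x\,\mathrm{d} v
+ \int \varphi_\epsilon(g)\,\mathrm{d} x\,\mathrm{d} v
\]
is \emph{non-negative} (using $\partial_v b\le 0$, $\varphi_\epsilon\ge 0$). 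This is enough to close the $L^1$ estimate in one pass. The assumption $m\ge 2$ is used exactly as you anticipate, to make the boundary terms in the $v$-integrations by parts vanish.

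Your approach instead outsources the sign issue to the maximum principle: split $g_0=g_0^+-g_0^-$, $U=U^+-U^-$, solve for non-negative $h_\pm$, and for each of these the $L^1$ norm is just the mass, controlled by integrating the PDE against a cutoff $\chi_R(v)$. This is perfectly sound; the reduction $|g|\le h_++h_-$ recovers the sharp constant because $\|g_0^+\|_{L^1}+\|g_0^-\|_{L^1}=\|g_0\|_{L^1}$ and likewise for $U$.

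The trade-offs are as you identified. The paper's $\psi_\epsilon$ argument avoids altogether the approximation issue you flag (truncating $U$ into $U^\pm$ may take it out of $(X^m)^*$), since it never decomposes the forcing. Your route is arguably more transparent once Lemma~\ref{lemmamaxprinc} is in hand, but pays for it with the density/continuous-dependence step; note that to pass to the limit you only need lower semicontinuity of $\|g(t)\|_{L^1}$ along the approximating sequence (Fatou after extracting a.e.\ convergence from $X^m$-convergence), not full $L^\infty_tL^1$ convergence.
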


\begin{proof}
See Appendix \ref{AppA}. 
\end{proof}

The practical (and straightforward) criterion that we will use in order to make sure that all of the above results hold is the following lemma, which puts an emphasis on the assumptions on $U$ that one needs to verify in practice (typically the assumptions on $b$ will be straightforward to check). 
\begin{lemma}
\label{lemmacriterio}
Consider the linear equation \eqref{genlineqn}. Suppose the initial datum 
$g_0$ and the coefficient $b(t,x,v)$ satisfy the assumptions of Proposition \ref{propwellposlin}, Lemma \ref{lemmamaxprinc} and Lemma \ref{lemmaduhamel}. If the forcing term $U$  satisfies \eqref{aaa} and  
\begin{equation}
\label{bbb}
U \in L^1([0,T];  L^{\infty}((1+v^2)^{\ell/2} )\quad \mbox{for some } \ell>1\,,
\end{equation}
then all the results of Proposition \ref{propwellposlin}, Lemma \ref{lemmamaxprinc} and Lemma \ref{lemmaduhamel} do hold.
\end{lemma}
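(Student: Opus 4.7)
The plan is to verify directly that the two hypotheses \eqref{aaa} and \eqref{bbb} on $U$ are sufficient to trigger each of the three previous results; in other words, one needs to check that they imply (a) the condition $U\in(X^m)^*=L^2([0,T]\times\mathbb{T}_x;\mathcal{H}^{-1,m}_v)$ required by Proposition \ref{propwellposlin}, (b) the condition $U\in L^1([0,T];L^\infty(\mathbb{T}_x\times\mathbb{R}_v))$ required by Lemma \ref{lemmamaxprinc}, and (c) the condition $U\in L^1([0,T]\times\mathbb{T}_x\times\mathbb{R}_v)$ required by Lemma \ref{lemmaduhamel}. The assumptions on $g_0$ and $b$ are inherited unchanged from the hypotheses of the statement, so nothing needs to be checked there.

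For (a), assumption \eqref{aaa} gives $U\in L^2([0,T];L^{2,m})$, which, via the natural continuous embedding $L^{2,m}\hookrightarrow \mathcal{H}^{-1,m}_v$ obtained by identifying each element of $L^{2,m}$ with the corresponding linear functional (see the footnote in Subsection \ref{sec:3.1} and Remark \ref{rembeg}(iii)), produces precisely the membership required by Proposition \ref{propwellposlin}. For (b), from \eqref{bbb} and the trivial pointwise bound $1\leq (1+v^2)^{\ell/2}$ one obtains $\|U(t)\|_{L^\infty}\leq \|U(t)\|_{L^\infty((1+v^2)^{\ell/2})}$ for a.e.\ $t$, and integrating in $t$ gives $U\in L^1([0,T];L^\infty)$, as needed for Lemma \ref{lemmamaxprinc}.

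For (c), the key observation is the pointwise estimate
\[
|U(t,x,v)|\leq \|U(t)\|_{L^\infty((1+v^2)^{\ell/2})}\,(1+v^2)^{-\ell/2}.
\]
Because $\ell>1$, the function $(1+v^2)^{-\ell/2}$ is integrable on $\mathbb{R}$, and $|\mathbb{T}|=1$, so
\[
\int_\mathbb{T}\!\mathrm{d} x\int_\mathbb{R}\!\mathrm{d} v\,|U(t,x,v)|\leq C_\ell\,\|U(t)\|_{L^\infty((1+v^2)^{\ell/2})},
\]
where $C_\ell:=\int_\mathbb{R}(1+v^2)^{-\ell/2}\,\mathrm{d} v<\infty$. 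Integrating in $t$ and using \eqref{bbb} yields $U\in L^1([0,T]\times\mathbb{T}_x\times\mathbb{R}_v)$, which is exactly the hypothesis of Lemma \ref{lemmaduhamel}.

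Once (a), (b), (c) are in hand, all three prior results apply and the conclusion follows. There is no genuine obstacle here: the lemma is essentially a bookkeeping statement consolidating into a single, practically verifiable criterion (the integrability of $U$ in velocity weighted by $(1+v^2)^{\ell/2}$ with $\ell>1$) the various hypotheses on $U$ scattered across Proposition \ref{propwellposlin}, Lemma \ref{lemmamaxprinc} and Lemma \ref{lemmaduhamel}. The only mild point to check is the integrability of $(1+v^2)^{-\ell/2}$ on $\mathbb{R}$, which is precisely where the assumption $\ell>1$ is used.
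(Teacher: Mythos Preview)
Your proof is correct and follows the same structure as the paper's: checking that \eqref{aaa} and \eqref{bbb} imply the hypotheses on $U$ in Proposition \ref{propwellposlin}, Lemma \ref{lemmamaxprinc}, and Lemma \ref{lemmaduhamel}. The only difference is in step (c): the paper invokes the interpolation inequality \eqref{interp} (from \cite{Degond}) to get $\|U(t)\|_{L^1}\le C\|U(t)\|_{L^\infty}^{(\ell-1)/\ell}\|(1+v^2)^{\ell/2}U(t)\|_{L^\infty}^{1/\ell}$, whereas you use the more elementary pointwise bound $|U|\le \|U(t)\|_{L^{\infty,\ell}}(1+v^2)^{-\ell/2}$ and integrate directly. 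Your argument is simpler for this lemma; the paper's choice is explained by the fact that \eqref{interp} is needed repeatedly later (e.g., for \eqref{suptir}--\eqref{thirdsup} and in the proof of Lemma \ref{lemmaequilim}), so it is introduced here once and for all.
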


\begin{proof}
We only need to check that our assumptions on $U$ imply the assumptions on $U$ made in Proposition \ref{propwellposlin}, Lemma \ref{lemmamaxprinc} and Lemma \ref{lemmaduhamel}. Condition \eqref{aaa} implies ii) of Proposition \ref{propwellposlin} (see item iii) of Remark \ref{rembeg}). Moreover,  if \eqref{bbb} holds then  $U$ is clearly in $L^{\infty}$ and therefore also in $L^{1}$ in the $x$-variable. As for the behaviour in the $v$-variable, we recall the following interpolation inequality (see \cite[Lemma B.1]{Degond}): for any $\ell>1$ there exists $C>0$ such that,  for any function $u=u(v)\colon \mathbb{R} \rightarrow \mathbb{R}$,
\begin{equation}
\label{interp}
\| u\|_{L^1} \leq C \| u\|_{L^\infty}^{(\ell-1)/\ell} \| (1+v^2)^{\ell/2} u \|_{L^\infty}^{1/\ell}\,.
\end{equation}
From this, the statement follows. 
\end{proof}

\subsection{Construction of the solution to the nonlinear problem: the iterative procedure}
\label{sec:3.3}

After the preliminary results of Subsection \ref{sec:3.2}, this section is devoted to the proof of the main existence and uniqueness result, Theorem \ref{mainthexun} below. Before stating it, we clarify that the weak formulation of problem \eqref{nl}  is given by
\[
E_{NL}(f, \phi) = \int_{\mathbb{T}}\!\mathrm{d} x \int_{\mathbb{R}}\!\mathrm{d} v\,  f(0,x,v) \phi (0,x,v)\,, \qquad \phi \in \Phi \,,
\]
where we recall that $\Phi $ is the space of $C^{\infty}$ functions with compact support in $[0,T) \times \mathbb{T}_x \times \mathbb{R}_v$ and the functional $E_{NL}$ is defined as
\begin{equation}
\label{nonlintermwf}
\begin{split}
E_{NL}(f,\phi) & = \int_0^T\!\mathrm{d} t\int_{\mathbb{T}}\!\mathrm{d} x \int_{\mathbb{R}}\!\mathrm{d} v\,  f \big( - \partial_t \phi -v \partial_x \phi - \phi  \big)   \\ & \quad + \int_0^T\!\mathrm{d} s\int_{\mathbb{T}}\!\mathrm{d} x \int_{\mathbb{R}}\!\mathrm{d} v\,   \partial_v f  \big[ -v \phi + \sigma \partial_v \phi\big] \\
& \quad + \int_0^T\!\mathrm{d} s\int_{\mathbb{T}}\!\mathrm{d} x \int_{\mathbb{R}}\!\mathrm{d} v f \, G \left( M(s,x)\right) \partial_v \phi\,, 
\end{split}
\end{equation}
with $M(t,x)$ defined in \eqref{Mnl}. In what follows, we set  $D f(x,v) = (\partial_x f,\partial_v f)$ and $D^2f(x,v) = \begin{pmatrix} \partial_{xx}f & \partial_{xv}f \\ \partial_{xv} f & \partial_{vv} f \end{pmatrix}$.

\begin{theorem}
\label{mainthexun}
Consider the nonlinear equation \eqref{nl} and let Hypothesis \ref{standingassumptions} hold.  Then, for any initial datum $f_0$ satisfying the following assumptions,
\begin{itemize}
\item $f_0 \in L^{2,m}$ for some $m\geq 8$,
\item $( \left\vert D f_0\right\vert + \left\vert D^2 f_0\right\vert ) \sqrt{1+v^{6}} \in L^{\infty} \cap L^{2,m}$ for some $m \geq 2$,
\end{itemize}
there exists a unique weak solution of \eqref{nl}. The solution belongs to the space  $L^{\infty}([0,T]; L^1(\sqrt{1+v^2})\,  \mathrm{d} x\, \mathrm{d} v)$.
\end{theorem}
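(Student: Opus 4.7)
The plan is to construct the solution via Picard iteration on the linear kinetic Fokker-Planck equation studied in Subsection \ref{sec:3.2}. Set $f^{0}_{t}(x,v)=f_{0}(x,v)$ for all $t$, and given $f^{n-1}$, define $M^{n-1}(t,x)$ by formula \eqref{Mnl} with $f_{t}$ replaced by $f^{n-1}_{t}$, and let $f^{n}$ be the unique solution, furnished by Proposition \ref{propwellposlin}, of the linear problem \eqref{genlineqn} with drift coefficient $b(t,x,v)=-G(M^{n-1}(t,x))$, zero source, and initial datum $f_{0}$. For this linearization to fall under the hypotheses of Proposition \ref{propwellposlin}, Lemma \ref{lemmamaxprinc} and Lemma \ref{lemmaduhamel}, I first need $M^{n-1}$ to be well-defined, smooth and bounded in $(t,x)$. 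Boundedness of $b$ is automatic since $G$ is bounded by Hypothesis \ref{standingassumptions}; smoothness of $M^{n-1}$ follows from that of $\varphi$ once one has mass conservation $\int f^{n-1}_{t}=1$ and first moment control, together with the uniform lower bound $\langle 1\rangle_{f^{n-1}_{t},\varphi}\ge\varepsilon$ inherited from \eqref{normphi}.

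Next I would derive uniform-in-$n$ a priori estimates. Lemma \ref{lemmamaxprinc} applied inductively yields non-negativity and a uniform $L^{\infty}$ bound $\|f^{n}_{t}\|_{L^{\infty}}\le \|f_{0}\|_{L^{\infty}}$. Multiplying the linear equation for $f^{n}$ by $v^{2k}$ and integrating, the term $-\partial_{v}(vf^{n})$ produces a dissipative $-2k$ contribution that dominates the bounded forcing coming from $b\,\partial_{v}f^{n}$; this gives growth estimates, uniform in $n$, for $\int v^{2k}f^{n}_{t}\,\mathrm{d} x\,\mathrm{d} v$ for every $k$ such that the initial moment is finite, and the assumption $f_{0}\in L^{2,m}$ with $m\ge 8$ ensures a sufficient supply of moments at each stage. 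Differentiating the linear equation in $(x,v)$ and reapplying Lemma \ref{lemmamaxprinc}, together with the weighted $L^{\infty}$ hypothesis on $Df_{0}$ and $D^{2}f_{0}$, provides uniform control of $(|Df^{n}|+|D^{2}f^{n}|)\sqrt{1+v^{6}}$; this in turn ensures that $b=-G(M^{n})$ is smooth with the regularity needed to feed the next iteration.

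The third step is to upgrade uniform boundedness to convergence. Set $g^{n}=f^{n+1}-f^{n}$; then $g^{n}$ solves the linear equation with drift $-G(M^{n})$, zero initial datum, and source
\begin{equation*}
U^{n}(t,x,v)=\partial_{v}\bigl[(G(M^{n}(t,x))-G(M^{n-1}(t,x)))f^{n}_{t}(x,v)\bigr].
\end{equation*}
Because $G$ is globally Lipschitz and $\varphi\ge\varepsilon$, one checks directly from \eqref{Mnl} that the map $f\mapsto M_{f}(t,x)$ is Lipschitz, uniformly in $(t,x)$, from $L^{1}(\sqrt{1+v^{2}}\,\mathrm{d} x\,\mathrm{d} v)$ into $L^{\infty}$; combined with the previously established bounds on $f^{n}$ and $\partial_{v}f^{n}$, this yields an estimate of the form $\|U^{n}(t)\|_{L^{1}(\sqrt{1+v^{2}})}\le C\,\|g^{n-1}_{t}\|_{L^{1}(\sqrt{1+v^{2}})}$. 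Applying Lemma \ref{lemmaduhamel} in the weighted space and iterating then gives the geometric bound $\sup_{t\in[0,T]}\|g^{n}_{t}\|_{L^{1}(\sqrt{1+v^{2}})}\le (CT)^{n}/n!$, so that $\{f^{n}\}$ is Cauchy in $L^{\infty}([0,T];L^{1}(\sqrt{1+v^{2}}\,\mathrm{d} x\,\mathrm{d} v))$. The uniform higher-order estimates allow one to pass to the limit in the weak formulation \eqref{nonlintermwf} and identify the limit as a weak solution of \eqref{nl}.

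Uniqueness follows along the same lines: the difference of two weak solutions in the stated class satisfies a linear equation whose source is controlled by the same Lipschitz bound on $M$, and Lemma \ref{lemmaduhamel} combined with a Gronwall argument forces this difference to vanish. The main obstacle is closing the estimates while juggling three competing demands: the drift $b=-G(M)$ must be \emph{smooth and bounded} to invoke Proposition \ref{propwellposlin}, Lemma \ref{lemmamaxprinc} and Lemma \ref{lemmaduhamel}; the quantity $M$ itself must be controlled \emph{pointwise} so that it can be plugged into $G$; and the iteration must contract in a weighted $L^{1}$ norm in which the Lipschitz dependence of $M$ on $f$ is quantifiable. The requirement $m\ge 8$ for $f_{0}$ and the $\sqrt{1+v^{6}}$ weight on $Df_{0}$, $D^{2}f_{0}$ appear to be calibrated precisely so that, after the $\partial_{v}$ in $U^{n}$ lands on $f^{n}$ and produces additional velocity factors, enough weight survives to close the Duhamel estimate.
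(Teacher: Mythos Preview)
Your overall architecture (Picard iteration on the linearized problem, uniform weighted $L^\infty$ bounds on $f^n$, $Df^n$, $D^2f^n$, contraction in $L^\infty_t L^1_{x,v}(\sqrt{1+v^2})$) is exactly the paper's, but the contraction step as you wrote it does not close.

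The gap is in ``Applying Lemma \ref{lemmaduhamel} in the weighted space''. Lemma \ref{lemmaduhamel} is an estimate in the \emph{flat} $L^1$ norm; it does not give a weighted $L^1$ bound for free. To get a bound on $\|\sqrt{1+v^2}\,(f^{n}-f^{n-1})\|_{L^1}$ you must write the equation satisfied by $\tilde g:=\sqrt{1+v^2}\,(f^{n}-f^{n-1})$ and apply the flat Duhamel to that. Commuting the weight through the operator produces lower-order terms, and one of them (coming from $\sigma\partial_{vv}$) is $-2\sigma\,\dfrac{v}{\sqrt{1+v^2}}\,\partial_v(f^{n}-f^{n-1})$, which is a bounded function times $\partial_v(f^{n}-f^{n-1})$. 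Hence the Duhamel estimate for $\tilde g$ forces you to control $\|\partial_v(f^{n}-f^{n-1})\|_{L^1}$, which is \emph{not} bounded by $\|g^{n-1}\|_{L^1(\sqrt{1+v^2})}$.

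Controlling $\|\partial_v(f^{n}-f^{n-1})\|_{L^1}$ requires a separate Duhamel argument for the equation satisfied by $\partial_v(f^{n}-f^{n-1})$; that equation in turn contains $\partial_x(f^{n}-f^{n-1})$ (from the transport term), so you end up with a coupled system for the first-order differences $\partial_v(f^{n}-f^{n-1})$ and $\partial_x(f^{n}-f^{n-1})$. Closing it needs the uniform bounds on $\partial_{vv}f^{n-1}$ and $\partial_{xv}f^{n-1}$ (to estimate the source terms $(G(M^{n-1})-G(M^{n-2}))\partial_{vv}f^{n-1}$, etc.) and also uniform control of $\partial_x M^{n-1}$. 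This is precisely where the hypothesis on $D^2f_0$ is consumed in the paper; in your outline you use the $D^2f^n$ bounds only to guarantee smoothness of $b$ for the next iterate, which underestimates their role. Without this additional layer, your inequality $\|U^n(t)\|_{L^1(\sqrt{1+v^2})}\le C\|g^{n-1}_t\|_{L^1(\sqrt{1+v^2})}$ is correct but does not by itself yield $\|g^n_t\|_{L^1(\sqrt{1+v^2})}\le C\int_0^t\|g^{n-1}_s\|_{L^1(\sqrt{1+v^2})}\,\mathrm{d} s$.
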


We first explain the strategy of proof and state two necessary technical lemmata, Lemma \ref{lemmaequilim} and Lemma \ref{equilm2}.  We then move on to the proof of Theorem \ref{mainthexun}.

We construct the (unique) solution to the nonlinear Eq.~\eqref{nl} with initial condition $f_0(x,v)$, as the limit of a sequence of functions $\{ f^n(x,v)\}_{n\geq 0}$ which are recursively defined according to the following scheme: we set $f^0_t(x,v) := f_0(x,v)$ and, for any $n \ge 1$, we let $f^n_t(x,v)$ be the solution to the {\em linear} equation,
\begin{equation}
\label{lineqniter}
\begin{cases}
\partial_t f^n_t = -v \partial_x f^n_t - G(M^{n-1}(t,x)) \partial_v f^n_t + \partial_v(vf^n_t)+ \sigma \partial_{vv} f^n_t \,, \\ f^n_0(x,v)=f_0(x,v)\,, \end{cases}
\end{equation}
 where (cfr \eqref{Mnl})
\begin{equation}
	\label{Mn-1}
M^{n-1}(t,x) := \frac{\langle w \rangle_{f^{n-1}_t,\varphi}}{\langle 1 \rangle_{f^{n-1}_t,\varphi}} =  \frac{1}{\rho^{n-1}(t,x)} \int_{\mathbb{T}}\!\mathrm{d} y  \int_{\mathbb{R}}\!\mathrm{d} w\,  f^{n-1}_t(y,w)\, \varphi(x-y) \, w\,,
\end{equation}
with
\[
\rho^{n-1}(t,x) := \langle 1 \rangle_{f^{n-1}_t,\varphi} = \int_{\mathbb{T}}\! \mathrm{d} y \int_{\mathbb{R}}\!\mathrm{d} w\, f^{n-1}_t(y,w) \,  \varphi(x-y)\,.
\]
\begin{remark}\label{remfn}\rm
Existence and uniqueness of Eq.~\eqref{lineqniter} follows from Proposition \ref{propwellposlin} with $U=0, c=0$ and $b(t,x,v)=G(M^{n-1}(t,x))$. Because $G$ is assumed to be bounded,  Proposition \ref{propwellposlin} can be applied to Eq.~\eqref{lineqniter}, which is therefore  well-posed in the space $\mathcal{Y}^m$ for any initial datum in $L^{2,m}$.  Moreover, the following fundamental lemmata hold. 
\end{remark}

\begin{lemma}[Estimates on the solution of  Eq.~\eqref{lineqniter}]  
\label{lemmaequilim}
Let $f^n$ be the solution of Eq.~\eqref{lineqniter} with an initial condition $f_0$; assume the interaction function $\varphi$ appearing in \eqref{Mn-1} is smooth and strictly positive.  Set, for $m\geq 1$,
\begin{equation}
\begin{split}
\label{yz}
& Y^{n,m}_t(x,v) := f^n_t(x,v) \sqrt{1+v^{2m}}\,, \\ & Z^{n,m}_t(x,v) := (D f^n_t(x,v))^\top \sqrt{1+v^{2m}}\,, \\ & H^{n,m}_t(x,v) := D^2f^n_t(x,v) \sqrt{1+v^{2m}}\,.
\end{split}
\end{equation} 
Then the following holds. 
\begin{itemize}
\item[(a)]  If $Y_0^{0,m}=f_0 \sqrt{1+v^{2m}} \in L^{2,2} \cap L^{\infty}$,  then there exists a non-negative function $\alpha_m(t)$ which is bounded on compacts and such that 
\begin{equation}
\label{fest}
\| Y^{n,m}_t\|_{L^\infty} \le \alpha_m(t)\qquad \forall\, n\ge 0\,. 
\end{equation}
\item[(b)] If $f_0 \in L^{2,2m+1}$, $\sqrt{1+v^{2m}} D f_0 \in L^{2,2} \cap L^{\infty}$,  then there exists a non-negative function $\beta_m(t)$ which is  bounded on compacts and such that 
\begin{equation}
\label{sest}
\| Z^{n,m}_t\|_{L^\infty} \le \beta_m(t) \qquad \forall\, n\ge 0\,. 
\end{equation}
\item[(c)] If $(f_0 + |D f_0| + |D^2f_0|) \sqrt{1+v^{2m}} \in L^{2,2} \cap L^{\infty}$,  then there exists a non-negative function $\gamma_m(t)$ which is  bounded on compacts and such that 
\begin{equation}
\label{dest}
\|H^{n,m}_t\|_{L^\infty} \le \gamma_m(t) \qquad \forall\, n\ge 0\,. 
\end{equation}
\end{itemize}
\end{lemma}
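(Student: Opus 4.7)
The plan is induction on $n$, nested with an induction on the order of derivative: first establish (a) for all admissible $m$, then (b), then (c). At each level we use the maximum principle of Lemma \ref{lemmamaxprinc} applied to a weighted version of the relevant equation, closing with Gronwall. The base case $n=0$ is trivial since $f^0_t\equiv f_0$. Uniformity in $n$ stems from two facts: $G$ is bounded, and all rational functions of $v$ of the form $v^j/(1+v^{2m})^{k/2}$ with $j\le km$ arising from the conjugation by the weight are bounded.

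For part (a), set $h(v):=\sqrt{1+v^{2m}}$ and $Y:=Y^{n,m}=f^n h$. Multiplying \eqref{lineqniter} by $h$ and using the identities $h\partial_v f^n=\partial_v Y -(h'/h)Y$ together with the analogue for $h\partial_{vv}f^n$, a direct computation rewrites the equation in the form \eqref{genlineqn},
\[
\partial_t Y + v\partial_x Y + b_n(t,x,v)\partial_v Y - \partial_v(vY) + c_n(t,x,v)Y - \sigma\partial_{vv}Y = 0,
\]
where $b_n = G(M^{n-1})+2\sigma h'/h$ is bounded uniformly in $n$ (because $G$ is bounded and $h'/h=mv^{2m-1}/(1+v^{2m})$ is bounded), and $c_n$ is likewise bounded: its only potentially unbounded summand is $-v(h'/h)$, but $vh'/h\to m$ as $|v|\to\infty$. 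Treating $c_nY$ as a forcing term and applying Lemma \ref{lemmamaxprinc}(b) gives
\[
\|Y_t\|_{L^\infty}\le \|Y_0\|_{L^\infty}+\|c_n\|_\infty\int_0^t\|Y_s\|_{L^\infty}\,\mathrm{d} s,
\]
so Gronwall yields $\|Y_t\|_{L^\infty}\le\alpha_m(t)$ with $\alpha_m$ independent of $n$. For parts (b) and (c), differentiating \eqref{lineqniter} once, respectively twice, in $x$ and $v$ produces linear kinetic equations for each first- or second-order derivative of $f^n$ whose structure is identical to \eqref{lineqniter} modulo forcing terms of two types: (i) $\partial_x^k G(M^{n-1})$ multiplied by $v$-derivatives of $f^n$ of strictly lower order, and (ii) commutators of the transport or linear damping with the differentiation, such as $-\partial_x f^n$ from $[\partial_v,v\partial_x]$ and constant multiples of the derivative itself from the interaction with $\partial_v(v\,\cdot\,)$. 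Conjugating each such equation by $h$ reproduces the structure of (a), so Lemma \ref{lemmamaxprinc} and Gronwall close each estimate, using (a) (or (b), for second derivatives) at the previous derivative order to bound the lower-order forcings uniformly in $n$.

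The main obstacle is controlling $\partial_x^k G(M^{n-1})$ for $k=0,1,2$ uniformly in $n$. Since $\varphi\ge\epsilon>0$, mass is conserved under \eqref{lineqniter} (Lemma \ref{lemmaduhamel} applies, with $U=0$ and with $b=G(M^{n-1})$ being $v$-independent so $\partial_v b = 0$), hence $\rho^{n-1}(t,x)\ge\epsilon$ and
\[
|M^{n-1}(t,x)|\le\epsilon^{-1}\|\varphi\|_\infty\int_{\mathbb{T}}\mathrm{d} y\int_{\mathbb{R}}\mathrm{d} w\,|w|f^{n-1}(y,w).
\]
The first absolute moment on the right is finite and uniformly bounded in $n$, because part (a) with $m$ large gives $|w|f^{n-1}_t(y,w)\le\|Y^{n-1,m}_t\|_{L^\infty}|w|(1+w^{2m})^{-1/2}$, which is integrable. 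Differentiating $M^{n-1}$ in $x$ moves the derivatives onto $\varphi$, which is smooth, producing bounds in terms of $\|\partial_x^k\varphi\|_\infty$, the same first moment, and a positive power of $\epsilon^{-1}$. Since $G$ has bounded derivatives of every order, this yields $\|\partial_x^k G(M^{n-1})\|_\infty\le C_k$ uniformly in $n$, closing the nested induction.
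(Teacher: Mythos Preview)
Your approach is essentially the same as the paper's: conjugate by the weight $h=\sqrt{1+v^{2m}}$, verify that the resulting drift and zero-order coefficients are bounded (using $vh'/h\to m$ and boundedness of $G$), apply the maximum principle \eqref{maxprineqn} to obtain a linear Gronwall inequality, and then propagate to first and second derivatives by controlling $\partial_x^k G(M^{n-1})$ via the uniform first-moment bound coming from part~(a). The only cosmetic difference is that you absorb the forcing into a term $c_n(t,x,v)Y$ on the left, whereas the paper keeps it as an explicit right-hand side $R^{n,m}_t=[A_m(B^{n,m}-v)-\sigma A_m']f^n$; either way one ends up with $\|Y_t\|_{L^\infty}\le\|Y_0\|_{L^\infty}+C\int_0^t\|Y_s\|_{L^\infty}\,\mathrm{d} s$ with $C$ independent of $n$.
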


\begin{proof} 
See Appendix \ref{AppA}.
\end{proof}

\begin{remark}\textup{ If $f_0 \in L^{2,5}$ and $ \sqrt{1+v^4}\, D f_0 \in L^{2,2} \cap L^{\infty}$,  using  point (b) of Lemma \ref{lemmaequilim}, the interpolation inequality \eqref{interp} and the continuity in $x$  of the integral function on the left-hand side (which is true by hypoellipticity), one has 
\begin{equation}
\label{suptir}
\sup_{x\in \mathbb{T}}\int_{\mathbb{R}}\! \mathrm{d} v\, |\partial_v f^{n}_t(x,v) | \le \kappa(t) \qquad \forall\, n\ge 0\,,
\end{equation}
where $\kappa(t)$ denotes (here and in the following) a generic time-dependent function, bounded on compact sets. Similarly,  if $vf_0 \sqrt{1+v^4}$ (which grows at infinity like $f_0 \sqrt{1+v^6}$) belongs to  $L^{\infty} \cap L^{2,2}$ then 
\begin{equation}
\label{suptirv}
\sup_{x\in \mathbb{T}}\int_{\mathbb{R}}\! \mathrm{d} v\, |v| f^{n}_t(x,v) \le \kappa(t) \qquad \forall\, n\ge 0\,. 
\end{equation}
Finally, let $h^n$ be defined as in \eqref{hn}. Since $\partial_vh^{n}= \left( v/\sqrt{1+v^2}\right) f^{n}+ \sqrt{1+ v^2} \partial_v f^{n}$, the same kind of reasoning shows that if $f_0 \in L^{2,7}$ and $\sqrt{1+v^6}\, D f_0 \in L^{2,2} \cap L^{\infty}$, then
\begin{equation}\label{thirdsup}
\sup_{x \in \mathbb{T}} \int\! \mathrm{d} v\, \big(|\partial_vh^{n}_t| +  |f^{n}_t|\big)  \leq \kappa (t) \qquad \forall\, n\ge 0\,. 
\end{equation}
}
\end{remark}

\begin{lemma} 
\label{equilm2}
Let $f^n$ be the solution of Eq.~\eqref{lineqniter}. With the notation introduced so far, if $f_0$ and $\partial_v f_0$ both belong to $L^{2,m}$ with $m\geq 3$, then there exists a constant $c>0$ such that
\begin{equation}
\label{espest}
\|f^n_t\|_{L^2} \leq  e^{ct} \|f_0\|_{L^2}\,, \quad   \int_{\mathbb{T}}\! \mathrm{d} y \int_{\mathbb{R}}\!\mathrm{d} w\,  (w f^n_t)^2 \leq  e^{ct}  \int_{\mathbb{T}}\! \mathrm{d} y \int_{\mathbb{R}}\!\mathrm{d} w\, (1+w^2) \left\vert f_0 \right\vert^2 \,.
\end{equation}
\end{lemma}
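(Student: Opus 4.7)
My approach is to derive two energy estimates for $f^n_t$ by testing the linear equation \eqref{lineqniter} against $f^n_t$ and against $v^2 f^n_t$, integrating over $\mathbb{T}_x\times\mathbb{R}_v$, and handling the resulting terms by integration by parts---in $x$ exploiting periodicity, in $v$ exploiting the decay of $f^n_t$ and $\partial_vf^n_t$ at infinity. Both inequalities in \eqref{espest} should then follow by Gr\"onwall's lemma with a common constant $c$ depending only on $\sigma$ and $\|G\|_{L^\infty}$, the latter being finite by Hypothesis \ref{standingassumptions}.

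\textbf{First estimate.} Multiplying \eqref{lineqniter} by $f^n_t$ and integrating: (i) the transport term $-\int v(\partial_xf^n_t)f^n_t$ vanishes by periodicity in $x$; (ii) the nonlinear drift $-\int G(M^{n-1})(\partial_vf^n_t)f^n_t$ vanishes because $G(M^{n-1})$ does not depend on $v$ and $\int_\mathbb{R}\partial_v(f^n_t)^2\,\mathrm{d} v=0$; (iii) expanding $\partial_v(vf^n_t)=f^n_t+v\partial_vf^n_t$ and integrating $v\partial_v(f^n_t)^2/2$ by parts in $v$, the damping term contributes $\tfrac12\|f^n_t\|_{L^2}^2$; (iv) the diffusion gives $-\sigma\|\partial_vf^n_t\|_{L^2}^2\le 0$. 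I thus obtain $\tfrac{\mathrm{d}}{\mathrm{d} t}\|f^n_t\|_{L^2}^2\le\|f^n_t\|_{L^2}^2$, so Gr\"onwall yields $\|f^n_t\|_{L^2}\le \mathrm{e}^{t/2}\|f_0\|_{L^2}$.

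\textbf{Weighted estimate.} Testing instead against $v^2 f^n_t$, two new ingredients appear. First, $-\int G(M^{n-1})v^2(\partial_vf^n_t)f^n_t$ becomes, after an IBP in $v$, $\int G(M^{n-1})v(f^n_t)^2$, controlled in absolute value by $\tfrac12\|G\|_{L^\infty}\bigl(\|f^n_t\|_{L^2}^2+\|vf^n_t\|_{L^2}^2\bigr)$ via the elementary bound $|v|\le(1+v^2)/2$. Second, the damping term $\int v^2\partial_v(vf^n_t)f^n_t$ becomes, after IBP, the favorable $-\tfrac12\|vf^n_t\|_{L^2}^2$, while the $v^2$-weighted diffusion yields $\sigma\|f^n_t\|_{L^2}^2-\sigma\int v^2(\partial_vf^n_t)^2\le\sigma\|f^n_t\|_{L^2}^2$ after two IBPs. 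Summing this with the inequality of the first step, $\Phi(t):=\|f^n_t\|_{L^2}^2+\|vf^n_t\|_{L^2}^2$ satisfies $\Phi'(t)\le c\,\Phi(t)$ with $c=1+\|G\|_{L^\infty}+2\sigma$, and Gr\"onwall gives $\|vf^n_t\|_{L^2}^2\le\Phi(t)\le\mathrm{e}^{ct}\Phi(0)$, which is the second assertion in \eqref{espest}.

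\textbf{Main obstacle.} The only delicate point is the rigorous justification of the integrations by parts in $v$: all boundary terms at $|v|=\infty$ must vanish, which requires enough weighted integrability of both $f^n_t$ and $\partial_vf^n_t$ uniformly in $x$, together with enough regularity in $t$ to make the differential inequalities classical. This is precisely the role of the hypothesis $f_0,\partial_vf_0\in L^{2,m}$ with $m\ge 3$: Proposition \ref{propwellposlin} applied to \eqref{lineqniter} places $f^n$ in $\mathcal{Y}^m$, and applied once more to the equation obtained by differentiating \eqref{lineqniter} in $v$---whose coefficients inherit the required smoothness and boundedness, since $G,G'$ are bounded by Hypothesis \ref{standingassumptions} and $M^{n-1}$ depends only on $(t,x)$---gives the analogous regularity for $\partial_vf^n$. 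Together, these weighted bounds permit the standard cutoff-and-limit argument that makes the boundary terms vanish, and the computations above become rigorous.
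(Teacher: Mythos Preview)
Your proof is correct and follows essentially the same approach as the paper: test \eqref{lineqniter} against $f^n_t$ and against $v^2 f^n_t$, integrate by parts, and apply Gr\"onwall to the resulting differential inequalities for $\|f^n_t\|_{L^2}^2$ and $\|vf^n_t\|_{L^2}^2$. The only minor difference is that the paper justifies the vanishing of the boundary terms at $|v|=\infty$ by invoking Lemma~\ref{lemmaequilim} (which gives pointwise decay of $v^3(f^n_t)^2$ and $v^2 f^n_t\,\partial_v f^n_t$), whereas you appeal to Proposition~\ref{propwellposlin} applied to the $v$-differentiated equation; both routes are legitimate.
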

\begin{proof} 
They are consequence of the following identities, 
\[
\begin{split}
\frac{\mathrm{d}}{\mathrm{d} t} \|f^n_t\|_{L^2}^2 & =  \|f^n_t\|_{L^2}^2 - \sigma\|\partial_v f^n_t\|_{L^2}^2 \,, \\ 
\frac{\mathrm{d}}{\mathrm{d} t} \|vf^n_t\|_{L^2}^2 & = 2 \int_{\mathbb{T}}\!\mathrm{d} x\! \int_{\mathbb{R}}\!\mathrm{d} v\, G(M^{n-1}(t)) v (f^n_t)^2  \\ & \quad -  \|vf^n_t\|_{L^2}^2 + 2  \|f^n_t\|_{L^2}^2 - 2  \sigma\|v\partial_vf^n_t\|_{L^2}^2\,,
\end{split}
\]
which can be easily verified by \eqref{lineqniter} and integration by parts. In doing such integrations by parts,  notice that, by Lemma \ref{lemmaequilim}, $v^3 (f^n_t)^2$ and $v^2 f^n_t \partial_v f^n_t$ vanish at infinity. 
\end{proof}

\begin{proof}[Proof of Theorem \ref{mainthexun}]
In the sequel, we will assume $t\in [0,T]$ where $T>0$ is any fixed time, and we will denote by $C$ a generic positive constant, whose numerical value (possibly depending on $f_0$ and $T$) may change from line to line. Analogously, $\kappa(t)$ will denote a generic time-dependent function, bounded on compact sets; the specific expression of this function may change from line to line.  
With the notation introduced so far, the scheme of proof is classical: after setting
$$
\xi^n(t) := \max_{s\in [0,t]} \big(\|f^{n}_s - f^{n-1}_s\|_{L^1} + \|h^{n}_s-h^{n-1}_s\|_{L^1}\big)\,,
$$
with
\begin{equation}
\label{hn}
h^n_t(x,v) :=Y_t^{n,1}=  f^n_t(x,v) \sqrt{1+v^2}\,, 
\end{equation}
the main point (and the lengthy part) of the proof consists in showing the contraction
\begin{equation}\label{eqn:conclusion}
\xi^n(t) \le C \int_0^t\!\mathrm{d} s\, \xi^{n-1}(s)\qquad \forall\, t\in [0,T]\,.
\end{equation}
Indeed,  by iterating \eqref{eqn:conclusion} we deduce that $f^n$ and $\sqrt{1+v^2}f^n$ converge to $f$ and $\sqrt{1+v^2}f$, respectively, in $L^1(\mathbb{T}_x \times \mathbb{R}_v)$  (hence, also $vf^n \xrightarrow{L^1} vf$). To prove that the limit of the sequence $f^n$ is actually a solution of the nonlinear PDE (and in particular to pass to the limit in the nonlinear term \eqref{nonlintermwf}) one uses such a convergence plus Lemma \ref{lemmaequilim}, which implies the existence of a subsequence which converges weak* in $L^{\infty}([0,T] \times \mathbb{T}_x \times \mathbb{R}_v)$. We omit the details, which are standard. 
This proves existence of a solution of the nonlinear problem. Uniqueness can be obtained with calculations similar to those that allow one to derive \eqref{eqn:conclusion}. We therefore only need to prove \eqref{eqn:conclusion}. This is done in three steps (with calculations that are standard but lengthy, so in places we only indicate how to complete them). 

\smallskip

{$\bullet $ \em Step 1.} This step consists in showing the following inequality
\begin{equation}
\label{contrl1}
\| f^n_t-f^{n-1}_t\|_{L^1}  \le C \int_0^t\!\mathrm{d} s\, \kappa(s)  \, \big( \| h^{n-1}_s-h^{n-2}_s\|_{L^1} + \|f^{n-1}_s- f^{n-2}_s\|_{L^1} \big)\,.
\end{equation}
In order to show the above, we consider the function $g=g_t(x,v)$ defined as $g := f^n - f^{n-1}$. Such a function satisfies the {\em linear} equation,  
\begin{equation}
\label{diffeqn}
\begin{split}
& \partial_t g + v \partial_x g + G(M^{n-1}) \partial_v g - \partial_v({vg})  - \sigma\partial_{vv}g = \left[ G(M^{n-2})-G(M^{n-1})\right] \partial_v f^{n-1}\,, \\ & g_0 = 0\,, 
\end{split}
\end{equation}
which is of the form \eqref{genlineqn} with 
\[
b =  G(M^{n-1})\,, \quad c=0\,, \quad  U = \left[ G(M^{n-2})-G(M^{n-1})\right] \partial_v f^{n-1}\,.
\]
By Proposition \ref{propwellposlin} (applied to Eq.~\eqref{lineqniter}),  $\partial_vf^{(n-1)} \in L^{2,m}$ when the initial datum $f_0$ belongs to $L^{2,m}$ ($m\geq 2$). So, under this assumption on $f_0$, we have, $U \in  L^2([0,T] \times \mathbb{T}; \mathcal{H}^{-1, m}_v)$. We therefore have well posedness of Eq.~\eqref{diffeqn} in $\mathcal{Y}^m$,  if the initial data of Eq.~\eqref{lineqniter} belong to $L^{2,m}$ (for $m\geq 2$). 

Observe now that $U \in L^1([0,T] \times \mathbb{T} \times \mathbb{R})$ since $(\partial_v f_0)\sqrt{1+v^{2m}} \in L^{2,2} \cap L^{\infty}$ ($m \ge 2$).  This can be seen by using point (b) of  Lemma \ref{lemmaequilim}, together with the  interpolation inequality \eqref{interp}.
Under such an assumption on the initial datum $f_0$, the Duhamel formula of Lemma \ref{lemmaduhamel} holds. Using this fact and the Lipschitzianity of $G$,  we obtain
\[
\|f^n_t-f^{n-1}_t\|_{L^1} \le \int_0^t\!\mathrm{d} s\, \| (\partial_v f^{n-1}_s) (M^{n-1}(s)-M^{n-2}(s))\|_{L^1}\,.
\]
Looking at the integrand on the right-hand side,
\[
 \| (\partial_v f^{n-1}_t) (M^{n-1}(t)-M^{n-2}(t))\|_{L^1} = \int_{\mathbb{T}}\!\mathrm{d} x\, | M^{n-1}(t)-M^{n-2}(t) | \int_{\mathbb{R}}\!\mathrm{d} v\, |\partial_v f^{n-1}_t|  \,.
\]
The inner integral can be estimated by using \eqref{suptir}. Moreover, 
\begin{equation}
\label{mm}
\begin{split}
& |M^{n-1}(t)-M^{n-2}(t)|  = \bigg| \frac{\langle w \rangle_{f^{n-1}_t,\varphi}}{\rho^{n-1}(t)} -  \frac{\langle w \rangle_{f^{n-2}_t,\varphi}}{\rho^{n-2}(t)}\bigg| \\ & \qquad\quad \le \frac{\big|\langle w \rangle_{f^{n-1}_t,\varphi} - \langle w \rangle_{f^{n-2}_t,\varphi}\big| \rho^{n-2}(t) + \big |\langle w \rangle_{f^{n-2}_t,\varphi}\big|  \, \big|\rho^{n-1}(t)-\rho^{n-2}(t)\big|}{\rho^{n-1}(t) \rho^{n-2}(t)} \\ & \qquad\quad \le C  \big( \| h^{n-1}_t-h^{n-2}_t\|_{L^1} + \|f^{n-1}_t- f^{n-2}_t\|_{L^1} \big)\,,
\end{split}
\end{equation}
having used the fact that $\varphi$ is positive and uniformly bounded below,  $\varphi \geq \epsilon >0$,   and \eqref{suptirv}.  Therefore \eqref{contrl1} follows.

\smallskip

$\bullet$ {\em Step 2. } The second step consists in bounding the first addend on the right-hand side of \eqref{contrl1}. In particular, we will show the following inequality,
\begin{equation}
\label{boh}
\begin{split}
 \| h^n_t - h^{n-1}_t\|_{L^1} &  \leq C \int_0^t\!\mathrm{d} s\ \int_0^s\!\mathrm{d} r\, \kappa(s) \big(\|h^{n-1}_r-h^{n-2}_r\|_{L^1} +\|f^{n-1}_r-f^{n-2}_r\|_{L^1} \big) \\ 
& \quad + C  \int_0^t\!\mathrm{d} s\, \big(\|h^{n-1}_s-h^{n-2}_s\|_{L^1} + \|f^{n-1}_s-f^{n-2}_s\|_{L^1} \big) \\ & \quad + C \int_0^t\!\mathrm{d} s\, \| \partial_v(f^n_s -f^{n-1}_s) \|_{L^1}\,. 
\end{split}
\end{equation}
To this end, let  $\tilde g=\tilde g_t(x,v)$ be defined as $\tilde g:=h^n-h^{n-1}$. From  \eqref{hn} and  \eqref{Yeq} we deduce that the function  $\tilde g$ solves the equation,
\begin{equation}
\label{gtilde} 
\partial_t \tilde g + v \partial_x\tilde g + G(M^{n-1}(t)) \partial_v\tilde g - \partial_v (v\tilde g) + \tilde g - \sigma \partial_{vv} \tilde g = R^n\,, \quad \tilde g_0 = 0\,, 
\end{equation}
where, setting $A_1(v) := v/\sqrt{1+v^2}$ and recalling the definition \eqref{Yeqr},
\[
\begin{split}
R^n & = \big[G(M^{n-2}) - G(M^{n-1}) \big] \partial_v h^{n-1}+ h^n - h^{n-1} +   R^{n,1}- R^{(n-1),1} \\ & \quad -\, 2 \sigma \frac{A_1(v)}{\sqrt{1+v^2}}\partial_v\tilde g \\ & = \big[G(M^{n-2}) - G(M^{n-1}) \big] \big(\partial_v h^{n-1} - A_1(v) f^{n-1} \big)  - 2\sigma A_1(v) \partial_v(f^n-f^{n-1})\\ & \quad +  (f^n-f^{n-1})  \bigg(G(M^{n-1}) A_1(v) + \frac{1}{\sqrt{1+v^2}} - \sigma A_1'(v)  \bigg)\,.
\end{split}
\]
Because $A_1, A_1':=\partial_v A_1$ and   $G$ are  bounded, all the functions that multiply the differences $(f^n-f^{n-1})$ and $\partial_v(f^n-f^{n-1})$ on the right-hand side above are bounded. Analogous reasoning holds for the difference $R^{n,1}-R^{(n-1),1}$ (see \eqref{Yeqr}).  Moreover,  $\partial_vh^{n-1}= A_1(v) f^{n-1}+ \sqrt{1+ v^2} \partial_v f^{n-1}$, so $R^n \in L^{2,1}$ as soon as $f_0 \in L^{2,3}$, so that the well posedness of \eqref{gtilde} is ensured.  By the Lipschitzianity of $G$ we can then write
\[
|R^n| \le C \big(|\partial_vh^{n-1}_t| +  |f^{n-1}_t|\big) |M^{n-1}-M^{n-2}| + C |f^n-f^{n-1}| + C |\partial_v(f^n-f^{n-1})| \,.
\]
Now observe that the right-hand side of \eqref{gtilde} is in $L^1$ as, by our assumptions on $f_0$, it is in $L^{\infty,2}$.  We can then apply  the Duhamel formula (to \eqref{gtilde}) and use \eqref{thirdsup} to obtain
\begin{equation}
\begin{split}
 \| h^n_t  - h^{n-1}_t & \|_{L^1} \le C \int_0^t\!\mathrm{d} s\, \| f^n_s - f^{n-1}_s\|_{L^1}  + C \int_0^t\!\mathrm{d} s\, \| \partial_v(f^n_s -f^{n-1}_s) \|_{L^1} \\ &  \quad + C \int_0^t\!\mathrm{d} s\, \kappa(s) \int_{\mathbb{T}}\mathrm{d} x\, |M^{n-1}(s) - M^{n-2}(s)|  \\
&  \le C \int_0^t\!\mathrm{d} s\ \int_0^s\!\mathrm{d} r\, \kappa(r) \big(\|h^{n-1}_r-h^{n-2}_r\|_{L^1} +\|f^{n-1}_r-f^{n-2}_r\|_{L^1} \big) \\ 
&  \quad + C  \int_0^t\!\mathrm{d} s\, \big(\|h^{n-1}_s-h^{n-2}_s\|_{L^1} + \|f^{n-1}_s-f^{n-2}_s\|_{L^1} \big) \\ & \quad + C \int_0^t\!\mathrm{d} s\, \| \partial_v(f^n_s -f^{n-1}_s) \|_{L^1}\,,
\end{split}
\end{equation}
having used \eqref{mm} and \eqref{contrl1} in the last inequality. Hence \eqref{boh} is proved. 

$\bullet$ {\em Step 3.} We now need an estimate of the last term on the right-hand side of \eqref{boh}. To this end,  acting like in the proof of Lemma \ref{lemmaequilim}, we observe that the differences $p(x,v) = \partial_v (f^n - f^{n-1})$ and $q(x,v) = \partial_x (f^n-f^{n-1})$ satisfy PDEs with a structure similar to the one of the equations encountered so far (see \eqref{eqforz} and comments thereafter), so one can apply again the strategy that we have already used and that we only sketch in this case. By using \eqref{eqforz} we find an equation for $p(x,v)$. The Duhamel formula applied to such an equation gives 
\begin{equation}
\label{duhampavf}
\begin{split}
\| \partial_v f^n_t - \partial_v f^{n-1}_t \|_{L^1} & \le \int_0^t\!\mathrm{d} s\, \|(G(M^{n-1}(s) - G(M^{n-2}(s)) \partial _{vv} f^{n-1}_s  \|_{L^1} \\ & \quad + \int_0^t\!\mathrm{d} s\, \| \partial_x f^n_s - \partial_x f^{n-1}_s \|_{L^1} \,.
\end{split}
\end{equation} 
Acting similarly, one also gets
\begin{equation}
\label{duhampaxf}
\begin{split}
\| \partial_x f^n_t - \partial_x f^{n-1}_t \|_{L^1} & \le \int_0^t\!\mathrm{d} s\, \|\left[ G(M^{n-1}(s) - G(M^{n-2}(s) \right] \partial _{xv} f^{n-1}_s  \|_{L^1} \\ & \quad + \int_0^t\!\mathrm{d} s\, \| \partial_x G(M^{n-1}(s)) \partial_v f^n_s -  \partial_x G(M^{n-2}(s)) \partial_v f^{n-1}_s \|_{L^1}\,. 
\end{split}
\end{equation}

Using point (c) of Lemma \ref{lemmaequilim} and the interpolation inequality \eqref{interp} we have,
\[
\sup_{x \in \mathbb{T}} \int\! \mathrm{d} v\, \big(|\partial_{xv}f^{n-1}_t| +  |\partial_{vv} f^{n-1}_t|\big)  \le \kappa(t)\,.
\]
Therefore, the first addends on the right-hand side of  \eqref{duhampavf} and \eqref{duhampaxf} can be treated in a standard way. Concerning the last addend on the right-hand side of \eqref{duhampaxf}, from \eqref{stgx} and point (b) of Lemma \ref{lemmaequilim},
\[
\begin{split}
& \| \partial_x G(M^{n-1}) \partial_v f^n - \partial_x G(M^{n-2}) \partial_v f^{n-1} \|_{L^1} \\ &\quad \le \| (\partial_x G(M^{n-1}) - \partial_x G(M^{n-2})) \partial_v f^n \|_{L^1} + \| \partial_x G(M^{n-2})  \partial_v (f^n - f^{n-1}) \| _{L^1} \\ &\quad  \le C \|\partial_x G(M^{n-1}) - \partial_x G(M^{n-2})\|_{L^1} + C \| \partial_v (f^n - f^{n-1}) \|_{L^1}\,.
\end{split}
\]
Now, again by \eqref{stgx} and using the Lipschitzianity of $G$,
\[
|\partial_x G(M^{n-1}) - \partial_x G(M^{n-2}) | \le C \big(| M^{n-1} -  M^{n-2}|  + |\partial_x M^{n-1} - \partial_x M^{n-2}|\big) \,.
\]
Using \eqref{stgx1} we have,
\[
\begin{split}
|\partial_x M^{n-1} - & \partial_x M^{n-2}|  \le \bigg| \frac{\langle w \rangle_{f^{n-1},\varphi'}}{\langle 1 \rangle_{f^{n-1},\varphi}} - \frac{\langle w \rangle_{f^{n-2},\varphi'}}{\langle 1 \rangle_{f^{n-2},\varphi}}\bigg| \\ & \quad + \bigg|\frac{\langle w \rangle_{f^{n-1},\varphi}\langle 1 \rangle_{f^{n-1},\varphi'}}{\langle 1 \rangle_{f^{n-1},\varphi}^2} - \frac{\langle w \rangle_{f^{n-2},\varphi}\langle 1 \rangle_{f^{n-2},\varphi'}}{\langle 1 \rangle_{f^{n-2},\varphi}^2} \bigg| \\ & \le C \big(|\langle w \rangle_{f^{n-1},\varphi'} - \langle w \rangle_{f^{n-2},\varphi'}| + |\langle w \rangle_{f^{n-1},\varphi} - \langle w \rangle_{f^{n-2},\varphi}| \big)\\ & \quad + C |\langle w \rangle_{f^{n-2},\varphi}|\, \big(|\langle 1 \rangle_{f^{n-1},\varphi'} - \langle 1 \rangle_{f^{n-2},\varphi'}| + |\langle 1 \rangle_{f^{n-1},\varphi} - \langle 1 \rangle_{f^{n-2},\varphi}| \big) \\ & \le C\big(\|h^{n-1}-h^{n-2}\|_{L^1} + \|f^{n-1} - f^{n-2}\|_{L^1}\big)\,,
\end{split}
\]
where we used that $\varphi$ is strictly positive in the second inequality, and \eqref{suptirv} in the last inequality. 

From \eqref{contrl1}, \eqref{boh}, \eqref{duhampavf}, \eqref{duhampaxf}, and the above estimates, Eq.~\eqref{eqn:conclusion} follows. This concludes the proof. 
\end{proof}

\section{Particle system}
\label{sec:4}

In this section we work under Hypothesis \ref{standingassumptions}. Consider the  system of $N$ interacting particles, each of them having position ad valecity $(x_t^{i,N}, v_t^{i,N})$ described by the SDE \eqref{parsys1}-\eqref{parsys2}.  We recall that because $G$ is assumed to be bounded and Lipschitz continuous, existence and uniqueness (for every $N$ fixed) of the strong solution of the  system \eqref{parsys1}-\eqref{parsys2} follows from standard SDE theory.

\noindent
Let $\mathbb{P}\mathrm{r}_1$ be the space of probability measures on $\mathbb{T} \times \mathbb{R}$  with finite first moment and let
\[
\mathfrak{C}:= C([0,T]; \mathbb{P}\mathrm{r}_1)
\]
be the space of continuous functions from $[0,T]$ to the space $\mathbb{P}\mathrm{r}_1$. Consider the empirical measure, 
\[
S_t^N(\mathrm{d} x, \mathrm{d} v) := \frac 1N \sum_{i=1}^N \delta_{(x_t^{i,N}, v_t^{i,N})}(\mathrm{d} x, \mathrm{d} v)\,.
\]
The empirical measure is, for each fixed $t>0$ (and for each fixed $N \in \mathbb{N}$), a  random measure;  in particular, $S_t^N: \Omega \rightarrow \mathbb{P}\mathrm{r}_1$.\footnote{We will show that if the initial data have finite first moment, then the same property propagates at time $t>0$, for every fixed $N$; this can be seen directly from \eqref{parsys1}-\eqref{parsys2} or  Proposition \ref{uniquenessmeassol} and its proof.} Therefore, (for each fixed $N \in \mathbb{N}$ and $T>0$) the stochastic process $S^N = \{S_t^N\}_{t\in [0,T]} $ can be seen as a random variable with values in $\mathfrak{C}$. We will denote by $Q^N$ the law of the random variable $S^N$, so $\{Q^N\}_N$ is a sequence of probability measures on $\mathfrak{C}$.  As customary, if $\alpha$ is a measure on $\mathbb{T} \times \mathbb{R}$ and $\psi$ a function on the same space, we use in this section the notation
$$
\langle \alpha, \psi\rangle := \int_{\mathbb{T} \times \mathbb{R} } \psi(x,v) \, \alpha(\mathrm{d} x, \mathrm{d} v) \,.
$$
With this notation in place, the main result of this section is Theorem \ref{teo:convpartsys} below: roughly speaking,  as $N \rightarrow \infty$, the empirical distribution $S^N$ converges to the solution of equation \eqref{nl}.  

\begin{theorem}
\label{teo:convpartsys}
Suppose the initial data of the particle system \eqref{parsys1}-\eqref{parsys2} are such that
\begin{equation}
\label{conpsid}
\sup_{N\in\mathbb{N}} \mathbb{E} \max_{i=1,\ldots,N} \big| v^{i,N}_0 \big|^4 < +\infty
\quad \mbox{and} \quad \mathbb{E} \left[ \left\vert  \langle S_0^{N}, \psi \rangle - \langle f_0, \psi \rangle\right\vert \wedge 1 \right]  \stackrel{N \rightarrow\infty}{ \longrightarrow } 0, 
\end{equation}
where $f_0$ is the initial datum of \eqref{nl} and the above is supposed to hold for every $\psi \in C_0^{\infty}(\mathbb T \times \mathbb{R})$. 
 Then, with the notation introduced so far, the following holds: for each $t>0$, the sequence of random measures $\{S_t^N\}$ converges to a deterministic measure,  $S_t^*$, which has a density with respect to the Lebesgue measure. Such a density is a function in  the space $L^1(\sqrt{1+v^2}\, \mathrm{d} x \, \mathrm{d} v)$ and coincides with the unique solution of equation \eqref{nl} (given by Theorem \ref{mainthexun}). 

\end{theorem}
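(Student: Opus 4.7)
The plan is the classical three-step approach to mean-field limits via empirical measures: first establish tightness of the laws $\{Q^N\}$ on $\mathfrak{C}$; then identify any weak limit point as a measure-valued solution of \eqref{nl}; and finally invoke the (earlier stated) uniqueness result for measure-valued solutions, Proposition \ref{uniquenessmeassol}, to conclude that the full sequence converges to the deterministic path $t\mapsto f_t\,\mathrm{d} x\,\mathrm{d} v$, where $f_t$ is the unique solution of \eqref{nl} provided by Theorem \ref{mainthexun}.

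For the tightness step, I would first derive from \eqref{parsys1}--\eqref{parsys2} and the boundedness of $G$ a uniform-in-$N$ moment bound of the form $\sup_N \mathbb{E}\sup_{t\in[0,T]}\max_{i\le N}|v_t^{i,N}|^p <\infty$ for some $p\in(1,4]$, using \eqref{conpsid} and a Gronwall estimate on the Langevin-type SDE. This bound gives uniform integrability and lets one work in $\mathbb{P}\mathrm{r}_1$. Tightness of $\{Q^N\}$ in $\mathfrak{C}$ then reduces, via a standard Aldous-type criterion, to tightness of $\{S_t^N\}$ in $\mathbb{P}\mathrm{r}_1$ for each fixed $t$ plus equicontinuity in $t$ of the real-valued processes $t\mapsto \langle S_t^N,\psi\rangle$ for a countable family of $\psi\in C_b^2(\mathbb{T}\times\mathbb{R})$ separating measures. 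The equicontinuity follows from Itô's formula applied to $\psi(x_t^{i,N},v_t^{i,N})$, summed over $i$: the drift is bounded uniformly (using boundedness of $G$ and of the derivatives of $\psi$) and the martingale part has $O(1/N)$ quadratic variation.

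For the identification step, pick a limit point $Q^\ast$ and a realization $\mu=(\mu_t)_{t\in[0,T]}$ of the canonical process under $Q^\ast$. Applying Itô's formula to the particle system one rewrites, for each $\psi\in C_c^\infty(\mathbb{T}\times\mathbb{R})$,
\[
\langle S_t^N,\psi\rangle = \langle S_0^N,\psi\rangle + \int_0^t\!\mathrm{d} s\,\langle S_s^N,\mathcal{A}_{S_s^N}\psi\rangle + \mathcal{M}_t^{N,\psi}\,,
\]
where $\mathcal{A}_\mu\psi(x,v) := v\partial_x\psi + [G(M_\mu(x))-v]\partial_v\psi + \sigma\partial_{vv}\psi$ and $\mathcal{M}^{N,\psi}$ is a martingale of quadratic variation of order $1/N$. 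Combining this with \eqref{conpsid}, Doob's inequality and the continuity properties of $\mu\mapsto G(M_\mu(\cdot))$ (see below), one concludes that $Q^\ast$-a.s. the path $\mu$ satisfies the weak formulation of \eqref{nl} with initial datum $\mu_0 = f_0\,\mathrm{d} x\,\mathrm{d} v$.

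Finally, by Proposition \ref{uniquenessmeassol}, such a measure-valued solution is unique; since by Theorem \ref{mainthexun} the function $t\mapsto f_t\,\mathrm{d} x\,\mathrm{d} v$, with $f_t\in L^\infty([0,T];L^1(\sqrt{1+v^2}\,\mathrm{d} x\,\mathrm{d} v))$, provides one such solution, the limit $Q^\ast$ is the Dirac mass at this path, and convergence of the whole sequence follows. The main obstacle I anticipate is the passage to the limit in the nonlinear ratio defining $M(s,x)$: both numerator and denominator depend on the empirical measure, and one must rule out rarefaction events where the denominator becomes small. The lower bound $\varphi\ge\epsilon>0$ from Hypothesis \ref{standingassumptions} is essential here, because it yields $\langle 1\rangle_{\mu,\varphi}(x)\ge\epsilon$ uniformly in $x$ and in the probability measure $\mu$; this makes the map $\mu\mapsto G(M_\mu(\cdot))$ Lipschitz on bounded subsets of $\mathbb{P}\mathrm{r}_1$ in a Wasserstein-type metric, which is precisely what the continuity argument in the identification step requires.
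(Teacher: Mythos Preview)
Your proposal is correct and follows essentially the same four-step architecture as the paper: It\^o's formula for $\langle S_t^N,\psi\rangle$, tightness of $\{Q^N\}$ in $\mathfrak{C}$, identification of any limit point as a measure-valued solution via vanishing of the martingale part and continuity of $\mu\mapsto G(M_\mu(\cdot))$ (exploiting $\varphi\ge\epsilon$ exactly as you anticipate), and uniqueness via Proposition~\ref{uniquenessmeassol}. The only minor deviation is that the paper establishes tightness through the Kipnis--Landim reduction to real-valued processes $\langle S_t^N,\psi_k\rangle$ combined with Kolmogorov's criterion (with exponent $p=4$, which is why the fourth-moment assumption in \eqref{conpsid} appears), rather than an Aldous-type criterion; but this is a cosmetic difference and your moment bound serves the same purpose.
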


\begin{proof}
	
The proof is in four steps. 

\smallskip
\noindent
{\bf Step 1.} We start by finding the equation satisfied by $S_t^N$ (in weak sense). For any $\psi \in C_0^{\infty}(\mathbb{T} \times \mathbb{R})$, define
\[
\langle S_t^N, \psi \rangle := \iint_{\mathbb{T}\times \mathbb{R}} \frac 1N \sum_{i=1}^N \delta_{(x_t^{i,N}, v_t^{i,N})}(\mathrm{d} x, \mathrm{d} v) \psi(x,v) = \frac{1}{N} \sum_{i=1}^N \psi(x_t^{i,N}, v_t^{i,N})\,.
\]
Applying It\^o formula, one finds 
\begin{align}
\mathrm{d}\langle S_t^N, \psi \rangle & = 
\frac{1}{N} \sum_{i=1}^N(\partial_x\psi)(x_t^{i,N}, v_t^{i,N}) v_t^{i,N} \mathrm{d} t 
- \frac{1}{N} \sum_{i=1}^N(\partial_v\psi)(x_t^{i,N}, v_t^{i,N}) v_t^{i,N} \mathrm{d} t
\label{eqnextended1}\\
&\quad +\frac{1}{N} \sum_{i=1}^N (\partial_v\psi)(x_t^{i,N}, v_t^{i,N})
 G \left(\frac{\frac 1N \sum_{j=1}^N v_t^{j,N} \varphi(x_t^{i,N}- x_t^{j,N})}{\frac 1N \sum_{j=1}^N \varphi(x_t^{i,N}- x_t^{j,N})} \right) \mathrm{d} t \label{eqnextended2}\\
&\quad + \frac{\sigma}{N} \sum_{i=1}^N (\partial^2_v\psi)(x_t^{i,N}, v_t^{i,N}) \mathrm{d} t + \frac{\sigma}{N} \sum_{i=1}^N (\partial_v\psi)(x_t^{i,N}, v_t^{i,N})\, \mathrm{d} W_t^i \,. \label{eqnextended}
\end{align}
Letting 
$\varphi_x(\cdot) = \varphi (x - \cdot)$, we can write 
\[
\frac{1}{N} \sum_{j=1}^N \varphi(x - x_t^{j,N})
= \langle S_t^N, \varphi_{x} \rangle =: {\boldsymbol {{\varphi}}}^{S_t^N}(x)\,. 
\]
Similarly,  if $P$ is the function $P(x,v)=v$, we have,
\[
\frac{1}{N} \sum_{j=1}^N \varphi(x - x_t^{j,N}) v^{j,N}_t = \langle S_t^N, \varphi_{x} P\rangle =: {\boldsymbol {\tilde{\varphi}}}^{S_t^N}(x)\,. 
\]
More in general, if $\alpha$ is a measure, we define 
$$
{\boldsymbol {{\varphi}}}^{\alpha}(x) := \langle \alpha, \varphi_x \rangle
\quad  \mbox{and} \quad 
{\boldsymbol {\tilde{\varphi}}}^{\alpha}(x):= \langle \alpha, \varphi_x P\rangle \,.
$$
Therefore, 
\begin{equation}
\label{eqnforstn}
\begin{split}
\langle S_t^N, \psi \rangle & =\langle S_0^N, \psi \rangle + \int_0^t 
\mathrm{d} s \left [\langle S_s^N, (\partial_x \psi) P \rangle - \langle S_s^N, (\partial_v \psi) P \rangle  + \sigma \langle S_s^N, \partial^2_v \psi \rangle \right] \\ 
&  + \int _0^t \left\langle S_s^N,  (\partial_v \psi)\, G\left(\frac{{\boldsymbol {\tilde{\varphi}}}^{S_s^N} }{ {\boldsymbol {{\varphi}}}^{S_s^N}} \right)\right\rangle \mathrm{d} s +\mathcal{M}_t^{N, \psi}\,,
\end{split}
\end{equation}
having set 
\[
\mathcal{M}_t^{N, \psi}:=  \frac{\sigma}{N} \int_0^t \sum_{i=1}^N (\partial_v \psi)(x_s^{i,N}, v_s^{i,N}) \mathrm{d} W_s^i \,.
\]

\smallskip
\noindent
{\bf Step 2.} We want to show that the sequence of measures $\{Q^N\}_{N\in \mathbb{N}}$, or, equivalently, the sequence of random variables $\{S^N\}_{N\in \mathbb{N}}$,  is tight. To this end, we use \cite[Prop.~1.7]{KipnisLandim}. Namely, let $\{\psi_k\}_k$ be a dense subset of $C_b(\mathbb{T} \times \mathbb{R})$. Then the sequence $S^N$ is tight if and only if, for every $k \in \mathbb{N}$ the real-valued stochastic process
\begin{equation}
\label{realvalproc}
H_t^N:= \langle S_t^N, \psi_k \rangle
\end{equation}
is tight (it would be more correct to include the index $k$ in the notation for $H_t^N$, but we drop such an index to avoid cumbersome notations). In other words, \cite[Prop.~1.7]{KipnisLandim} reduces the problem of studying tightness of a family of probability measures on $C([0,T]; \mathbb{P}\mathrm{r}_1)$ to the simpler problem of studying tightness of a family of probability measures on $C([0,T]; \mathbb{R})$. We can now apply Kolmogorov's criterion to the process $H_t^N$; we therefore need to prove 
\begin{align}
& \textrm{i) } \sup_{N\in\mathbb{N}} \mathbb{E} \left\vert H_t^N - H_u^N \right\vert^p \leq C \left\vert t- u\right\vert^{\alpha} \quad \mbox{ for some $C>0$ and $p, \alpha>1$;} \label{Kolm}\\
& \textrm{ii) } \lim_{K \to \infty} \sup_{N\in\mathbb{N}} \mathbb{P} \left(\left\vert H_0^N- {a}\right\vert>K \right) =0 \quad  \mbox{for some $a \in \mathbb{R}$.}
\label{Kolm2}
\end{align}
This is the content of Lemma \ref{lemmaKolm}. Notice that condition \eqref{Kolm2} is automatically satisfied, for example for $a=0$, thanks to the boundedness of the functions $\psi_k$.  So the proof of Lemma \ref{lemmaKolm} is concerned with checking \eqref{Kolm}. 

\smallskip
\noindent
{\bf Step 3.} From step 2, every (sub-)sequence of $S^N$ admits a weakly convergent (sub-) sequence. We want to show that if $S^*=\{S_t^*\}_{t\geq 0}$ (with law $Q^*$) is the limit of any such subsequences, then $S^*$  is a weak measure-valued solution of Eq.~\eqref{nl} (in particular this implies that $S^*$ is deterministic). 
 We clarify that a path $\{\pi_t\} \subset \mathfrak{C}$ is a weak measure-valued solution of \eqref{nl} with initial datum $\pi_0 \in \mathbb{P}\mathrm{r}_1$ if the identity
\begin{equation}
\label{weakmeasureformulation}
\begin{split}
\langle \pi_t, \psi \rangle  & =  \langle \pi_0, \psi \rangle + \int_0^t\! \mathrm{d} s\, \left[ \langle \pi_s, (\partial_x \psi) P \rangle  - \langle \pi_s, (\partial_v \psi) P \rangle  + \sigma \langle \pi_s, \partial^2_v \psi \rangle  \right]  \\ & \quad + \int_0^t\! \mathrm{d} s\, \left\langle \pi_s,  (\partial_v \psi)\, G\left(\frac{{\boldsymbol {\tilde{\varphi}}}^{\pi_s} }{ {\boldsymbol {{\varphi}}}^{\pi_s}} \right)\right\rangle 
\end{split}
\end{equation}
is satisfied for every test function $\psi \in C_0^{\infty}(\mathbb{T} \times \mathbb{R})$. To this end,  for each $\psi \in C_0^{\infty}(\mathbb{T} \times \mathbb{R})$, consider the functional $J_{\psi}^{\pi_0} \colon \mathfrak{C} \longrightarrow \mathbb{R}_+$, defined as follows
\begin{equation}
\label{deffun}
\begin{split}
J_\psi^{\pi_0}(&\nu)   = \sup_{t \in [0,T]} \bigg| \langle \nu_t, \psi \rangle - \langle \pi_0, \psi \rangle  - \int_0^t\!\mathrm{d} s \big[ \langle \nu_s, (\partial_x \psi) P \rangle  \\ & \quad \, - \langle \nu_s, (\partial_v \psi) P \rangle + \sigma \langle \nu_s, \partial^2_v \psi \rangle  \big]  - \int_0^t\! \mathrm{d} s\, \left\langle \nu_s,  (\partial_v \psi)\, G\left(\frac{{\boldsymbol {\tilde{\varphi}}}^{\nu_s} }{ {\boldsymbol {{\varphi}}}^{\nu_s}} \right)\right\rangle  \bigg| \wedge 1\,.
\end{split}
\end{equation}
We stress that here $\pi_0$ is the initial datum for \eqref{weakmeasureformulation}.
The rationale underlying the choice of this functional can be understood by comparing with  \eqref{eqnforstn}: if the path $\nu=\{\nu_t\}_t$ is $S^N=\{S_t^N\}_t$, then 
$J_{\psi}^{\pi_0}(S^N) = \sup_{t \in [0,T]} \left\vert  \mathcal{M}_t^{N, \psi} \right\vert \wedge 1$; roughly speaking, the functional $J_{\psi}^{\pi_0}$ associates to a given path the martingale part of equation \eqref{eqnforstn}. We want to show that
\[
Q^*(\{\pi \in \mathfrak{C} : J_\psi^{\pi_0}(\pi)=0 \mbox{ for every } \psi \in C_0^{\infty}(\mathbb{T} \times \mathbb{R})\}) =1\,.
\]
Because $J_\psi^{\pi_0}$ is a positive functional, to prove the above it suffices to show that $\mathbb{E}^{Q^*}J_\psi^{\pi_0} = 0$ for every $\psi$. The functional $J_\psi^{\pi_0}$ is bounded and continuous, see Lemma \ref{lemmacontinuousfunctional}. Therefore, if $C>0$ is a generic positive constant, for every $\psi \in C_0^{\infty}(\mathbb{T} \times \mathbb{R})$, we have 
\[
\mathbb{E}^{Q^*} J_\psi^{\pi_0} \le  \liminf_{k\rightarrow \infty} \mathbb{E}^{Q^{N_k}} J_\psi^{\pi_0} \,.
\]
The right-hand side of the above is easy to estimate; indeed, by definition of $S^N$,
\begin{align*}
\mathbb{E}^{Q^{N}} J_{\psi}^{\pi_0} & = \mathbb{E} J_{\psi}^{\pi_0}(S^{N_k})  = 
\mathbb{E} \left[ \sup_{t \in [0,T]} \left\vert  \langle S_0^{N_k}, \psi \rangle - \langle \pi_0, \psi \rangle+ \mathcal{M}_t^{N, \psi} \right\vert \wedge 1 \right]\, \\
& \le  \mathbb{E} \left[ \left\vert  \langle S_0^{N_k}, \psi \rangle - \langle \pi_0, \psi \rangle\right\vert \wedge 1 \right] +  \left( \mathbb{E}  \sup_{t \in [0,T]} \left\vert  \mathcal{M}_t^{N, \psi} \right\vert^2 \right)^{1/2} \,.
\end{align*}
The first addend in  the above goes to zero by assumption, if we take $\pi_0$ to be a measure with density $f_0$.  Regarding the second:
\begin{align*}
  \left( \mathbb{E}  \sup_{t \in [0,T]} \left\vert  \mathcal{M}_t^{N, \psi} \right\vert^2 \right)^{1/2} & \le \frac{C}{N} \left( \mathbb{E} \left\vert \int_0^T \partial_v\psi(x_t^{i,N}, v_t ^{i,N})\, \mathrm{d} W^i\right\vert^2\right)^{1/2} \le \frac{C}{N} \stackrel{N \rightarrow \infty}{\longrightarrow }0 \,. 
\end{align*}

\smallskip
\noindent
{\bf Step 4.} As a consequence of Step 2 and Step 3, we know that Eq.~\eqref{nl} admits at least one measure-valued solution. It remains to prove that such a measure-valued solution is unique. Indeed (under the assumptions on the initial datum stated in Subsection \ref{sec:3.3}), this solution has to coincide with the solution in $L^1(1+\left\vert v\right\vert)$ that we found in Subsection \ref{sec:3.3}.\footnote{Notice that if $f_t(x,v)$ is a weak solution of \eqref{nl} in the sense \eqref{nonlintermwf}, then $\pi_t(\mathrm{d} x,\mathrm{d} v) = f_t(x,v)\, \mathrm{d} x\, \mathrm{d} v$ is a measure that satisfies \eqref{weakmeasureformulation}.} So we are left with showing uniqueness of measure-valued solutions of \eqref{nl}. This is the content of Proposition \ref{uniquenessmeassol}.  
\end{proof}

\begin{lemma}
\label{lemmacontinuousfunctional} 
For each function $\psi \in C_0^{\infty}(\mathbb{T} \times \mathbb{R})$, the functional $J_\psi^{\pi_0}$ (defined in \eqref{deffun}) is continuous. 
\end{lemma}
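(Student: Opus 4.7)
Suppose $\nu^n \to \nu$ in $\mathfrak{C}$, so that $\delta_n := \sup_{t\in[0,T]} W_1(\nu^n_t,\nu_t) \to 0$. Since $x\mapsto |x|\wedge 1$ is $1$-Lipschitz, it suffices to show that the expression inside the outer absolute value in \eqref{deffun}, viewed as a function of $t$, converges to its $\nu$-counterpart uniformly on $[0,T]$. The plan is to treat the three ``linear-in-$\nu$'' terms first and then the nonlinear term, which is the main obstacle.

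For the linear terms, note that because $\psi\in C_0^\infty(\mathbb{T}\times\mathbb{R})$, each of $\psi$, $(\partial_x\psi)P$, $(\partial_v\psi)P$ and $\sigma\partial_v^2\psi$ has compact support and is therefore bounded Lipschitz. By Kantorovich-Rubinstein duality, for any such test function $\phi$ one has $|\langle \nu^n_s-\nu_s,\phi\rangle| \leq \mathrm{Lip}(\phi)\,W_1(\nu^n_s,\nu_s) \leq C\delta_n$ uniformly in $s$, and the corresponding contributions are $O(\delta_n)$ after integration in $s$ and taking $\sup_t$.

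The hard part is the nonlinear integral, because the function $(y,w)\mapsto \varphi(x-y)\,w$ entering $\tilde{\boldsymbol{\varphi}}^{\nu_s}(x)$ has linear growth in $w$ and is not bounded Lipschitz. The plan is to establish two uniform convergences: $\sup_{x,s}|\boldsymbol{\varphi}^{\nu^n_s}(x) - \boldsymbol{\varphi}^{\nu_s}(x)| \to 0$ and $\sup_{x,s}|\tilde{\boldsymbol{\varphi}}^{\nu^n_s}(x) - \tilde{\boldsymbol{\varphi}}^{\nu_s}(x)| \to 0$. The first is immediate from Kantorovich-Rubinstein, as $\varphi(x-\cdot)$ has Lipschitz constant $\|\varphi'\|_\infty$ independent of $x$. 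For the second, I would use a truncation argument: fix a $1$-Lipschitz cutoff $\theta_R$ with $\theta_R(w)=w$ for $|w|\leq R$ and $|\theta_R|\leq R$; then $\varphi(x-y)\theta_R(w)$ is bounded Lipschitz with constant $O(R)$ uniformly in $x$, so its integral against $\nu^n_s - \nu_s$ is of order $R\,\delta_n$, whereas the tail $\|\varphi\|_\infty\int_{|w|>R}|w|\,\mathrm{d}(\nu^n_s+\nu_s)$ must be made small uniformly in $n$ and $s$. The key point here is uniform integrability of the first moments: $s\mapsto \int|w|\,\mathrm{d}\nu_s$ is continuous on the compact interval $[0,T]$ hence bounded, and combined with $W_1(\nu^n_s,\nu_s)\leq\delta_n$ this gives the required $(R\to\infty)$-uniform control of the tails across both $n$ and $s$. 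Choosing $R$ large to absorb the tails and then $n$ large closes this step.

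Combining these two uniform estimates with the lower bound $\boldsymbol{\varphi}^\alpha \geq \epsilon>0$ from \eqref{normphi} (valid for every probability measure $\alpha$) and the global Lipschitzianity of $G$ from Hypothesis \ref{standingassumptions} yields $\sup_{x,s}|H_{\nu^n_s}(x) - H_{\nu_s}(x)|\to 0$, where $H_\alpha(x) := G(\tilde{\boldsymbol{\varphi}}^\alpha(x)/\boldsymbol{\varphi}^\alpha(x))$. Writing
\[
\langle \nu^n_s,(\partial_v\psi)H_{\nu^n_s}\rangle - \langle \nu_s,(\partial_v\psi)H_{\nu_s}\rangle = \langle \nu^n_s - \nu_s, (\partial_v\psi)H_{\nu^n_s}\rangle + \langle \nu_s,(\partial_v\psi)(H_{\nu^n_s} - H_{\nu_s})\rangle,
\]
the second term is dominated by $\|\partial_v\psi\|_\infty \sup_{x}|H_{\nu^n_s}(x) - H_{\nu_s}(x)|$, hence is uniformly small in $s$. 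For the first, the integrand $(x,v)\mapsto (\partial_v\psi)(x,v)\,H_{\nu^n_s}(x)$ has compact support in $v$ and is Lipschitz in $(x,v)$ with a constant uniform in $n$ and $s$ (its $x$-derivative is controlled by $\|\partial_v\psi\|_{C^1}$, $\|G'\|_\infty$, $\|\varphi'\|_\infty/\epsilon^2$ and the uniformly bounded first moments), so another application of Kantorovich-Rubinstein gives an $O(\delta_n)$ bound uniformly in $s$. Integrating in $s$ and taking $\sup_t$ completes the proof of continuity.
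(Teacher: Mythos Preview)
Your argument is correct and follows essentially the same outline as the paper's proof: treat the linear pairings by direct convergence, and for the nonlinear term use the Lipschitzianity of $G$, the lower bound $\boldsymbol{\varphi}^{\alpha}\ge\epsilon$, and the same fraction decomposition that appears in \eqref{mm}. The paper's proof is a brief sketch that invokes pointwise weak convergence and refers to \eqref{mm}; your version is considerably more careful in two places where the paper glosses over the details, namely the uniformity in $t$ (needed because of the outer $\sup_{t\in[0,T]}$) and the convergence of $\tilde{\boldsymbol{\varphi}}^{\nu^n_s}$, which involves the unbounded test function $(y,w)\mapsto\varphi(x-y)w$ and therefore requires more than bare weak convergence. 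Your use of Kantorovich--Rubinstein duality together with the truncation/uniform-integrability step is exactly what is needed to make those points rigorous, and it is a genuine improvement in precision over the paper's sketch, even if the underlying strategy is the same.
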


\begin{proof}
Let $\pi^N= \{\pi_t^N\}_t$ be sequence in $\mathfrak{C}$.  Suppose $\pi^N$ converges in $\mathfrak{C}$ to $\pi$. Then, for every $t>0$,  $\pi_t^N$ converges weakly to $\pi_t$. Hence, by definition $\langle \pi_t^N, \psi \rangle \rightarrow \langle \pi_t, \psi \rangle$. The same argument can be applied to all the other terms on the first line of \eqref{deffun}. 
To pass  to the limit in the nonlinear term (the last term in \eqref{deffun}),   we first act  with manipulations analogous to those in \eqref{mm} and then conclude with the same argument as above.  We omit the details.  
\end{proof}

\begin{lemma}
\label{lemmaKolm}
The process $H_t^N$ defined in \eqref{realvalproc} is tight on $C([0,T]; \mathbb{R})$. 
\end{lemma}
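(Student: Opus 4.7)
The plan is to verify the two conditions of Kolmogorov's criterion, with the substantive one being the moment estimate \eqref{Kolm}; condition \eqref{Kolm2} is immediate since the uniform bound $|H_0^N| \le \|\psi_k\|_\infty$ lets us choose $a=0$. Before turning to \eqref{Kolm}, I will arrange for each $\psi_k$ in the dense subset to belong to $C_0^\infty(\mathbb{T}\times\mathbb{R})$: a countable subset of $C_0^\infty$ functions is dense enough for the Kipnis--Landim tightness criterion that reduces the proof to the scalar processes $H^N$, and smoothness with compact support gives access to the It\^o-type decomposition \eqref{eqnforstn} while also rendering $(\partial_x\psi_k)(x,v)\,v$ and $(\partial_v\psi_k)(x,v)\,v$ bounded, in spite of the linear growth of $P(x,v)=v$.

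Fixing such a $\psi\equiv\psi_k$ and $0\le u\le t\le T$, I will subtract \eqref{eqnforstn} at the two times to obtain
\begin{equation*}
H_t^N - H_u^N = \int_u^t \Phi_s^N\,\mathrm{d} s + \bigl(\mathcal{M}_t^{N,\psi}-\mathcal{M}_u^{N,\psi}\bigr),
\end{equation*}
where $\Phi_s^N$ collects the four drift integrands $\langle S_s^N,(\partial_x\psi)P\rangle$, $-\langle S_s^N,(\partial_v\psi)P\rangle$, $\sigma\langle S_s^N,\partial_v^2\psi\rangle$, and $\langle S_s^N,(\partial_v\psi)\,G(\boldsymbol{\tilde\varphi}^{S_s^N}/\boldsymbol{\varphi}^{S_s^N})\rangle$. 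By the choice of $\psi_k$ together with the boundedness of $\partial_v^2\psi$ and of $G$ (Hypothesis \ref{standingassumptions}), each integrand is bounded by a constant $C_\psi$ pathwise, so the drift contributes $|\int_u^t\Phi_s^N\,\mathrm{d} s|\le C_\psi(t-u)$ almost surely.

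For the martingale part I will use the Burkholder--Davis--Gundy inequality at exponent $p=4$. The independence of the Brownian motions $W^i$ gives
\begin{equation*}
\langle\mathcal{M}^{N,\psi}\rangle_{[u,t]} = \frac{\sigma^2}{N^2}\sum_{i=1}^N\int_u^t\bigl|(\partial_v\psi)(x_s^{i,N},v_s^{i,N})\bigr|^2\,\mathrm{d} s \le \frac{\sigma^2\|\partial_v\psi\|_\infty^2}{N}(t-u),
\end{equation*}
whence BDG yields $\mathbb{E}|\mathcal{M}_t^{N,\psi}-\mathcal{M}_u^{N,\psi}|^4\le C_\psi(t-u)^2/N^2$. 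Combining with the drift bound,
\begin{equation*}
\mathbb{E}|H_t^N-H_u^N|^4 \le C_\psi\bigl((t-u)^4+(t-u)^2/N^2\bigr) \le C_\psi(T)\,(t-u)^2,
\end{equation*}
which is precisely \eqref{Kolm} with $p=4$ and $\alpha=2>1$. Kolmogorov's criterion then delivers tightness of $\{H^N\}$ on $C([0,T];\mathbb{R})$ and completes the proof.

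The only conceptually delicate step is the reduction to smooth, compactly supported test functions: the dense class $\{\psi_k\}$ must both be rich enough to implement the Kipnis--Landim reduction to tightness of scalar processes and consist of functions regular enough for \eqref{eqnforstn} and for the drift bound. Once this reduction is in place, the argument is a textbook Kolmogorov--BDG estimate and the boundedness of $G$ from Hypothesis \ref{standingassumptions} is what keeps the nonlinear drift term under control.
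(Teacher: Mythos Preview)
Your proof is correct and follows the same overall architecture as the paper: Kolmogorov's criterion with $p=4$, $\alpha=2$, using the It\^o decomposition \eqref{eqnforstn} and BDG for the martingale term. The one substantive difference is in how you control the drift terms $\langle S_s^N,(\partial_x\psi)P\rangle$ and $\langle S_s^N,(\partial_v\psi)P\rangle$: you exploit the compact support of $\psi_k\in C_0^\infty$ to make $(\partial_x\psi)v$ and $(\partial_v\psi)v$ bounded pathwise, whereas the paper instead invokes the assumption \eqref{conpsid} and argues that $\sup_N\mathbb{E}\max_i|v_t^{i,N}|^4$ stays bounded in time, then uses this moment bound to dominate the fourth power of the drift. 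Your route is cleaner for this lemma in isolation, since it sidesteps the need to propagate the velocity moment bound; the paper's route, on the other hand, would work even for test functions that are merely $C_b^2$ without compact support, and the moment propagation is in any case implicit elsewhere in the argument. Your remark that the reduction to $C_0^\infty$ test functions is the delicate point is well taken---the paper is in fact slightly informal here, writing ``dense subset of $C_b$'' in Step~2 while relying on the It\^o expansion derived only for $\psi\in C_0^\infty$.
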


\begin{proof}
From the comments of Step 2, we only need to verify \eqref{Kolm}. We will show that this tightness bound is satisfied for  with $p=4$ (and, consequently,  $\alpha = 2$). 

An explicit expression for  the difference $\left\vert H_t^N-H_u^N\right\vert^4$ is found by using \eqref{eqnextended} (just integrate such an expression between $u$ and $t$ and then estimate from the above with the fourth power of each addend). The terms coming from the addends in \eqref{eqnextended2} and \eqref{eqnextended} can be easily estimated from above (independently of $N$) thanks to  
 the boundedness of $G$ and to the boundedness of all the derivatives of $\psi_k$.  As for the terms coming from the addends in \eqref{eqnextended1}, we observe that if the property \eqref{conpsid} holds for the initial datum of the particle system \eqref{parsys1}-\eqref{parsys2}, then the same property  holds at subsequent times as well. That is,  one can show that  $\mathbb{E} \sup_i \big| v^{i,N}_t \big|^4$ is bounded. With this observation in place, the conclusion of the proof follows. 
\end{proof}

\begin{proposition}
\label{uniquenessmeassol} 
There exists a unique measure-valued solution of Eq.~\eqref{weakmeasureformulation}. Moreover, if the initial datum $\pi_0$ has finite first moment, i.e. $\langle \pi_0, \left\vert v \right\vert \rangle < \infty$, then the solution $\pi_t$ has finite first moment as well. 
\end{proposition}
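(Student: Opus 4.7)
I would adopt a probabilistic coupling strategy, closely linked to the mean-field picture already used in Step~3 of the proof of Theorem~\ref{teo:convpartsys}. Given two measure-valued solutions $\pi^1,\pi^2$ of \eqref{weakmeasureformulation} with the same initial datum $\pi_0$, I associate to each $\pi^i$ the time-inhomogeneous \emph{linear} SDE
\[
\mathrm{d} X_t^i = V_t^i\,\mathrm{d} t\,,\qquad \mathrm{d} V_t^i=\bigl[G\bigl(M^{\pi^i}(t,X_t^i)\bigr)-V_t^i\bigr]\mathrm{d} t+\sqrt{2\sigma}\,\mathrm{d} W_t\,,
\]
driven by a \emph{common} Brownian motion $W$ and starting from a common random variable $(X_0,V_0)\sim\pi_0$. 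Since $G$ is bounded and $\varphi$ is smooth with $\varphi\geq\varepsilon>0$, once $\pi^i$ is viewed as a given datum the drift $G(M^{\pi^i}(t,\cdot))$ is bounded and Lipschitz in $x$, so each SDE has a unique strong solution. The key observation is that the law $\mathrm{Law}(X_t^i,V_t^i)$ solves the same linear Fokker--Planck equation as $\pi_t^i$ (with drift $G(M^{\pi^i})$ now \emph{frozen}); well-posedness of that linear measure-valued problem (e.g.\ via Proposition~\ref{propwellposlin} applied to a regularization, or by standard duality) forces $\mathrm{Law}(X_t^i,V_t^i)=\pi_t^i$. Consequently, uniqueness will follow if one shows that the coupled processes coincide almost surely.

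\textbf{Propagation of the first moment.} I would establish this first, since it is needed both in the statement and in the coupling estimates. Choosing in \eqref{weakmeasureformulation} a $C_0^\infty$ approximation of $\psi(x,v)=\sqrt{1+v^2}$ (whose first two $v$-derivatives are uniformly bounded, and whose $x$-derivative vanishes), one obtains
\[
\frac{\mathrm{d}}{\mathrm{d} t}\bigl\langle \pi_t,\sqrt{1+v^2}\bigr\rangle \,=\, -\Bigl\langle \pi_t,\tfrac{v^2}{\sqrt{1+v^2}}\Bigr\rangle +\Bigl\langle \pi_t,\tfrac{v}{\sqrt{1+v^2}}G(M^{\pi_t})\Bigr\rangle +\sigma\Bigl\langle \pi_t,\tfrac{1}{(1+v^2)^{3/2}}\Bigr\rangle\,,
\]
and discarding the non-positive first term yields $\frac{\mathrm{d}}{\mathrm{d} t}\langle \pi_t,\sqrt{1+v^2}\rangle\leq \|G\|_\infty+\sigma$. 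This gives linear-in-$t$ control of $\langle \pi_t,|v|\rangle$ and proves the second claim. A completely analogous computation with $\psi(v)=v^2$ (truncated) propagates the second moment whenever it is finite initially, and one also has $\langle \pi_t,v^2\rangle\leq e^{-t}\langle\pi_0,v^2\rangle+C$ for a constant depending on $\|G\|_\infty$ and $\sigma$.

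\textbf{Gronwall on the coupling.} Set $\Delta(t):=\mathbb{E}\bigl[(X_t^1-X_t^2)^2+(V_t^1-V_t^2)^2\bigr]$. Because the noises cancel in $V^1-V^2$, a direct Itô computation gives $\frac{\mathrm{d}}{\mathrm{d} t}\Delta(t)\leq C\,\Delta(t)+\mathbb{E}|G(M^{\pi^1}(t,X_t^1))-G(M^{\pi^2}(t,X_t^2))|^2$. Splitting the $G$-difference as
\[
\bigl|G(M^{\pi^1}(t,X_t^1))-G(M^{\pi^2}(t,X_t^2))\bigr|\leq \mathrm{Lip}(G)\bigl(|M^{\pi^1}(t,X_t^1)-M^{\pi^1}(t,X_t^2)|+|M^{\pi^1}(t,X_t^2)-M^{\pi^2}(t,X_t^2)|\bigr)\,,
\]
the first piece is controlled by the $x$-Lipschitz constant of $M^{\pi^1}$, which, using $\varphi\geq\varepsilon$ and smoothness of $\varphi$, is bounded by a constant times $\langle \pi_t^1,|v|\rangle$ (hence locally bounded in $t$ by the previous paragraph). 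For the second piece, writing $\varphi^{\pi^i}(x)=\mathbb{E}\varphi(x-X_t^i)$ and $\tilde\varphi^{\pi^i}(x)=\mathbb{E}[\varphi(x-X_t^i)V_t^i]$ under the coupling, the same quotient manipulation as in \eqref{mm} (using the lower bound $\varphi\geq\varepsilon$ and the Lipschitzianity of $\varphi$) gives $|M^{\pi^1}(t,x)-M^{\pi^2}(t,x)|^2\leq C(t)\,\Delta(t)$, uniformly in $x$. Combining everything, $\frac{\mathrm{d}}{\mathrm{d} t}\Delta(t)\leq C(t)\Delta(t)$ with $\Delta(0)=0$, and Gronwall yields $\Delta\equiv 0$, hence $\pi_t^1=\pi_t^2$ for all $t$.

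\textbf{Main obstacle.} The delicate point is the control of the mixed term $\mathbb{E}[|V_t^1||X_t^1-X_t^2|]$ that appears when one bounds $|\tilde\varphi^{\pi^1}(x)-\tilde\varphi^{\pi^2}(x)|$: making this linear in $\sqrt{\Delta(t)}$ forces a Cauchy--Schwarz step and thus requires a \emph{second-moment} bound on $V_t^i$, whereas the proposition only assumes finite first moment on $\pi_0$. This is the step that requires care. Two remedies are available: (i) if $\pi_0$ already has a finite second moment, propagate it via the $\psi=v^2$ test function as noted above; (ii) otherwise, exploit the Ornstein--Uhlenbeck representation $V_t^i=e^{-t}V_0+\int_0^t e^{-(t-s)}G(M^{\pi^i}(s,X_s^i))\,\mathrm{d} s+\sqrt{2\sigma}\int_0^t e^{-(t-s)}\,\mathrm{d} W_s$, in which boundedness of $G$ forces $V_t^i-e^{-t}V_0\in L^p$ for every $p$, hence $V_t^i\in L^2$ for any $t>0$. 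One then runs the Gronwall argument on $[\delta,T]$ and sends $\delta\downarrow 0$ using continuity in $\mathbb{P}\mathrm{r}_1$, obtaining uniqueness in full generality.
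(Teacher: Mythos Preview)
Your approach is genuinely different from the paper's. The paper does not use a coupling argument; instead it passes to the mild formulation via the semigroup $e^{t\mathcal{A}}$ generated by the linear part $\mathcal{A}=v\partial_x-v\partial_v+\sigma\partial_{vv}$, and then closes a Gronwall inequality on the weighted total-variation type distance
\[
\tilde d(\pi_t,\nu_t)=\sup_{\|\psi\|_\infty\le 1}|\langle \pi_t-\nu_t,\psi\rangle|+\sup_{\|\psi\|_\infty\le 1}|\langle \pi_t-\nu_t,|v|\psi\rangle|\,,
\]
using the hypoelliptic smoothing bound $\|\partial_v(e^{(t-s)\mathcal{A}}\psi)\|_\infty\le C(t-s)^{-1/2}\|\psi\|_\infty$ (and its $|v|$-weighted analogue). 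The point is that this route never needs more than the first moment of the solutions; the singular kernel $1/\sqrt{t-s}$ is what replaces any appeal to higher moments. By contrast, your coupling argument is the classical Sznitman route and is perfectly natural here; it has the advantage of being more elementary (no semigroup estimates), but it trades that for a moment requirement on the velocity.

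The gap in your proposal is precisely at remedy~(ii). From the representation $V_t^i=e^{-t}V_0+\int_0^t e^{-(t-s)}G(\cdots)\,\mathrm{d} s+\sqrt{2\sigma}\int_0^t e^{-(t-s)}\,\mathrm{d} W_s$ you correctly get $V_t^i-e^{-t}V_0\in L^p$ for every $p$, but this does \emph{not} give $V_t^i\in L^2$ unless $V_0\in L^2$: the term $e^{-t}V_0$ retains exactly the integrability of $V_0$. Nor does restarting the argument at time $\delta>0$ help, since $V_\delta^i$ still only has a first moment and, in addition, at time $\delta$ the two measures may already differ, so you lose the common initial condition. Consequently the Cauchy--Schwarz step on $\mathbb{E}[|V_t^i|\,|X_t^1-X_t^2|]$ cannot be justified under the sole assumption $\langle\pi_0,|v|\rangle<\infty$, and your Gronwall inequality does not close. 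Your remedy~(i) is fine, but it proves a weaker statement than the proposition (uniqueness in $\mathbb{P}\mathrm{r}_2$ rather than $\mathbb{P}\mathrm{r}_1$); this would in fact suffice for the application to Theorem~\ref{teo:convpartsys}, since the fourth-moment assumption \eqref{conpsid} propagates and forces the limit points to lie in $\mathbb{P}\mathrm{r}_2$, but it does not give Proposition~\ref{uniquenessmeassol} as stated. If you want to stay with coupling and keep only the first-moment hypothesis, you would need to replace the $L^2$ distance $\Delta(t)$ by an $L^1$-type or bounded-Lipschitz quantity and argue more carefully; this is essentially what the paper's semigroup estimate is doing analytically.
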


\begin{proof}
The existence claim is a result of Step 2 and Step 3, so we concentrate on proving uniqueness. This is done by using the same strategy adopted in \cite[Section 6]{FlandoliCapasso}, \cite[Sect.~3]{Flandoli} and \cite[Thm.~4.2]{Meleard}. Here we follow \cite[Sect.~6]{FlandoliCapasso} and \cite[Section 3]{Flandoli}  so we outline the strategy but we don't repeat all the details. 

Let $\mathcal{A}$ be the second order differential operator defined on smooth functions (of $x \in \mathbb{T}$ and $v \in \mathbb{R}$) as
\begin{equation}\label{cA}
 \mathcal{A}:= v \partial_x  - v \partial_v  + \sigma \partial_{vv}
\end{equation}
 and let $\mathcal{A}^*$ denote its formal adjoint (in the flat $L^2$ space). Then we can formally  rewrite \eqref{weakmeasureformulation} as 
\[
\pi_t= e^{t\mathcal{A}^*} \pi_0 + \int_0^t  \mathrm{d} s \,  e^{(t-s)\mathcal{A}^*} \left[ G\left(\frac{{\boldsymbol {\tilde{\varphi}}}^{\pi_s} }{ {\boldsymbol {{\varphi}}}^{\pi_s}}\right) (\partial_v \pi_s) \right] \,. 
\]
That is, 
\begin{equation}\label{solmild}
\langle \pi_t, \psi\rangle = 
\langle \pi_0 ,  e^{t\mathcal{A}} \psi\rangle  - \int_0^t \! \mathrm{d} s \, \left\langle  G\left(\frac{{\boldsymbol {\tilde{\varphi}}}^{\pi_s} }{ {\boldsymbol {{\varphi}}}^{\pi_s}}\right) \pi_s, \partial_v  
\left( e^{(t-s) \mathcal{A}} \psi  \right) \right\rangle \,.
\end{equation}
Let us now introduce the following metric on $\mathbb{P}\mathrm{r}_1$:
\[
d(\alpha, \beta):= \sup_{\|\psi\|_{\infty}\leq 1} \left\vert \langle \alpha-\beta,  \psi \rangle \right\vert\,,
\]
where the supremum is taken over all measurable bounded functions $\psi: \mathbb{T} \times \mathbb{R} \rightarrow \mathbb{R} $ with $\|\psi\|_{\infty}\leq 1$. Notice that the metric $d$ can be equivalently defined as 
\[
d(\alpha, \beta):= \sup_{\psi \in C_0^{\infty}, \|\psi\|_{\infty}\leq 1} \left\vert \langle \alpha-\beta,  \psi \rangle \right\vert\,. 
\]
Showing the equivalence of the two definitions  is a simple application of the dominated convergence theorem (see \cite[Remark 3.2]{Flandoli}). 
Let $\pi= \{\pi_{t}\}_{t\geq 0}$ and $\nu= \{\nu_{t}\}_{t\geq 0}$ be two solutions of \eqref{weakmeasureformulation}, with the same initial datum. The aim of the rest of the proof is to close an integral inequality on the quantity $\sup_{t\in [0,T]} d(\pi_t, \nu_t)$, thereby showing that $\pi_t=\nu_t$ for every $t \in [0,T]$. However (similarly to what happened in  the calculations of Section \ref{sec:3.3}) closing the inequality on $d(\pi_t, \nu_t)$ is not possible. For this reason we will instead work with the quantity
\[
\tilde d (\pi_t, \nu_t) := d(\pi_t, \nu_t) + \sup_{\|\psi\|_{\infty}\leq 1} \langle  \pi_t - \nu_t  ,\left\vert v \right\vert \psi \rangle \,.
\]
We will repeatedly use the inequality
\begin{equation}
\label{pullsup}
\left\vert \langle \pi_t-\nu_t, (1+ \left\vert v \right\vert) \psi \rangle \right\vert \le \|\psi\|_{\infty} \tilde{d}(\pi_t, \nu_t) \,.
\end{equation}
We will show in some detail how to study the term $\langle \pi_t-\nu_t,  \psi \rangle $, the term $\langle \pi_t-\nu_t,  \left\vert v \right\vert \psi \rangle $ can then be treated analogously (and we will only point out the slight difference, without redoing the calculation). Throughout the proof $C>0$ will be a generic constant. From \eqref{solmild} we have,
\[
|\langle \pi_t-\nu_t, \psi \rangle| \le I_1+ I_2\,,
\]
where
\begin{align*}
I_1 & = \int_0^t\! \mathrm{d} s \,  \left\vert \left\langle  \left( G\left(\frac{{\boldsymbol {\tilde{\varphi}}}^{\nu_s} }{ {\boldsymbol {{\varphi}}}^{\nu_s}}\right) - G \left(\frac{{\boldsymbol {\tilde{\varphi}}}^{\pi_s} }{ {\boldsymbol {{\varphi}}}^{\pi_s}}\right)  \right)  \nu_s , \partial_v \left( e^{(t-s) \mathcal{A}} \psi  \right)  \right\rangle \right\vert\,,  \\ I_2  & =  \int_0^t\! \mathrm{d} s \,  \left\vert \left\langle G \left(\frac{{\boldsymbol {\tilde{\varphi}}}^{\pi_s} }{ {\boldsymbol {{\varphi}}}^{\pi_s}}\right) (\nu_s-\pi_s), \partial_v \left( e^{(t-s) \mathcal{A}} \psi  \right)  \right\rangle \right\vert\,. 
\end{align*}
In view of \eqref{pullsup}, the term $I_2$ is readily bounded,
\[
I_2 \le \|G\|_\infty \int_0^t\! \mathrm{d} s\,  \big\|\partial_v \big( e^{(t-s) \mathcal{A}} \psi\big)\big\|_\infty \tilde{d}(\nu_s, \pi_s) \leq C  \int_0^t\! \mathrm{d} s\, \frac{\tilde{d}(\nu_s, \pi_s)}{\sqrt{t-s}}\,,
\]
having used, in the last inequality, the following heat kernel type of bound
\begin{equation}
\label{heatkernel1}
\partial_v \left(e^{t\mathcal{A}} \psi \right) \leq \frac{C}{\sqrt{t-s}} \|\psi\|_{\infty}\,.
\end{equation}
The above estimate is classical and holds for every continuous and bounded function $\psi$, see for example \cite{OttobreZegarlinski} or, for a more explicit proof adapted to this case, see \cite[Lemma 12]{FlandoliCapasso}. As for the term $I_1$, after using the lipschitzianity of $G$ and manipulations similar to those in \eqref{mm}, one finds
\[
I_1 \leq C (I_{1,1} + I_{1,2})\,,
\]
where
\begin{align*}
I_{1,1}  & := \int_0^t\! \frac{\mathrm{d} s}{\sqrt{t-s}} \iint_{\mathbb{T}\times \mathbb{R}}\! \left\vert 
{\boldsymbol {\tilde{\varphi}}}^{\pi_s} - {\boldsymbol {\tilde{\varphi}}}^{\nu_s} \right\vert
\nu_s(\mathrm{d} x,\mathrm{d} v)  \\ & = \int_0^t\! \frac{\mathrm{d} s}{\sqrt{t-s}} \int_{\mathbb{T}\times \mathbb{R}}\! \nu_s (\mathrm{d} x,\mathrm{d} v) \left\vert \int_{\mathbb{T}\times \mathbb{R}}\! \varphi(x-y) \, w \, (\pi_s-\nu_s)(\mathrm{d} y,\mathrm{d} w) \right\vert
\end{align*}
and 
\begin{align*} 
I_{1,2} := \int_0^t\! \frac{\mathrm{d} s}{\sqrt{t-s}}  \int_{\mathbb{T}\times \mathbb{R}}\! \left\vert {\boldsymbol {\tilde{\varphi}}}^{\nu_s}(x) \right\vert \nu_s(\mathrm{d} x, \mathrm{d} v) \left\vert  \int_{\mathbb{T}\times \mathbb{R}}\! \varphi(x-y) (\nu_t -\mu_t)(\mathrm{d} y,\mathrm{d} w) \right\vert\,.
\end{align*}
If we multiply and divide the integral of the inner integrand in the definition of $A_1$, we then obtain 
\[
I_{1,1} \le \int_0^t\! \frac{\mathrm{d} s}{\sqrt{t-s}} \tilde{d}(\pi_s, \nu_s) \,.
\]
In order to estimate the term $I_{1,2}$, all we need to do is to find a bound on ${\boldsymbol {\tilde{\varphi}}}^{\nu_s}(x)$. This can be done by using the exact same approach that we have used so far. In particular, it suffices to show that $\langle \pi_t , \left\vert v \right\vert \rangle \leq \kappa(t)$ where $\kappa(t)$ is a generic function bounded on compacts.  Again we write,
\[
\langle\pi_t, (1+\left\vert v \right\vert) \psi \rangle = \langle\pi_0, \left\vert v \right\vert \psi \rangle - \int_0^t\! \mathrm{d} s\, \left \langle G\left( \frac{{\boldsymbol {\tilde{\varphi}}}^{\pi_s}}{{\boldsymbol {{\varphi}}}^{\pi_s}} \right) \pi_s, \partial_v \left( e^{(t-s)\mathcal{A}} (1+\left\vert v \right\vert) \psi \right) \right\rangle\,.
\]
We now use the estimate 
\begin{equation}
\label{heatkernel2}
\partial_v \left(e^{t\mathcal{A}} \left\vert v \right\vert \psi \right) \leq \frac{C}{\sqrt{t-s}} \|\psi\|_{\infty} (1+ \left\vert v \right\vert) \,
\end{equation}
(again, a constructive proof of the above can be found in \cite[Lemma 12]{FlandoliCapasso}) and \eqref{heatkernel1} to  conclude
\begin{equation}\label{mommeas}
\langle\pi_t, (1+\left\vert v \right\vert) \psi \rangle  \le  \langle\pi_0, (1+\left\vert v \right\vert)\, \psi \rangle + \int_0^t\! \mathrm{d} s\, \frac{1}{\sqrt{t-s}}\|G\|_\infty \|\psi\|_\infty  \iint (1+\left\vert v \right\vert) \pi_s (\mathrm{d} x, \mathrm{d} v)\,. 
\end{equation}
By taking $\psi \equiv 1$, one gets $\iint_{\mathbb{T}\times\mathbb{R}} \left\vert v \right\vert \pi_s (\mathrm{d} x, \mathrm{d} v) \le C e^{Ct}$.  Hence 
\[
I_1\le C \int_0^t\!\mathrm{d} s\, \frac{e^{Ct}}{\sqrt{t-s}} \tilde{d}(\pi_s, \nu_s) \,.
\]
The proof is concluded after use of the (generalised) Gronwall Lemma. 
\end{proof}

\section{Invariant measures: some comments}
\label{sec:5}

The non-homogeneous stationary problem reads,
\begin{equation}
\label{ns1}
v \partial_x f + G(M(x)) \partial_v f = \partial_v(\sigma\partial_v f + vf)\,, \quad M(x) := \frac{\int_{\mathbb{T}_L}\!\mathrm{d} y \int_{\mathbb{R}}\!\mathrm{d} w\, f(y,w)\,\varphi(x-y)\, w}{\int_{\mathbb{T}_L}\!\mathrm{d} y \int_{\mathbb{R}}\!\mathrm{d} w\, f(y,w)\,\varphi(x-y)}\,.
\end{equation}
Assuming that the boundary terms do not contribute, integration with respect to the velocity gives $\partial_x \int_{\mathbb{R}}\!\mathrm{d} v f(x,v)\, v =0$. Hence the local average velocity $\int_{\mathbb{R}}\!\mathrm{d} v f(x,v)\, v$ does not depend on $x$; we denote it by $\alpha$. Moreover, as $\int_{\mathbb{T}_L}\!\mathrm{d} y\,\varphi(y)=1$, 
\[
\int_{\mathbb{T}}\!\mathrm{d} y \int_{\mathbb{R}}\!\mathrm{d} w\, f(y,w)\,\varphi(x-y)\, w = \alpha \int_{\mathbb{T}}\!\mathrm{d} y\, \varphi(x-y) = \alpha\,,
\]
so that, setting $\varrho(x) = \int_{\mathbb{R}}\!\mathrm{d} w\, f(x,w)$, 
\[
M(x) = \frac{\alpha}{(\varphi*\varrho)(x)}\,,
\]
where $*$ denotes convolution on $\mathbb{T}$. 

Suppose now that the interaction $\varphi$ is close to the constant function, i.e., $\varphi = 1 + \lambda \psi$ with $0<\lambda \ll 1$, $\|\psi\|_{L^\infty}=1$, and $\int_{\mathbb{T}}\mathrm{d} x\, \psi(x)  = 0$. Eq.~\eqref{ns1} then reads,
\begin{equation}
\label{ns2}
v \partial_x f + G\bigg(\frac{\alpha}{1+\lambda\psi*\varrho}\bigg) \partial_v f = \partial_v(vf+\sigma\partial_v f )\,.
\end{equation}
As $G(0)=0$, if $\alpha=0$ the equation reduces to the linear Langevin equation with vanishing force, hence the Gaussian distribution $\mathcal{N}(0,\sigma)$ gives the (unique) solution. Let us consider the case $\alpha\ne 0$. By symmetry, without loss of generality we can restrict to the case $\alpha>0$. 

\begin{theorem}
\label{thm:nsl}
Let $f(x,v,\lambda)$ be a solution to \eqref{ns2} with $\alpha(\lambda)=\int_{\mathbb{R}}\!\mathrm{d} v f(x,v,\lambda)\, v >0$. Suppose the following holds: 
\begin{itemize}
\item[(i)] The density $f$ is smooth and admits a convergent expansion in $\lambda$,
\begin{equation}
\label{fex}
f(x,v,\lambda) = \sum_{n=0}^\infty f_n(x,v)\,\lambda^n \,.
\end{equation}
\item[(ii)] The coefficients $f_n(x,v)$ are differentiable in $x,v$ and the $x,v$-derivatives of $f$ can be computed by term by term differentiation, obtaining convergent series. Moreover, $f_n\in L^2(\mathbb{T}_x;\mathcal{H}_v^{1,2})$.
\end{itemize}
Then $f$ coincides with the Gaussian density $\mathcal{N}(1,\sigma)$.
\end{theorem}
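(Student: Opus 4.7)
The plan is to substitute the ansatz \eqref{fex} into \eqref{ns2}, match powers of $\lambda$, and exploit two structural observations to set up an induction forcing $f_n\equiv 0$ for every $n\ge 1$: (i) the cancellation $\int_{\mathbb{T}}\psi=0$ together with $f_0$ being space-homogeneous makes $\psi*\varrho_0=0$, so the nonlinear correction to $M$ is of higher order in $\lambda$ than one might expect; and (ii) the normalisation $\iint f\,\mathrm{d} x\,\mathrm{d} v=1$ forces $\iint f_n=0$ for $n\ge 1$. I also record that, by integrating \eqref{ns2} in $v$, $\alpha(\lambda)=\int v f\,\mathrm{d} v$ is independent of $x$, so $\alpha_n:=\int v f_n\,\mathrm{d} v$ is a well-defined constant, and $\varrho(x,\lambda)=\int f\,\mathrm{d} v$ expands as $1+\sum_{n\ge 1}\varrho_n(x)\lambda^n$.

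At order $\lambda^0$ the matching reads
\begin{equation*}
v\partial_x f_0+G(\alpha_0)\partial_v f_0=\partial_v(v f_0+\sigma\partial_v f_0),
\end{equation*}
which is the stationary Fokker--Planck equation for the linear hypoelliptic Ornstein--Uhlenbeck process $\mathrm{d} x=v\,\mathrm{d} t$, $\mathrm{d} v=-(v-G(\alpha_0))\,\mathrm{d} t+\sqrt{2\sigma}\,\mathrm{d} W$ on $\mathbb{T}\times\mathbb{R}$. By the unique ergodicity of this process, its only normalised invariant density is the space-homogeneous Gaussian $\mathcal{N}(G(\alpha_0),\sigma)$, whence $\alpha_0=G(\alpha_0)$ and, by Hypothesis \ref{asslongtime}, $\alpha_0\in\{-1,0,1\}$. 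The assumption $\alpha(\lambda)>0$ selects $\alpha_0=1$, so $f_0=\mathcal{N}(1,\sigma)$; note $G(1)=1$ and $G'(1)<1$ by Hypothesis \ref{asslongtime}.

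Induction step: assume $f_k\equiv 0$ for $1\le k\le n-1$, so $\varrho_k\equiv 0$ and $\alpha_k=0$ for those $k$. Since $\psi*1=0$, one has $\psi*\varrho(\cdot,\lambda)=\lambda^n\,\psi*\varrho_n+O(\lambda^{n+1})$ and $\alpha(\lambda)=1+\lambda^n\alpha_n+O(\lambda^{n+1})$; composition yields
\begin{equation*}
G\!\left(\frac{\alpha(\lambda)}{1+\lambda\,\psi*\varrho(\lambda)}\right)=1+\lambda^n\, G'(1)\,\alpha_n+O(\lambda^{n+1})
\end{equation*}
(the $\psi*\varrho_n$ contribution is delayed to order $\lambda^{n+1}$). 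Matching the $\lambda^n$-coefficient of \eqref{ns2} gives the linear inhomogeneous equation
\begin{equation*}
\mathcal{L} f_n:=v\partial_x f_n+(1-v)\partial_v f_n-f_n-\sigma\partial_{vv}f_n=-G'(1)\,\alpha_n\,\partial_v f_0.
\end{equation*}
Multiplying by $v$ and integrating over $\mathbb{T}\times\mathbb{R}$ (the integrations by parts being legitimate by hypothesis (ii) and the Gaussian decay of $f_0$) produces the scalar identity
\begin{equation*}
-\iint f_n\,\mathrm{d} x\,\mathrm{d} v+\alpha_n=G'(1)\,\alpha_n;
\end{equation*}
using $\iint f_n=0$ one obtains $\alpha_n(1-G'(1))=0$, and since $G'(1)<1$, $\alpha_n=0$. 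Hence $\mathcal{L} f_n=0$. But $\mathcal{L}$ is the Fokker--Planck operator of the same linear hypoelliptic OU process whose only stationary density is $f_0$, so its kernel is one-dimensional and spanned by $f_0$. Writing $f_n=c f_0$ and imposing $c=\int v f_n\,\mathrm{d} v=\alpha_n=0$ forces $f_n=0$, closing the induction. The convergent series \eqref{fex} therefore collapses to $f=\mathcal{N}(1,\sigma)$.

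The main obstacle I anticipate is making rigorous the one-dimensionality of $\ker\mathcal{L}$ in the function class dictated by hypothesis (ii); this amounts to unique ergodicity of the linear hypoelliptic Ornstein--Uhlenbeck process on $\mathbb{T}\times\mathbb{R}$, a classical fact but one that must be invoked in the correct weighted setting. Secondary technicalities are the justification of the termwise operations on the series in $\lambda$ (covered by hypothesis (ii)) and the bookkeeping which shows that, under the induction hypothesis, the $\lambda^n$-coefficient of $G(\alpha/(1+\lambda\,\psi*\varrho))$ depends only on $\alpha_n$ and $f_0$; this is precisely what makes the hierarchy decouple cleanly at each step.
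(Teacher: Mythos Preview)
Your proposal is correct and follows essentially the same route as the paper's proof: expand in $\lambda$, use $\int_{\mathbb{T}}\psi=0$ together with $\varrho_0\equiv 1$ to see that the $\psi*\varrho$ contribution is pushed to order $\lambda^{n+1}$, match the $\lambda^n$ coefficient to obtain the linear Ornstein--Uhlenbeck equation for $f_n$ with source $-G'(1)\alpha_n\partial_v f_0$, multiply by $v$ and integrate to force $\alpha_n=0$ (using $\iint f_n=0$ and $G'(1)\neq 1$), and finally invoke uniqueness for the homogeneous linear problem to conclude $f_n=0$. The only cosmetic difference is that the paper pins down $f_n=0$ via the zero-mass constraint $\iint f_n=0$ whereas you use $\alpha_n=0$; since $\iint f_0=1$ and $\int v f_0\,\mathrm{d} v=1$, either constraint kills the constant $c$ in $f_n=cf_0$.
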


\begin{proof}
We have to prove that  
\[
f_0 = \mathcal{N}(1,\sigma), \quad f_n=0 \;\;\forall\, n\ge 1\,.
\]
This is achieved by induction on $n$. We start with the inductive basis, which is the case $n=0$. By \eqref{fex} we have,
\[
\alpha = \alpha(\lambda) = \sum_{n=0}^\infty \alpha_n \, \lambda^n\,, \qquad \alpha_n = \int_{\mathbb{R}}\!\mathrm{d} v f_n(x,v)\, v\,.
\]
Plugging \eqref{fex} and the above in \eqref{ns2}, after equating the $0$-th order in $\lambda$ on both sides we get,
\[
v \partial_x f_0 + G(\alpha_0) \partial_v f_0 = \partial_v( vf_0+\sigma\partial_v f_0)\,,
\]
whose solution is $f_0 = \mathcal{N}(G(\alpha_0),\sigma)$. Finally, since $\alpha_0 =  \int_{\mathbb{R}}\!\mathrm{d} v f_0(x,v)\, v = G(\alpha_0)$, we conclude that $\alpha_0=1$, i.e., $f_0 = \mathcal{N}(1,\sigma)$. 

We now turn to the inductive step: given $n>1$, prove that if $f_0= \mathcal{N}(1,\sigma)$ and $f_k=0$ for any $1\le k \le n-1$, then $f_n=0$. Under the inductive hypothesis and recalling that $\psi$ has zero average,
\[
\alpha = 1 + \sum_{k\ge n} \alpha_k \lambda^k\,, \qquad 1+\lambda\psi*\rho = 1 + \sum_{k\ge n} \lambda^{k+1}\psi*\rho_k\,, \qquad \rho_k(x) := \int_{\mathbb{R}}\!\mathrm{d} v\, f_k(x,v)\,.
\]
Moreover, because $f_0$  has mass one,  
\[
\sum_{k\ge n} \lambda^k \int_{\mathbb{T}}\!\mathrm{d} x\int_{\mathbb{R}}\!\mathrm{d} v\,  f_k(x,v) = 0 \quad\Longrightarrow \quad  \int_{\mathbb{T}}\!\mathrm{d} x\int_{\mathbb{R}}\!\mathrm{d} v f_k(x,v) = 0 \quad\forall\, k\ge n\,.
\]
Plugging the expansions \eqref{fex} in \eqref{ns2} and equating the $n$-th order terms in $\lambda$ we obtain,
\[
v \partial_x f_n + \partial_v f_n + \alpha_nG'(1) \partial_v f_0 = \partial_v (vf_n + \sigma\partial_v f_n)\,.
\]
We multiply both sides of the above equation by $v$, and then we integrate on $\mathbb{T}_x\times\mathbb{R}_v$. Using that $\int_{\mathbb{T}}\!\mathrm{d} x\int_{\mathbb{R}}\!\mathrm{d} v\, f_n(x,v) = 0$, after an integration by parts we obtain the identity $-\alpha_n G'(1) = -\alpha_n$, which implies $\alpha_n=0$ because $G'(1)\ne 1$. Therefore,
\[
v \partial_x f_n + \partial_v f_n  = \partial_v(vf_n + \sigma\partial_v f_n)\,.
\]
Now observe that $f_n=0$ is the unique solution, in $L^2(\mathbb{T}_x;\mathcal{H}_v^{1,2})$ and with zero average, of the above equation.\footnote{See e.g.  \cite{Villani}.} This concludes the proof. 
\end{proof}

\appendix

\section{Analysis of the non-homogeneous linear equation}
\label{AppA}

We here study Eq.~ \eqref{genlineqn}. In what follows, we denote by $A_m(v)$ the derivative of $\sqrt{1+v^{2m}}$, i.e.,
\begin{equation}\label{defAM}
A_m(v)= \frac{m v^{2m-1}}{\sqrt{1+v^{2m}}}\,.
\end{equation}
For any function of $v$, say $h(v)$, we write $h \sim v^p$ to indicate that  asymptotically $h$ grows like $v^p$. I.e., we write $h \sim v^p$ if $\lim_{\left\vert v \right\vert \rightarrow\infty}\frac{h(v)}{v^p} = \mathrm{const}$.   We  also (improperly)  use the notation
\[
\langle f, g \rangle_{(X^m)^*, X^m}=\int_0^T\! \mathrm{d} t \!\int_{\mathbb{T}}\! \mathrm{d} x \!\int_{\mathbb{R}}\! \mathrm{d} v f g \sqrt{1+v^{2m}} \,. 
\]
\begin{remark}
\label{noteflat}
We point out the simple fact - which we will use more or less explicitly in the following, sometimes without mention -  that if $g$ solves the weighted variational problem \eqref{wvp} for some $m>0$,  then it also solves the flat problem $E^0(g, \phi)=L^0 (\phi)$ (to see this just consider $\tilde{\phi}=\phi \sqrt{1+v^{2m}} $).
\end{remark}

\begin{proof} [Proof of Proposition \ref{propwellposlin}] 
We follow the approach in  \cite[Prop.~A.1]{Degond}, so we don't repeat the whole proof but just highlight the differences. To make comparison easier, we try to use a notation similar to the one in \cite{Degond}. 
\noindent
If $g$ solves \eqref{sf},  after the change of unknown  $u(t,x,v)= e^{-\lambda t } g(t,x,v)$, the function $u$ solves the equation
\begin{equation}
\label{UU}
\partial_t u  +v \partial_x u + a(t,x,v) \partial_v u+ (\lambda+c-1) u - \sigma \partial_{vv}u = U(t,x,v) e^{-\lambda t}\,.
\end{equation}
In the following we will simply denote by $U$ the function $Ue^{-\lambda t}$. The weak formulation of the problem  \eqref{UU} is given by 
\[
E^m_{\lambda}(u, \phi) = L^m(\phi), \quad \phi \in \Phi\,, 
\]
where
$$
E^m_{\lambda}(u, \phi)= E^m(u, \phi )+ \int_0^T\!\mathrm{d} s\int_{\mathbb{T}}\!\mathrm{d} x \int_{\mathbb{R}}\!\mathrm{d} v\, \lambda \, u \, \phi   \sqrt{1+v^{2m}}\,, 
$$
and $E^m, L^m$ and the space $\Phi$ have been defined at the beginning of Subsection \ref {sec:3.2}. 
After observing that for any $m\geq 0$ and $\phi \in \Phi$ one has 
\begin{equation}\label{trick}
	\begin{split}
2 \int_{\mathbb{T}}\!\mathrm{d} x \int_{\mathbb{R}}\!\mathrm{d} v\,  \sqrt{1+v^{2m}} \phi \, v \, \partial_v \phi & = - \int_{\mathbb{T}}\!\mathrm{d} x \int_{\mathbb{R}}\!\mathrm{d} v\,  \sqrt{1+v^{2m}} \phi^2 \\ & \quad -  \int_{\mathbb{T}}\!\mathrm{d} x \int_{\mathbb{R}}\!\mathrm{d} v\,  \frac{m v^{2m}}{\sqrt{1+v^{2m}}} \phi^2 \,,
\end{split}
\end{equation}
it is easy to show that the bilinear form $E^m_{\lambda}$  is also coercive (i.e., there exists a constant $C>0$ such that $E_{\lambda}(\phi, \phi) \geq C \|\phi\|^2_{X^m}$, when we choose $\lambda>1$  large enough, depending on $\|b\|_{\infty}$ and $c$) and continuous on $X^m$, for every fixed $\phi \in \Phi$.\footnote{This can be done with calculations very similar to those that we will show in  detail below to prove uniqueness, so we omit them.} Therefore Lions' Theorem \cite[page 534]{Degond} can be  applied and gives existence of a solution in ${X}^{m}$. 
As for uniqueness, we show this part in some detail; we follow \cite[pages 535-536]{Degond},  but focus on uniqueness in the space $X^m$: we need to prove that if $u_0=0$ and $U=0$ then $u \equiv 0$. First of all, one needs to show that the space $\tilde{\Phi}$ of smooth functions with compact support in $[0,T] \times \mathbb{T}_x \times \mathbb{R}_v$ is dense in $\mathcal{Y}^m$ (with the topology induced by the $X^m$-norm).\footnote{The density of $C_0^{\infty}$ in $\mathcal{H}^{1, m}_v$ (in the norm of $\mathcal{H}^{1, m}_v$) is an obvious consequence of  the density of $C_0^{\infty}$ in $H^1$; indeed, if $f\in \mathcal{H}^{1, m}_v$ then $f (1+v^{2m})^{1/4} \in H^1$.} This can be done following the proof of \cite[Lemma A.1]{Degond}, so we don't repeat it. With this technical detail in place, it is straightforward to see that for any two functions $u, \tilde{u} \in \mathcal{Y}^m$ (just functions of $\mathcal{Y}^m$, not necessarily solutions of our problem) one has,
\begin{equation}
\label{bt}
\begin{split}
&\langle \partial_t u + v \partial_xu, \tilde{u} \rangle_{(X^m)^*, X^m} + \langle \partial_t \tilde{u} + v \partial_x\tilde{u}, u \rangle_{(X^m)^*, X^m}\\
& = \int_{\mathbb{T}}\!\mathrm{d} x \int_{\mathbb{R}}\!\mathrm{d} v\,  u(T,x,v) \tilde{u}(T,x,v)\sqrt{1+v^{2m}} \\ & \quad - \int_{\mathbb{T}}\!\mathrm{d} x \int_{\mathbb{R}}\!\mathrm{d} v\,  u(0,x,v) \tilde{u}(0,x,v)\sqrt{1+v^{2m}}\,. 
\end{split}
\end{equation}
So, if $u$ is a solution of our problem with $u_0=0$ and $U=0$ then 
\begin{align}
0 &= \langle \partial_t u + v \partial_xu, {u} \rangle_{(X^m)^*, X^m} + \int_0^T \int_{\mathbb{T}}\!\mathrm{d} x\! \int_{\mathbb{R}}\!\mathrm{d} v\,  (\lambda+c-1) u^2 \sqrt{1+v^{2m}} \nonumber\\
&\quad+ \int_0^T \int_{\mathbb{T}}\!\mathrm{d} x\! \int_{\mathbb{R}}\!\mathrm{d} v\,  (\partial_v u) \, b \, u \sqrt{1+v^{2m}} - \int_0^T \int_{\mathbb{T}}\!\mathrm{d} x\! \int_{\mathbb{R}}\!\mathrm{d} v  (\partial_v u)  v u \sqrt{1+v^{2m}} \label{a4}\\
&\quad+ \sigma \int_0^T \int_{\mathbb{T}}\!\mathrm{d} x\! \int_{\mathbb{R}}\!\mathrm{d} v\,  \left\vert \partial_v u\right\vert^2 \sqrt{1+v^{2m}} 
+ \sigma \int_0^T \int_{\mathbb{T}}\!\mathrm{d} x\! \int_{\mathbb{R}}\!\mathrm{d} v\,  \partial_v u \, u \frac{mv^{2m-1}}{\sqrt{1+v^{2m}}}\,. 
\label{a5}
\end{align}
Using \eqref{trick} one can see that the second term in \eqref{a4} is positive:
\[
\begin{split}
 - \int_0^T \int_{\mathbb{T}}\!\mathrm{d} x\! \int_{\mathbb{R}}\!\mathrm{d} v\, &  (\partial_v u) \, v u \sqrt{1+v^{2m}}  \\ &  = \frac 12 \int_0^T \int_{\mathbb{T}}\!\mathrm{d} x\! \int_{\mathbb{R}}\!\mathrm{d} v\, \bigg[ \sqrt{1+v^{2m}} u^2 + \frac{m v^{2m}}{\sqrt{1+v^{2m}}} u^2 \bigg]  \ge 0\,.
\end{split}
\]
Notice that the integration by part used to obtain the above  equality is allowed and gives zero boundary terms because, being the initial datum $u_0=0$ in  $L^{2,m}$ for every $m>0$, the function $u$ is in ${X}^m$ for every $m>0$. Therefore $u$ and $vu\sqrt{1+v^{2m}}$ are in $H^1$, so that the boundary terms in the integration by parts disappear.  As for the first term in \eqref{a4}, using the so-called  Young's inequality with $\epsilon$ one has,\footnote{By acting in this way, one avoids to do integration by parts in this term, so the boundedness of $\partial_vb$ is not needed.}
\begin{equation}
\label{avoid}
\begin{split}
\int_0^T \int_{\mathbb{T}}\!\mathrm{d} x\! \int_{\mathbb{R}}\!\mathrm{d} v\,  (\partial_v u) \, b \, u \sqrt{1+v^{2m}} & \ge  - \frac{\epsilon\| b\|_{\infty} }{2}\int_0^T \int_{\mathbb{T}}\!\mathrm{d} x\! \int_{\mathbb{R}}\!\mathrm{d} v  (\partial_v u)^2 \sqrt{1+v^{2m}} \\ & \quad - \frac{\| b\|_{\infty}}{2\epsilon} \int_0^T \int_{\mathbb{T}}\!\mathrm{d} x\! \int_{\mathbb{R}}\!\mathrm{d} v \, u^2 \sqrt{1+v^{2m}}\,.
\end{split}
\end{equation}
The last term in \eqref{a5} can be estimated analogously. Choosing $\epsilon$ small enough (and possibly $\lambda$ large enough) and putting  everything together one obtains $0\geq \, d \| u\|_{X^m},$ where the constant $d>0$ depends on $\lambda, \epsilon, \|b\|_{\infty},c$ and $\sigma$. This implies $\| u\|_{X^m}=0$. 

Finally, if $u_0 \in L^{2,m}$ (which implies that $u \in X^m$) then, by definition, $ b \partial_v u$,  $\partial_{vv} u$, $u \in (X^m)^*$; as for $v\partial_vu$, this is in $(X^1)^* \subset (X^m)^*$.  Overall, one has 
\[
\partial_t u  +v \partial_x u =- b(t,x,v)  \partial_v u+ \partial_v(vu) -c\, u+ \sigma \partial_{vv}u + U(t,x,v) \in (X^{m})^*\,.
\]
The proposition is thus proved.
\end{proof}

\begin{proof}[Proof of Lemma \ref{lemmamaxprinc}] We just need to prove the result when $c=0$ (see point (i) of Remark \ref{rembeg}). If $c=0$
the proof is similar to what has been done in  \cite[pages 538-539]{Degond} (and also in our case it can indeed be done in a flat space, see Remark  \ref{noteflat}), so we don't repeat the details. We just stress two facts: (a) arguing analogously to what we have done in the  proof of uniqueness,  in order to do integration by parts in the term 
\[
-\int_{\mathbb{T}}\!\mathrm{d} x \int_{\mathbb{R}}\!\mathrm{d} v\,  v \partial_v u \, u^- \,,
\]
so to use a trick similar to \eqref{trick}, one needs to know that $vu \in H^1$, hence the assumption that the initial datum should be in $L^{2,m}$, $m \geq 2$; (b) the term containing $b$ should be handled in a similar way to what we have done in \eqref{avoid}, so to avoid having to assume  boundedness of $\partial_vb$. 
\end{proof}

\begin{proof}[Proof of Lemma \ref{lemmaduhamel}]
Again, we just need to prove the result when $c=0$ (see point (i) of Remark  \ref{rembeg}). If $c=0$ the statement can be proved by acting similarly to what has been done in \cite[Prop.~A.4]{Degond}, and we will therefore be, once again,  brief. We stress that, like in \cite{Degond}, for this proof we work in a flat space, with flat scalar products (see Remark \ref{noteflat}).  In \cite{Degond} the sought result is proved under the assumption that the coefficient $a(t,x,v)$ of Eq.~\eqref{UU} has zero $v$-divergence. Moreover,  the $x$-domain is the whole of $\mathbb{R}$.  In particular, the proof of the Duhamel formula relies on \cite[Eqs.~(60)-(62)]{Degond}. \cite[Eqs.~(60) and (62)]{Degond} remain unaltered (even if we change the $x$-domain). As for \cite[Eq.~(61)]{Degond}, upon inspecting the strategy of proof used in \cite{Degond}, the result follows if one proves that the left-hand side of \cite[Eq.~(61)]{Degond} is non-negative (it doesn't need to vanish). In our case, Eq.~(61) reads,\footnote{We recall that in \cite{Degond} the notation $\langle f,g \rangle_{X',X}$ simply stands for the pairing $\langle f,g \rangle_{X',X}= \int_0^T\!\mathrm{d} t \int_{\mathbb{T}}\!\mathrm{d} x \int_{\mathbb{R}}\!\mathrm{d} v\, f(t,x,v) g(t,x,v)$.}
\begin{equation}
\label{decomp}
\begin{split}
\int_0^T\!\mathrm{d} s\int_{\mathbb{T}}\!\mathrm{d} x \int_{\mathbb{R}}\!\mathrm{d} v\, a (\partial_v u) \psi_{\epsilon}(u) & =  \int_0^T\!\mathrm{d} s\int_{\mathbb{T}}\!\mathrm{d} x \int_{\mathbb{R}}\!\mathrm{d} v\,  b (\partial_v u) \psi_{\epsilon}(u)  \\ & \quad - \ \int_0^T\!\mathrm{d} s\int_{\mathbb{T}}\!\mathrm{d} x \int_{\mathbb{R}}\!\mathrm{d} v\, v (\partial_v u) \psi_{\epsilon}(u)\,, 
\end{split}
\end{equation}
where $\psi_{\epsilon}$ is defined like in \cite[page 540]{Degond}. That is, $\psi_{\epsilon}(u)$ is a smooth function of $u$ with $\psi_{\epsilon}(u)=0$ if $u \leq 0$,  $\psi_{\epsilon}(u)=1$ if $u \geq \epsilon$ and $\psi_{\epsilon}$ is increasing in $[0,\epsilon]$.  Letting $\varphi_{\epsilon}$ be a primitive of $\psi_{\epsilon}$, i.e.,
\begin{equation}
\label{prim}
\varphi'_{\epsilon}:= \psi_{\epsilon}\,,
\end{equation}
one has that $\varphi_{\epsilon}(u)$ is non-negative, i.e., $\varphi_{\epsilon}(u) \geq 0$\footnote{This is true by the definition of $\psi_{\epsilon}$.} and it  converges weakly to $u^+$. Using \eqref{prim},  the first term on the right-hand side of \eqref{decomp} becomes,
\[
\begin{split}
\int_0^T\!\mathrm{d} s\int_{\mathbb{T}}\!\mathrm{d} x \int_{\mathbb{R}}\!\mathrm{d} v\, b\, (\partial_v u) \psi_{\epsilon}(u)  & =  \int_0^T\!\mathrm{d} s\int_{\mathbb{T}}\!\mathrm{d} x \int_{\mathbb{R}}\!\mathrm{d} v\, b\, \partial_v(\varphi_{\epsilon}(u)) \\ & = -  \int_0^T\!\mathrm{d} s\int_{\mathbb{T}}\!\mathrm{d} x \int_{\mathbb{R}}\!\mathrm{d} v\, (\partial_v b) \varphi_{\epsilon}(u)\,.
\end{split}
\]
The boundary terms in the above integration by parts vanish thanks to the properties of the solution $u$, which is in $H^1_v$,  so that overall the right-hand side is non-negative (as $\partial_vb\leq 0$). The second term in \eqref{decomp} can be treated similarly:  integrating by parts\footnote{In this integration by parts the boundary terms disappear as $uv\in H^1$ when the initial datum is at least in $L^{2,2}$ and $\psi_{\epsilon}(u)$ is bounded.} we have,
\[
- \int_0^T\!\mathrm{d} s\int_{\mathbb{T}}\!\mathrm{d} x \int_{\mathbb{R}}\!\mathrm{d} v\, v (\partial_v u) \psi_{\epsilon}(u) = \int_0^T\!\mathrm{d} s\int_{\mathbb{T}}\!\mathrm{d} x \int_{\mathbb{R}}\!\mathrm{d} v\, \varphi_{\epsilon}(u)\,.
\]
Again,  the right-hand side   is positive. This is enough to conclude the proof. 
\end{proof}

\begin{proof}[Proof of Lemma \ref{lemmaequilim}] We  prove the estimates \eqref{fest}, \eqref{sest}, and \eqref{dest}. Throughout this proof $C>0$ will be a generic constant, so the value of $C$ may change from line to line. 

\smallskip
\noindent \textit{Proof of (\ref{fest}).}  After some calculations, one finds that the function $Y^{n,m}_t$ solves the PDE,
\begin{equation}
\label{Yeq}
\partial_t Y^{n,m}_t + v \partial_xY^{n,m}_t + B^{n,m}_t \partial_vY^{n,m}_t -\partial_v(vY^{n,m}_t)-\sigma \partial_{vv} Y^{n,m}_t = R^{n,m}_t\,, 
\end{equation}
where
\begin{equation}
\label{Yeqr}
B^{n,m}_t = G(M^{n-1}(t,x)) + 2 \sigma \frac{A_m(v)}{\sqrt{1+v^{2m}}}\,,\;\; R^{n,m}_t = \left[A_m(v)(B^{n,m}_t - v)  - \sigma A_m'(v)\right] f^{n}_t \, ,
\end{equation}
and $A_m$ is defined in \eqref{defAM}. 
We first check that \eqref{Yeq} is well posed, by applying Proposition \ref{propwellposlin}. Since  $A_m(v) \sim v^{m-1}$ (and bounded on compact sets) and $G$ is bounded, the term $B^{n,m}_t$ is bounded. Moreover, the sum of the terms in the square bracket in the definition of $R^{n,m}$ grows at infinity like  $v^{m}$.  So $R^{n,m}_t \sim v^m f^n_t$; therefore (applying Proposition \ref{propwellposlin} to \eqref{lineqniter}), $R^{n,m}_t \in L^{2,2}$ as soon as $f_0^n(x,v)=f_0(x,v)$ is such that $Y_0^{0,m}=\sqrt{1+v^{2m}} f_0 \in L^{2,2}$. If this is the case, then (see point (iii) of Remark \ref{rembeg}) \eqref{Yeq} is well posed in ${X}^2$. Moreover, if $Y_0^{0,m} \in L^{\infty}(\mathbb{T}_x\times \mathbb{R}_v)$,  by \eqref{maxprineqn} we have,
\[
\|Y^{n,m}_t\|_{L^\infty} \le  \|Y^{n,m}_0\|_{L^\infty} + \int_0^t\!\mathrm{d} s\,  \| R^{n,m}_s\|_{L^\infty}\,. 
\]
Now 
\[ 
| R^{n,m}_t|  \le  C \frac{\left\vert A(v)\right\vert}{\sqrt{1+v^{2m}}}|Y^{n,m}_t | +  C \frac{\left\vert v A_m(v)\right\vert}{\sqrt{1+v^{2m}}}|Y^{n,m}_t | + C \frac{\left\vert A_m'(v) \right\vert}{ \sqrt{1+v^{2m}}}|Y^{n,m}_t |\,. 
\]
Because $A_m(v) \sim v^{m-1}$ and $A_m'(v) \sim v^{m-2}$ (and they are both bounded on compact sets), all the functions of $v$ that multiply $|Y^{n,m}_t |$ in the addends on the right-hand side of the above, are indeed bounded functions. Therefore,
\[
\|Y^{n,m}_t \|_{L^\infty} \leq  \|Y^{n,m}_0\|_{L^\infty} + C \int_0^t\!\mathrm{d} s\, \|Y^{n,m}_s\|_{L^\infty}\,, 
\]
hence the claim is just  a consequence of Gronwall Lemma. 

\smallskip
\noindent \textit{Proof of (\ref{sest})}.  The equation satisfied by the (vector valued) function $Z^{n,m}_t$ reads,
\begin{equation}\label{eqforz}
\partial_t Z^{n,m}_t + v\partial_x Z^{n,m}_t + B^{n,m}_t \partial_v Z^{n,m}_t -\partial_v (vZ^{n,m}_t) - \sigma \partial_{vv} Z^{n,m}_t = \tilde R^{n,m}_t\,,
\end{equation}
where
\[
\begin{split}
\tilde R^{n,m}_t  & = \left[A_m(v)(B^{n,m}_t - v)  - \sigma A_m'(v)\right] \begin{pmatrix} \partial_x f^n_t \\ \partial_v f^n_t \end{pmatrix} \\ & \quad + \begin{pmatrix} - \partial_x (G(M^{n-1}(t)) & 0  \\ -1  &  1\end{pmatrix} \begin{pmatrix}\sqrt{1+v^{2m}}\partial_x f^n_t \\ \sqrt{1+v^{2m}}\partial_v f^n_t \end{pmatrix}\,.
\end{split}
\]
We want to act as before, i.e., we want to enforce conditions on $f_0$ such that \eqref{eqforz} is well posed and the maximum principle,  Lemma \ref{lemmamaxprinc}, can be applied. To this end, we start by observing that $G$ has bounded derivative, so $|\partial_x (G(M^{n-1}(t,x))| = |G'(M^{n-1}(t,x))  \partial_x M^{n-1}(t,x)| \le C |\partial_x M^{n-1}(t,x)|$. We now claim that 
\begin{equation}
\label{stgx}
\sup_{n}\sup_{s\in[0,t]} \|\partial_x M^{n-1}(s,x) \|_{L^\infty} < \infty \quad\forall\, t\ge 0\,.
\end{equation}
 Assuming for a moment that the above is true (we will prove it later), we have that the first component of $\tilde{R}_t^{n,m}$, $(\tilde{R}_t^{n,m})_1$, grows like 
$(\tilde{R}_t^{n,m})_1 \sim v^m \partial_x f^n_t$. In order for the right-hand side of \eqref{eqforz} to satisfy the assumption ii) of Proposition \ref{propwellposlin}, we therefore need to make sure that $v^m \partial_xf^n$ belongs to $(X^{\ell})^*$, for some $\ell\geq 1$. If $f_0 \in L^{2,m}$ then $v^m\partial_x f^n \in \mathcal{H}^{-1, m}_v$; still for the well-posedness of the equation for $(Z_t^{n,m})_1$, one needs to  assume that the initial datum $\sqrt{1+v^{2m}}\partial_x f_0^n$ is in $L^{2,\ell}$, and in order to be able to apply the maximum principle we need $\ell\geq 2$ and $\sqrt{1+v^{2m}}\partial_x f_0^n \in L^{\infty}$ as well. 
We now repeat the same exercise for the second component of equation \eqref{eqforz}. As we have just shown, under the assumptions listed so far, $\sqrt{1+v^{2m}} \partial_xf^n$ is bounded, therefore $(\tilde{R}_t^{n,m})_2 \sim v^m \partial_v f^n$.   If $f_0$ is such that $\sqrt{1+v^{2m}} \partial_v f_0 \in L^{2,2} \cap L^{\infty}$ and $f_0 \in L^{2,2m+1}$, then we have both well posedness (with $(\tilde{R}_t^{n,m})_2 \in L^{2,1}$) and we can apply the maximum principle. 

It remains to prove the claim \eqref{stgx}.  Consider the definition \eqref{Mn-1} of $M^{n-1}(t,x)$. When differentiating \eqref{Mn-1} with respect to $x$, the denominator of $\partial_xM^{n-1}$ is boun\-ded below away from zero since we assume $\varphi$ strictly positive.  As for the numerator, by the bounded convergence theorem  we can take the derivative under the integral, after observing that $| f_t^{n-1}(y,v) v \varphi'(x-y) |\le C| f_t^{n-1}(y,v)| \sqrt{1+v^2}$ and the right-hand side of the above is integrable (uniformly with respect to $n$) by \eqref{fest} and \eqref{interp}. Therefore,
\begin{equation}
\label{stgx1}
\partial_x M^{n-1}(t) = \frac{\langle w \rangle_{f^{n-1}_t,\varphi'}\langle 1 \rangle_{f^{n-1}_t,\varphi} - \langle w \rangle_{f^{n-1}_t,\varphi}\langle 1 \rangle_{f^{n-1}_t,\varphi'}}{\langle 1 \rangle_{f^{n-1}_t,\varphi}^2}\,,
\end{equation}
which is, in turn, bounded uniformly with respect to $n$. 

\smallskip
\noindent \textit{Proof of (\ref{dest})}. By definitions \eqref{yz}, 
\[
H^{n,m}_t = D Z^{n,m}_t - A_m(v) \begin{pmatrix} 0 & \partial_x f^n_t \\ 0 & \partial_v f^n_t \end{pmatrix}\,.
\]
In view of \eqref{sest}, the second term on the right-hand side is bounded on compact time intervals, uniformly with respect to $n$. Therefore, it remains to prove a bound like \eqref{dest} for the Jacobian matrix $W^{n,m}_t :=  D Z^{n,m}_t$. After long but straightforward computations, one finds that $W^{n,m}_t$ solves the equation,
\[
\partial_t W^{n,m}_t + v\partial_x W^{n,m}_t + B^{n,m}_t \partial_v W^{n,m}_t -\partial_v (vW^{n,m}_t) - \sigma \partial_{vv} W^{n,m}_t = D \tilde R^{n,m}_t + Q^{n,m}_t\,,
\]
where, using the short notation $(Z_1,Z_2)$ for the components of $Z^{n,m}_t$, 
\[
Q^{n,m}_t = \begin{pmatrix} \partial_xB^{n,m}_t\partial_vZ_1 & \partial_x Z_1 + \partial_v B^{n,m}_t\partial_v Z_1 - \partial_v Z_1 \\ \partial_xB^{n,m}_t\partial_vZ_2 & \partial_x Z_2 + \partial_v B^{n,m}_t\partial_v Z_2 - \partial_v Z_2 \end{pmatrix}\,.
\]

By arguing as in getting \eqref{stgx}, we deduce that also $\|\partial_{xx} (G(M^{n-1}(t))\|_{L^\infty}$ is bounded on compact time intervals, uniformly with respect to $n$. Therefore, from the definition of $B^{n,m}_t$ and $\tilde R^{n,m}_t$, it follows that the matrix valued function $D\tilde R^{n,m}_t + Q^{n,m}_t$ has entries that behave like linear combinations of $v^m \partial_x f^n$, $v^m \partial_{xx}f^n$, $v^m\partial_{vx}f^n$, $v^m \partial_{vv}f^n$ (with coefficients that are functions bounded on compact time intervals, uniformly with respect to $n$). All of these objects are in $(X^m)^*$; therefore, with reasonings similar to those detailed so far, the proof is complete.  
\end{proof}

\medskip
\medskip

\end{document}